\newcommand{\abs}{\text{abs}}
\newcommand{\bnal}{\text{NAL}}
\newcommand{\unal}{\text{UNAL}}
\newcommand{\baal}{\text{AAL}}
\newcommand{\arm}{\text{ARM}}
\newcommand{\darm}{\text{DARM}}
\newcommand{\narm}{\text{NARM}}
\newcommand{\unarm}{\text{UNARM}}
\newcommand{\parm}{\text{PARM}}
\newcommand{\exparm}{\text{ExpARM}}
\newcommand{\aarm}{\text{AARM}}
\newcommand{\pdal}{\text{PDAL}}
\newcommand{\expdal}{\text{ExpDAL}}
\newcommand{\paal}{\text{PAAL}}
\newcommand{\dal}{\text{DAL}}
\newcommand{\sat}{\text{3SAT}}
\newcommand{\nspace}{\text{NSPACE}}
\newcommand{\ntime}{\text{NTIME}}
\newcommand{\p}{\text{P}}
\newcommand{\ph}{\text{PH}}
\newcommand{\np}{\text{NP}}
\newcommand{\nl}{\text{NLOGSPACE}}
\newcommand{\pspace}{\text{PSPACE}}
\newcommand{\bigO}{\mathcal{O}}
\newcommand{\grammar}{(N, \Sigma, S, P)}
\newcommand{\polylog}{\mathrm{polylog}}
\newcommand{\icol}[1]{
	\left(\begin{smallmatrix}#1\end{smallmatrix}\right)%
}
\theoremstyle{plain}
\newtheorem{theorem}{Theorem}[section]
\newtheorem{corollary}[theorem]{Corollary}
\newtheorem{lemma}[theorem]{Lemma}
\newtheorem{hypothesis}[theorem]{Hypothesis}
\theoremstyle{definition}
\newtheorem{definition}[theorem]{Definition}
\newtheorem{example}[theorem]{Example}
\theoremstyle{remark}
\newtheorem*{remark}{Remark}
\begin{document}

\begin{frontmatter}




\title{A Computation Model with Automatic Functions and Relations as Primitive Operations\tnoteref{t}}

\tnotetext[t]{Ziyuan Gao, Sanjay Jain, and Frank Stephan were supported in part by
	the Singapore Ministry of Education Academic Research Fund Tier 2
	grant MOE2019-T2-2-121 / R146-000-304-112; furthermore, Sanjay Jain
	was also supported in part by the NUS Provost Chair grant C252-000-087-001.
}


\author[1]{Ziyuan Gao}
\ead{matgaoz@nus.edu.sg}
\author[2]{Sanjay Jain}
\ead{sanjay@comp.nus.edu.sg}
\author[3]{Zeyong Li}
\ead{li.zeyong@u.nus.edu}
\author[2]{Ammar Fathin Sabili}
\ead{ammar@comp.nus.edu.sg}
\author[1,2]{Frank Stephan}
\ead{fstephan@comp.nus.edu.sg}

\address[1]{Department of Mathematics, National
	University of Singapore, 10 Lower Kent Ridge Road, S17,
	Singapore 119076, Republic of Singapore}
\address[2]{Department of Computer Science,
	National University of Singapore, 13 Computing Drive, COM1,
	Singapore 117417, Republic of Singapore}
\address[3]{Center for Quantum Technologies,
National University of Singapore, 3 Science Drive 2, S15,
Singapore 117543, Republic of Singapore}

\begin{abstract}
Prior work of Hartmanis and Simon \cite{Hartmanis74} and Floyd and Knuth \cite{Floyd90} investigated 
what happens if a device uses primitive steps more natural than single updates of a Turing tape. One finding was that in the numerical setting, addition, subtraction, comparisons and bit-wise Boolean operations of numbers preserve polynomial time while incorporating concatenation or multiplication allows to solve all PSPACE problems in polynomially many steps. Therefore we propose to use updates and comparisons with automatic functions as primitive operations and use constantly many registers; the resulting model covers all primitive operations of Hartmanis and Simon as well as Floyd and Knuth, but the model remains in polynomial time. The present work investigates in particular the deterministic complexity of various natural problems and also gives an overview on the nondeterministic complexity of this model.
\end{abstract}

\begin{keyword}
Theory of Computation \sep Model of Computation \sep Automatic Function \sep Register Machine \sep Computational Complexity



\end{keyword}

\end{frontmatter}



\section{Introduction}
\label{sec:introduction}

\noindent
Sequential models of computation such as finite state machines, register machines
and Turing machines typically carry out primitive operations at each computation
step.  For example, Turing machines have a finite number of rules for 
manipulating symbols on a tape and can only process symbols one at a time.
Register machines, in particular random-access machines (RAM's), can store numbers
of arbitrary size in an unlimited number of registers and have basic 
arithmetic operators such as increment and decrement; a RAM may also be
augmented with instructions for performing advanced operations such as 
multiplication, division, concatenation as well as number comparison 
\cite{Hartmanis74,Schonhage79}.  
A fairly common theme in the study of these machines is the analysis of whether,
and if so to what extent, computational power is boosted by allowing extra
types of instructions.  In the case of RAM's, Hartmanis and Simon \cite{Hartmanis74}
showed that adding logical operations on bit vectors in parallel and either 
multiplication or concatenation allows polynomial time decidability of all languages in \pspace, 
where the instructions use all counts as constant time operations.
Sch\"{o}nhage \cite{Schonhage79} proved that incorporating the arithmetic 
operations of addition, multiplication and division
as primitive steps allows polynomial time decidability of all languages in \np;
if RAM's only incorporate addition and multiplication as primitive steps, then 
the languages they decide in polynomial time are also decidable in polynomial
time by probabilistic Turing machines.  

A natural question is whether the computational power of register machines 
can be enhanced with primitive operators that generalise bit-wise operations, 
addition, subtraction and comparison, but do not lead to an ``unreasonable''
speedup in computational time.  For the purpose of the current work, we
define a ``reasonable'' primitive operator as one that can be simulated by 
an appropriate type of Turing machine or register machine with at most a 
polynomial increase in time complexity.  As mentioned earlier, RAM's that
are equipped with bit operations as well as the multiplication or concatenation 
operator can decide languages in \pspace\ in polynomial time; if one  
allows the addition, multiplication and division operators, then RAM's can 
decide languages in \np\ in polynomial time.  
Motivated by the common assumption that neither \pspace\ nor \np is equal 
to \p, the present work does not regard multiplication or concatenation as a 
reasonable primitive operator. 
In this work, we propose the use of {\em automatic functions}
or, more generally, {\em bounded automatic relations} as primitive steps in a register 
machine.  We call such machines {\em Deterministic Automatic Register Machines}
(\darm 's) in the case that automatic functions are used, and {\em Non-deterministic
Automatic Register Machines} (\narm 's) in the case that bounded automatic relations are
used.  Automatic functions and relations belong to the wider class of 
{\em automatic structures}, which are relational structures whose domain
and atomic relations can be realised by finite automata operating synchronously
on their input.  The idea of using automata to study structures was originally 
conceived by B\"{u}chi, who applied this notion to prove the 
decidability of a fragment of second-order arithmetic known as \textit{S}1\textit{S} 
\cite{Bu60,Bu62}.
The definition of an automatic structure first appeared in Hodgson's doctoral
thesis \cite{Hodgson76}, parts of which were published later \cite{Hodgson82}. 
Khoussainov and Nerode \cite{KN95} rediscovered the concept of an 
automatic structure and defined various notions of automatic presentations. 
They then studied the problem of characterising structures that have automatic 
presentations and analysed the complexity of problems formulated over automatic 
structures.  A large body of work on automatic structures has since emerged,
covering themes ranging from the model checking problem to the problem of 
characterising definable relations \cite{Gradel20,Khoussainov10,Nies07,Rubin08}.

The key link between automatic functions and one-tape
deterministic Turing machines, which forms the basis for the \darm\
model, is that each computation of an automatic function
can be simulated by a one-tape deterministic Turing machine in
linear time \cite{Case13}.  Using a proof technique very similar
to that for establishing the latter result, one can show that each 
computation of a bounded automatic relation can be simulated 
by a one-tape nondeterministic Turing machine in linear time; 
this means there is a one-tape nondeterministic Turing machine 
such that for any pair $(x,y)$ in the automatic relation, there is
a computational path that starts with the input $x$ (with the tape 
head starting at the first position of the input) and eventually
reaches an accept state with the final tape content $y$, where
the first position of $y$ matches that of the input.  Thus
a computation by an automatic function or bounded automatic relation 
fulfills our criterion for a reasonable primitive step.
In particular, while automatic functions can perform bit-wise operations, 
addition, subtraction and comparison,
they are not powerful enough to carry out multiplication or concatenation.

Automatic register machines (\arm 's) may be viewed as a generalisation of
{\em addition machines}, a class of register machines introduced by
Floyd and Knuth \cite{Floyd90}.  Addition machines are limited to 
six types of operations, each carried out in constant time: reading input, 
writing output, adding, subtracting, copying the contents of one register 
to another register and comparing the contents of any two registers.
It turns out that many natural operations like multiplication can be done 
by addition machines in linear time \cite{Floyd90}. These nice results 
carry over to automatic function complexity, as the natural numbers with addition 
and comparison is an automatic structure.  Moreover, as mentioned earlier,
an \arm\ can also perform each bit-wise Boolean operation in a single step
\cite{Hartmanis74,Hartmanis76}.
Another class of devices closely related to \arm 's is that of the {\em finite-state
transducers}.  It may seem reasonable to use the class of {\em rational relations} -- 
relations recognised by finite-state transducers -- as a primitive operator in 
register machines.
However, as mentioned earlier, 
Hartmanis and Simon \cite[Theorem 2]{Hartmanis74} showed that
bitwise operations together with addition, subtraction, comparisons and 
concatenation allow to decide in polynomially many steps sets from \pspace; 
as all these primitive functions and relations can be implemented using transducers, it follows that transducers as primitive steps are too powerful, see Section \ref{subsec:transducer_pspace} 
for more explanations and related results.  For this reason, we have chosen not 
to implement rational relations as primitive operators.  

One might also propose combining register machines with transducers 
that are limited by the constraint that for every update function which is computed 
by the transducer, there is a constant $c$ such that for all inputs $x$ the 
output $y$ is at most $c$ symbols longer than $x$;
such a transducer would therefore satisfy a size constraint property
similar to that of the automatic functions.  But such transducers might
still be regarded as being too powerful, for they allow nonregular sets
to be recognised in constantly many steps.  One such example is the
set of all words that have as many $0$'s as $1$'s:
for each $a,b \in \{0,1\}$, let $f_{a,b}(x)$ be the output of a transducer
$T$ which reads a word $x$ and outputs a symbol $b$ whenever it reads the symbol
$a$ and which discards all other symbols without replacement;
$T$, which evidently satisfies the required size constraint property,
accepts an input $x$ if and only if $f_{0,1}(x) = f_{1,1}(x)$.
On the other hand, as we will show in this work, the class of languages recognised
in constantly many steps by \darm 's belongs to the class of regular languages,
see Theorem \ref{the:darm_regular} below.  
Thus, the computational power of register machines with primitive steps computed by 
transducers whose output has length bounded by the input length plus an additive constant
already exceeds that of register machines combined with automatic functions.  

An alternative model of transducers that can indeed be simulated by deterministic Turing
machines in polynomial time is that recently proposed by Kutrib, Malcher, Mereghetti, and Palano \cite{Kutrib20}, where transducers are restricted to be length-preserving.
Although this transducer model yields interesting results, it might not be adequate
for modelling primitive steps in a register machine since length-preserving transductions are a 
real restriction and allow recognition of only a subclass of \p.  Automatic functions are more 
robust in that they can output a string longer than the input by an additive constant, and cover
the full class \p\ in polynomially many steps.  For all these reasons, we believe that
automatic functions and bounded automatic relations
are an appropriate choice of primitive operators in register machines.     

We also mention a few other models of computation whose computational
powers appear to be similar to that of \arm 's.  First, a {\em $d$-dimensional
array automaton} consists of an 1-way input tape and a $d$-dimensional
regular array of cells, each of which contains a finite-state machine, and
each step of computation consists of an input head move and an array head move
together with a state transformation for each finite-state machine in a cell.
Kosaraju \cite{Kosaraju75} showed that context-free languages can be recognised
by 1-dimensional arrays in $n^2$ time; using a similar algorithm, we show
the same time complexity can be achieved by a deterministic \arm\ (Theorem \ref{the:cfg_dal_n2}).
Kutrib and Malcher \cite{Kutrib09,Kutrib13} also investigated one-dimensional
arrays of interconnecting finite automata (known as {\em cellular automata} or {\em cells}),
showing in particular that cellular automaton transducers enjoy better time complexities
compared to iterative array transducers, where the input is processed in real time
and the output is computed in linear time.  We note that any computation of an
automatic function can be simulated by one computation step of a cellular automaton;
it would therefore appear that cellular automata are just as powerful as deterministic
\arm 's.   

The contributions of this work are as follows.  First, we lay out the general definition
of an \arm, and then define two specific types of such machines, namely 
\darm 's and \narm 's, as well as their respective 
complexity classes for arbitrary functions.  Next, we compare the computational
power of \arm 's to that of other models of computation including transducers,
Generative systems (g-systems) \cite{Rovan81} and Iterated Uniform Finite-State
Transducers \cite{Kutrib20}.  The main goal of these comparisons is to justify the use 
of automatic functions and bounded automatic relations in combination with register 
machines; we wish to show, in particular, that it yields a sufficiently expressive model 
that does not give an unrealistic speed-up.          
Furthermore, we show that an open question of Kutrib, Malcher, Mereghetti, and 
Palano \cite{Kutrib20}
on whether their model of Iterated Uniform Finite-State Transducers covers the whole 
class of context-sensitive languages has a positive answer; this can be shown using a 
proof quite similar to that for establishing that
context-sensitive languages are in nondeterministic linear space.

We then proceed to the main results concerning \darm 's,
first establishing that those languages recognised by such machines
in constant time and polynomial time coincide with the regular languages
and \p\ respectively.  It is shown that various problems such as
recognising context-free languages, recognising Boolean languages and
the multi-sources connectivity problem have a lower time complexity in the 
\darm\ model in comparison to plain RAM models.  We obtain almost
tight complexity bounds for some of these problems.  Owing to the 
tight connection between automatic functions and deterministic 
one-tape Turing machines, traditional crossing sequence arguments
used to establish lower bounds on the running time of one-tape
Turing machines apply to \darm 's as well.  

The paper then
moves on to \narm 's, in particular showing that the class of languages recognised
by \narm 's in polynomial time coincides with \np.  The time complexity of
various problems in the \narm\ model, including that of recognising non-palindromes,
\sat\ and context-free languages are then studied.  The key observation
for results concerning \narm 's is similar to that in the case of \darm 's, that is, 
one often gets a speedup in time complexity that is nonetheless not too unreasonable.

We conclude by examining more powerful variants of \arm 's -- first, by allowing
unbounded automatic relations; second, by padding the input at the start of
the computation with ``working space'' that is larger than the original input 
length.  In the second variant, if the input is padded with a string that is
at least polynomially longer than the original input length, then one obtains
a Polynomial-Size Padded Automatic Register Machine (\parm), and if
the padding string is at least exponentially longer than the original input
length, then an Exponential-Size Padded Automatic Register Machine (\exparm)
is obtained.  The increase in computational power as a result of padding
the input is illustrated by the fact that QSAT -- the Quantified Boolean Formula 
Satisfaction problem -- can be decided by an \exparm\ in linear time (as we will 
show in Section \ref{sec:other_arm}), as well as the fact that 
the class of languages decidable in polynomial time by \exparm 's is equal
to \pspace.  Moreover, even a polynomial-size padding can boost 
computational power: one can show that a $(c\log(n))$-variable version of 
3SAT can be decided by a \parm\ in polylogarithmic time.
Finally, we prove that under the Exponential Time Hypothesis (ETH),
the class of languages decided by \parm 's in sublinear time is strictly 
greater than that decided by \darm 's in sublinear time.  
Many results in this paper build on prior work on automatic
functions and relations \cite{Case13,zeyongli,Seah}. 

\section{Preliminaries}
\label{sec:preliminaries}

\noindent
Let $\Sigma$ denote a finite alphabet. We consider set operations including union ($\cup$), concatenation ($\cdot$), Kleene star ($\ast$), intersection ($\cap$), and complement ($\lnot$). Let $\Sigma^*$ denote the set of all strings over $\Sigma$. Let the empty string be denoted by $\varepsilon$. For a string $w \in \Sigma^*$, let $|w|$ denote the length of $w$ and $w = a_1a_2...a_{|w|}$ where $a_i \in \Sigma$ denotes the $i$-th symbol of $w$. We also use $w_{i,j}$ to denote the substring $a_ia_{i+1}...a_j$. Fix a special padding symbol $\#$ not in $\Sigma$. Let $x, y \in \Sigma^*$ such that $x = x_1x_2\ldots x_m$ and $y = y_1y_2\ldots y_n$. Let $x' = x_1'x_2'\ldots x_r'$ and $y' = y_1'y_2'\ldots y_r'$ where $r = \max(m,n)$, $x_i' = x_i$ if $i \leq m$ else $\#$, and $y_i' = y_i$ if $i \leq n$ else $\#$. Then, the convolution of $x$ and $y$ is $conv(x,y) = \binom{x_1'}{y_1'} \binom{x_2'}{y_2'} \ldots \binom{x_r'}{y_r'}$ \cite{Seah}. In other words, a convolution of two (or more) strings is formed by making pairs symbol-by-symbol with appropriate padding at the end for shorter strings. A relation $R \subseteq X \times Y$ is automatic iff the set $\{conv(x,y): (x,y) \in R\}$ is regular. Likewise, a function $f : X \rightarrow Y$ is automatic iff the relation $\{(x,y) : x \in domain(f) \land y = f(x)\}$ is automatic  \cite{Stephan}. An automatic relation $R$ is bounded iff $\exists$ constant $c$ such that $\forall (x,y) \in R,~\abs(|y| - |x|) \leq c$. On the other hand, an unbounded automatic relation has no such restriction. Automatic functions and relations have a particularly nice feature as shown in the following theorem.

\begin{theorem}[Hodgson \cite{Hodgson83}, Khoussainov and Nerode \cite{KN95}]\label{the:first_order_definable}
	Every function or relation which is first-order definable from a finite number of automatic functions and relations is automatic, and the corresponding automaton can be effectively computed from the given automata.
\end{theorem}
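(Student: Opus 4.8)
The plan is to prove the statement by structural induction on the first-order formula defining the relation, reducing the problem to showing that the class of automatic relations is effectively closed under the Boolean connectives and under existential quantification. Functions are handled by passing to their graph relations and flattening nested terms through existential quantifiers over intermediate values, so it suffices to treat the relational case. First I would fix a finite set of variables $x_1, \ldots, x_n$ ranging over $\Sigma^*$ and represent an $n$-ary relation $R$ by its convolution language $L_R = \{conv(x_1, \ldots, x_n) : (x_1, \ldots, x_n) \in R\}$; by definition $R$ is automatic iff $L_R$ is regular, and every formula can be rewritten using only $\lnot$, $\land$ and $\exists$, so it suffices to treat these three cases. A preliminary step is \emph{cylindrification}: to combine relations whose free variables differ, I would extend each to a common variable tuple by inserting coordinates that range freely, which on convolutions amounts to inserting a track carrying an arbitrary (suitably padded) string -- a regular, effectively computable operation subject to the same padding bookkeeping discussed below.

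Next I would handle the Boolean cases. For conjunction, once both relations are cylindrified to the same variable tuple, $L_{R \land S} = L_R \cap L_S$, and regular languages are effectively closed under intersection via the product automaton. For negation, I would let $V_n$ denote the regular set of all \emph{valid} convolutions of $n$ strings, namely those in which no track carries a non-padding symbol strictly to the right of a padding symbol; then $L_{\lnot R} = V_n \setminus L_R$, obtained effectively by complementation and intersection. Restricting to $V_n$ is exactly what guarantees that the complement still consists of genuine convolutions rather than ill-formed column strings.

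The crux is existential quantification, and this is the step I expect to be the main obstacle. Given an automatic relation $R$ over $(x_1, \ldots, x_n)$ together with a DFA for $L_R$, I want a machine for the convolution language of $\exists x_n\, R$ over $(x_1, \ldots, x_{n-1})$. The natural idea is to run a nondeterministic automaton on $conv(x_1, \ldots, x_{n-1})$ that \emph{guesses}, column by column, the symbols of the quantified string $x_n$ and simulates the DFA for $L_R$ on the resulting $n$-track input. The difficulty is that $x_n$ may be longer than all of $x_1, \ldots, x_{n-1}$, so the full convolution is longer than the input the machine reads; I would resolve this by letting the automaton, once $conv(x_1, \ldots, x_{n-1})$ is exhausted, continue reading additional columns in which the first $n-1$ tracks carry the padding symbol $\#$ while it keeps guessing symbols on the $x_n$-track, halting when the simulated DFA is in an accepting state. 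When $x_n$ is shorter the guessed track is simply padded instead. Projecting away the $x_n$-track is a letter-to-letter homomorphism, so the result is regular, and intersecting with $V_{n-1}$ strips any spurious trailing padding so that the output is precisely the convolution language of the projected relation.

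Finally, since each construction above -- the product automaton, complementation via the subset construction, the homomorphic projection followed by determinisation, and the auxiliary automata for $V_n$ and for cylindrification -- is an effective transformation of automata, the automaton for the defined relation is computable from the given automata, which establishes the effectiveness claim. Universal quantification and the remaining connectives then follow by rewriting $\forall x\, \varphi$ as $\lnot \exists x\, \lnot \varphi$ and applying De Morgan's laws, completing the induction.
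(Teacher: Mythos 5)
The paper does not prove this theorem at all: it is imported as a known result, with the citations to Hodgson and to Khoussainov--Nerode standing in for a proof, so there is no in-paper argument to compare yours against. What you give is the standard proof from the automatic-structures literature---structural induction on the formula, closure under Boolean connectives via product automata and complementation relative to the regular set of valid convolutions, and closure under existential quantification via a nondeterministic automaton that guesses the quantified track---and it is essentially correct. Two points deserve care in a full write-up. First, when the quantified string $x_n$ is longer than all of $x_1,\ldots,x_{n-1}$, a finite automaton cannot literally ``continue reading additional columns'' after its input $conv(x_1,\ldots,x_{n-1})$ is exhausted; the formal version of your idea is to enlarge the set of accepting states of the guessing NFA to all states from which the simulated DFA can reach one of its accepting states by reading only columns of the form $(\#,\ldots,\#,a)$, a reachability computation that is effective. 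Second, your remark that intersecting the homomorphic image with $V_{n-1}$ ``strips any spurious trailing padding'' is wrong as stated: intersection filters words out rather than shortening them, so the images with trailing all-$\#$ columns---exactly the tuples where $x_n$ was longest---would be discarded, not repaired. The correct operation there is a right quotient by the set of all-$\#$ columns starred, under which regular languages are effectively closed; alternatively, just rely on your primary guessing construction, which handles the overhang once the accepting states are adjusted as above. With those repairs the argument is complete and matches the cited sources.
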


\noindent
A grammar is a 4-tuple $\grammar$ where $N$ and $\Sigma$ are disjoint sets of non-terminal and terminal symbols. $S \in N$ is the starting symbol and $P$ is the set of rules. Different restrictions on the rules in $P$ define different family of languages. Moreover, $l \Rightarrow r$ denotes that $r$ can be derived from $l$ in one step while $l \Rightarrow^* r$ denotes that $r$ can be derived from $l$ in zero or more steps. A context-free grammar $\grammar$ is a grammar in which every rule in $P$ is restricted to be in the form of $A \rightarrow \alpha$ where $A \in N$ and $\alpha \in (N\cup\Sigma)^*$. A context-free language is a language generated by a context-free grammar. For simplicity, we assume that the language does not contain the empty string. A context-free grammar $\grammar$ is in Chomsky Normal Form if every rule in $P$ is restricted either in the form of $A \rightarrow u$ or $A \rightarrow BC$ where $A,B,C \in N$ and $u \in \Sigma$. A context-free grammar $\grammar$ is in Greibach Normal Form if every rule in $P$ is restricted in the form of $A \rightarrow \alpha$ where $A \in N$ and $\alpha \in \Sigma N^*$. In informal definition by Okhotin \cite{Okhotin13}, a Boolean grammar $\grammar$ extends context-free grammars by allowing two additional types of rules in $P$: (1) $A \rightarrow \alpha ~\&~ \beta$ where $A \in N$ and $\alpha, \beta \in (N \cup \Sigma)^*$: this type of rule states that for any $w \in \Sigma^*, A \Rightarrow^* w$ if $ \alpha \Rightarrow^* w$ and  $\beta \Rightarrow^* w$; and (2) $A \rightarrow \lnot \alpha$ where $A \in N$ and $\alpha \in (N \cup \Sigma)^*$: this type of rule states that for any $w \in \Sigma^*, A \Rightarrow^* w$ if $ \alpha \not\Rightarrow^* w$. A Boolean language is a language generated by a Boolean grammar. A Boolean grammar $\grammar$ is in Binary Normal Form if every rule in $P$ except $S \rightarrow \varepsilon$ is restricted either in the form of $A \rightarrow u$ or $A \rightarrow B_1C_1 ~\&~ \cdots ~\&~ B_mC_m ~\&~ \lnot D_1E_1 ~\&~ \cdots ~\&~ \lnot D_nE_n ~\&~ \lnot \varepsilon ~ (m \geq 1, n \geq 0)$ where $A,B_i,C_i,D_i,E_i \in N$ and $u \in \Sigma$.

This paper uses a standard definition of Turing machines by Arora and Barak \cite{CC} and focuses on single-tape Turing machines. We consider a one-dimensional tape that is infinite to the right, consisting of a sequence of cells. Each cell can hold a symbol from a finite set $\Gamma$ called the alphabet of the machine. Note that the alphabet of a language $\Sigma$ will be 
a subset of $\Gamma$ in this Turing machine model. The whole (non-empty) content of the tape at one point of time is defined as the content of the machine. The tape is equipped with a tape head pointing a cell to read or write the symbol on that cell in a single step. The tape head is initially pointing to the leftmost cell and can move left or right in a single step. A Turing machine $M$ is described as a triple $\langle\Gamma, Q, \delta\rangle$ containing:

\begin{enumerate}
	\item The alphabet of the machine $\Gamma$. Let $\square \in \Gamma$ be a symbol denoting an empty cell.
	\item A set $Q$ consisting of the possible states of the machine $M$. Let $q_{start}$, $q_{halt\_accept}$, $q_{halt\_reject} \in Q$ denote the start state, halt then accept state, and halt then reject state respectively.
	\item A transition function $\delta : Q \times \Gamma \rightarrow Q \times \Gamma \times \{L,R,S\}$ describing the rules of $M$ in performing each step. The rule describes: if the machine is in some state $\in Q$ and the tape head reads some symbol $\in \Gamma$ then the machine will change its state to some state $\in Q$, the tape head will write some symbol $\in \Gamma$ in the cell it is pointing to, and then it either moves left, right, or stay.
\end{enumerate}

\noindent
An input is initially written on the leftmost cells of the tape and other cells will be empty. The time complexity of a machine on input $x$ is the minimum number of steps needed to reach $q_{halt,accept}$ or $q_{halt,reject}$. 

The exact concept of register machine varies from author to author. This paper uses a definition of register machine which only has constantly many registers --- each capable of storing a non-negative integer --- therefore the indirect access operation $(R_i \leftarrow R_{R_j})$ is not used. We use basic arithmetic operations and comparisons $(<,>,=)$ as in Floyd and Knuth \cite{Floyd90} in addition to allowing constant values assignment and bitwise operations (as found in various programming languages like C, C++, and Javascript) by Hartmanis and Simon \cite{Hartmanis74}. Hartmanis and Simon observed that including either concatenation or multiplication makes the concept to cover $PSPACE$ in polynomially many steps therefore both are not used by us. Furthermore, inputs were read explicitly in the program and writing was only possible in programs which compute functions. The statements are numbered (with line numbers) and conditional and unconditional $goto$ are allowed. For computing characteristic functions of set, the explicit $halt\_accept$ and $halt\_reject$ operations are used. Readers are advised to check Cook and Reckhow \cite{Cook73} for the relation between register machine (using a RAM model) and Turing machine.

\section{Automatic Register Machines}
\label{sec:model}

\subsection{Formal Definition}
\label{subsec:arm}

\noindent
Formally, we denote an Automatic Register Machine ($\arm$) by $M$ and represent it as a quintuple $\langle \Gamma,\Sigma,R,Op,P \rangle$ by the following definition. Semantically, our model tries to generalize the standard register machines where the operations are no longer limited and the registers are able to store string. The operations, however, are restricted depending on the the type of $M$: in the most basic type, Deterministic Automatic Register Machine ($\darm$) only allows $Op$ to be the set of automatic functions where the parameters (and the output) are convoluted. Note that in standard register machine, all operations are automatic, thus register machine is a subset of $\darm$. More types of $M$, which are also more general than $\darm$, will be discussed later in Subsection \ref{subsec:arm_turing} and Section \ref{sec:other_arm}.

\begin{definition}[Automatic Register Machine, Seah \cite{Seah}]
	
	Automatic Register Machine ($\arm$) is a machine $M = \langle \Gamma,\Sigma,R,Op,P \rangle$ where $M$ is specified by:
	
	\begin{enumerate}
		\item a finite set of registers $R = \{r_1, r_2, \cdots, r_n\}$, each capable of storing a string, possibly empty, over a fixed machine alphabet $\Gamma$;
		\item a finite set of possible operations $Op = \{f_1, f_2, \cdots, f_m\}$, where the $f_i$'s are automatic relations and are further restricted to some rules depending on the type of $M$ (for example, the $f_i$'s are automatic functions if $M$ is deterministic and are bounded automatic relations if $M$ is nondeterministic);
		\item a program $P$ consisting of a finite list of instructions in form of $LINE\_NO : INSTR$. \newline The $INSTR$ takes one of the following forms:
		\begin{itemize}
			\item $read(R_i)$: assigning the input string over a fixed input alphabet $\Sigma \subseteq \Gamma$ to register $R_i \in R$,
			\item $write(R_i)$: writing the output string of register $R_i \in R$,
			\item $R_i \leftarrow w$: assigning any string $w$ over a fixed input alphabet $\Sigma \subseteq \Gamma$ to register $R_i \in R$,
			\item $R_i \leftarrow R_j$: assigning the value of register $R_j \in R$ to register $R_i \in R$,
			\item $R_i \leftarrow F(R_1, R_2, \cdots, R_k)$: assigning the result of operation $F \in Op$ with parameters $R_1, R_2, \cdots, R_k \in R$ to register $R_i \in R$,
			\item $goto(LINE\_NO)$: go to the instruction in line $LINE\_NO$ (i.e. unconditional jump),
			\item $if~F(R_1, R_2, \cdots, R_k)~then~goto(LINE\_NO)$: if the result of operation $F \in Op$ with parameters $R_1, \cdots, R_k \\ \in R$ is true, then go to the instruction in line $LINE\_NO$ (i.e. conditional jump);
			\item $halt$ (or $halt\_accept$ or $halt\_reject$): halting the program (and accept or reject) the input.
		\end{itemize}
	\end{enumerate}
	
\end{definition}

\noindent
Similar to register machine, the program runs sequentially from the first line number to the next one unless jumped either conditionally or unconditionally, while performing the instructions along the runtime. A single step is defined as performing one line of instruction. Initially, all registers store empty string. After several steps, the program terminates after halted by $halt$. If one wants to recognize a language, both halt states (i.e. $halt\_accept$ or $halt\_reject$) can be used. Alternatively, one can in deterministic model write a bit output of `1'/`0' then halt to denote the acceptance/rejection. If one wants to write a function value as an output, both halt states are also allowed especially for nondeterministic models. This is because the whole computation can be rejected when it turns out in guess-then-verify steps that some info had been guessed wrongly and thus leads to rejection of the computation. A partial constant-valued function can also be done in this way where the output might be skipped (by $halt\_reject$) as only the output on members of some language $L$ is considered.

\begin{remark}
	Note that reading in a register will erase the previous content of the register. Thus if you have a register machine which sums up all integer inputs until it reads $0$, then this register machine needs at least two registers, as otherwise the reading of a new input would erase the old content. One register is only possible if one reads in the first statement a convolution of all inputs (or in other longer format like the concatenation of all inputs).
\end{remark}

\begin{example}
	\label{exa:l_two}
	
	We give an illustration of an Automatic Register Machine $M$ recognizing $L_{two} = \{0^{2^i}$ for some non-negative integer $i\}$. Let $M = \langle \Gamma,\Sigma,R,Op,P \rangle$, $\Gamma = \{0,1\}$, $\Sigma = \{0\}$, $R = \{r_1, r_2\}$, $Op = \{is\_empty, is\_one, is\_odd,divide\_by\_two\}$, and
	
	\item \begin{align*}
		P = \{~~
		01 &: read(r_1) \\
		02 &: if~is\_empty(r_1)~then~goto(09) \\
		03 &: if~is\_one(r_1)~then~goto(08) \\
		04 &: if~is\_odd(r_1)~then~goto(09) \\
		05 &: r_2 \leftarrow divide\_by\_two(r_1) \\
		06 &: r_1 \leftarrow r_2 \\
		07 &: goto(03) \\
		08 &: halt\_accept \\
		09 &: halt\_reject
		~~\}
	\end{align*}
	
	\noindent
	with the following definition of the operations:
	\begin{itemize}
		\item $is\_empty(w)$ returns true iff $w$ is an empty string,
		\item $is\_one(w)$ returns true iff $w$ has exactly one occurrence of symbol `0',
		\item $is\_odd(w)$ returns true iff $w$ has odd occurrences of symbol `0',
		\item $divide\_by\_two(w)$ returns $w'$ where $w'$ is equal to $w$ after modifying every second occurrence of `0' to `1' (e.g. if $w = 10\underline{0}1011\underline{0}$ then $w'=10\underline{1}1011\underline{1}$.)
	\end{itemize}
	
	\noindent
	It is not hard to see that each operations can be constructed with finite automatons thus they are automatic. Note that for $divide\_by\_two$ in particular, the input $w$ and the output $w'$ are convoluted. As $Op$ only consists of automatic functions, therefore $M$ satisfies the requirement as a DARM. Let $n$ be the occurrence number of `0' in $r_1$, which is initially equal to the length of the input string. The proof of correctness of the program comes from keep dividing $n$ by $2$ until $n$ is either $1$ --- then accept because $n$ is $2^i$ for some non-negative integer $i$ --- or other odd number --- then reject. The program will halt and either accept or reject the input in no more than $O(\log n)$ steps.
	
\end{example}

\subsection{Alternative Representation with Single Operation}
\label{subsec:single_register}

\noindent
The representation of an ARM $M = \langle \Gamma,\Sigma,R,Op,P \rangle$ could be simplified by rewriting each instructions in $P$ to a form of $LINE\_NO : F(R_1, R_2, \cdots, R_k)$ for some operations $F$ and $R_1, R_2, \cdots, R_k \in R$. The operation $F$ may be a member of $Op$ or also a ``keyword" operation (e.g. $read$, $goto$, $halt$). This new form resembles a neater operation format in register machine such as $INC(R_i)$, $ADD(R_i,R_j)$, and $IFEQUAL(R_i,R_j,LINE\_NO)$; which are all automatic functions. We can then simplify the machine representation even more: the $LINE\_NO$ could also be inserted as a parameter in $F$ as well as all unused registers --- which are eventually ignored when the operation is performed. Thus, the instructions in $P$ could be fully rewritten as $F(LINE\_NO, r_1, r_2, \cdots, r_n)$ where $F$ is either a member of $Op$ or keyword operations.

Moreover, based on Theorem \ref{the:first_order_definable}, the first-order definable function from automatic functions is also automatic. As members of $Op$ and keyword operations are all automatic, thus we can combine them as a single operation $G(LINE\_NO, r_1, r_2,\cdots, r_n)$ where $G$ will run the specific operations based on the $LINE\_NO$. After one step, $G$ will update the new $LINE\_NO$ as well as some registers. The program $P$ then has this single operation $G$ alone which will be applied iteratively from $G(01, \epsilon, \epsilon, \cdots)$, where $01$ denotes the first line number, until it halts. Therefore, we can replace $P$ by $G$ and finally ARM can be re-represented as a triple $M = \langle \Gamma,\Sigma,G \rangle$. One single step is redefined as applying the operation $G$ once.

\begin{definition}[Alternative Representation of Automatic Register Machine]
	An Automatic Register Machine $M = \langle \Gamma,\Sigma,R,Op,P \rangle$ can also be re-represented as $M = \langle \Gamma,\Sigma,G \rangle$ where $G$ is a single operation combining all instructions in $P$ with a convolution of the line number and all registers in $R$ as its parameter.
\end{definition}

\begin{example}
	
	We can rewrite the ARM in Example \ref{exa:l_two} recognizing $L_{two}$ by letting $M = \langle \Gamma,\Sigma,G \rangle$, $\Gamma = \{0,1\}$, $\Sigma = \{0\}$, and $G$ with the following definition:
	
	\begin{equation*}
		G(LINE\_NO,r_1,r_2) = \begin{cases}
			G(02,input,r_2) &\text{if $LINE\_NO = 01$} \\
			G(09,r_1,r_2) &\text{if $LINE\_NO = 02$ and $r_1 = \epsilon$} \\
			G(03,r_1,r_2) &\text{if $LINE\_NO = 02$ and $r_1 \neq \epsilon$} \\
			G(08,r_1,r_2) &\text{if $LINE\_NO = 03$ and `0' occurs $1$ time in $r_1$} \\
			G(04,r_1,r_2) &\text{if $LINE\_NO = 03$ and `0' does not occur $1$ time in $r_1$} \\
			G(09,r_1,r_2) &\text{if $LINE\_NO = 04$ and `0' occurs odd times in $r_1$} \\
			G(05,r_1,r_2) &\text{if $LINE\_NO = 04$ and `0' occurs even times in $r_1$} \\
			G(06,r_1,div2(r_1)) &\text{if $LINE\_NO = 05$} \\
			G(07,r_2,r_2) &\text{if $LINE\_NO = 06$} \\
			G(03,r_1,r_2) &\text{if $LINE\_NO = 07$} \\
			halt\_accept &\text{if $LINE\_NO = 08$} \\
			halt\_reject &\text{if $LINE\_NO = 09$}
		\end{cases}
	\end{equation*}
	
	\noindent
	where $input$ is an input string and $div2(w)$ is the result of modifying 
$w$ 
by changing every second occurrence of `0' to `1'. Note that $G$ is an automatic function. The machine will start with $G(01, \epsilon, \epsilon)$ and eventually will halt in the same number of steps as the original machine which is in $\bigO(\log n)$ steps where $n$ is the length of the input.
	
\end{example}

\begin{remark}
	Note that it is also possible to represent all registers in a single (big) register which is simply a convolution of all registers. Moreover, while both representations of $ARM$ are equivalent, the original representation is more natural and mainly used for the sake of readability. However, the alternative representation may also be used when proving some findings especially when they are related to Turing machine computation.
\end{remark}

\subsection{Basic Types of ARM and Relations to Turing machine}
\label{subsec:arm_turing}

\noindent
We now first define the two basic types of $ARM$ which are our main focus throughout this paper. Other types will be discussed later in Section \ref{sec:other_arm}.

\begin{definition}[Deterministic Automatic Register Machine]
	A Deterministic Automatic Register Machine ($\darm$) is an Automatic Register Machine $M = \langle \Gamma,\Sigma,R,Op,P \rangle$ where $Op$ is restricted to automatic functions. Alternatively, a $\darm$ is an Automatic Register Machine $M = \langle \Gamma,\Sigma,G \rangle$ where $G$ is an automatic function.
\end{definition}

\begin{definition}[Nondeterministic Automatic Register Machine]
	A Nondeterministic Automatic Register Machine ($\narm$) is an Automatic Register Machine $M = \langle \Gamma,\Sigma,R,Op,P \rangle$ where $Op$ is restricted to bounded automatic relations. Alternatively, a $\narm$ is an Automatic Register Machine $M = \langle \Gamma,\Sigma,G \rangle$ where $G$ is a bounded automatic relation.
\end{definition}

\noindent
In $\narm$, the bounded version of automatic relations is preferred to avoid the ability to guess a huge witness in one step by unboundedness, which may recognize a larger complexity class as discussed in Subsection \ref{subsec:unarm}. Furthermore, the two concepts differ when act as functions: a nondeterministic bounded automatic function can only have polynomial-sized output, while for the unbounded one it can have exponential-sized output e.g. a concatenation of all $n$-digits binary numbers separated by symbol `2'. We may now define the complexity class of the models.

\begin{definition}[Deterministic Automatic Register Machine Complexity]
	Let $f(n)$ be any function in the input size $n$. DAL[$f(n)$] is the class of languages accepted by $\darm$ in $\bigO(f(n))$ steps.
\end{definition}

\begin{definition}[Nondeterministic Automatic Register Machine Complexity]
	Let $f(n)$ be any function in the input size $n$. NAL[$f(n)$] is the class of languages accepted by $\narm$ in no more than $\bigO(f(n))$ steps.
\end{definition}

\noindent
By this definition, $L_{two}$ in Example \ref{exa:l_two} is in $\dal[\log n]$. Similar to nondeterministic Turing machine, a string in $\narm$ is accepted iff at least one of the nondeterministic computation path halt with accept. Note that all accepting paths must be done in no more than $\bigO(f(n))$ steps for the language being in $\bnal[f(n)]$. Next, the following two theorems give the basic connection between automatic functions and bounded relations to Turing machine computations.

\begin{theorem}(Case, Jain, Seah and Stephan {\cite[Theorem 2.4]{Case13}})\label{the:turing_simulation_deterministic}
A function is automatic iff it can be computed in $\bigO(n)$ time by a deterministic one-tape Turing machine whose input and output start at the same cell, where $n$ is the length of the input string.
\end{theorem}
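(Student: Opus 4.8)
The plan is to prove the two implications separately. Write $R_f = \{conv(x,y) : x \in domain(f),\ y = f(x)\}$ for the graph of $f$; by definition $f$ is automatic iff $R_f$ is regular, so the task is to tie regularity of $R_f$ to linear-time one-tape computation of $f$ with input and output anchored at the same cell.

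For the direction ``automatic $\Rightarrow$ linear-time Turing machine'', I would start from a DFA $A = (Q, \delta, q_0, F)$ recognising $R_f$ over the convolution alphabet. A first observation, proved by the pigeonhole principle on $Q$ together with the single-valuedness of $f$, is that there is a constant $c$ with $\abs(|f(x)| - |x|) \le c$ for all $x$ in the domain: if $f(x)$ were far longer than $x$, the accepting run would traverse a cycle on the all-$\#$ top track, and functionality would let one pump that cycle into a second output, a contradiction (and symmetrically if $x$ is far longer). Because the output symbol at position $i$ may depend on the whole input (the run of $A$ must eventually reach $F$), the construction uses two sweeps. In a right-to-left sweep the machine annotates each cell $i$ with the set $V_i \subseteq Q$ of \emph{viable} states, namely those $q$ from which, reading the fixed remaining top-track symbols together with \emph{some} choice of bottom-track symbols, an accepting state is reachable; $V_i$ is computable from $V_{i+1}$ and $x_i$, so the annotation fits into a tape alphabet of subsets of $Q$ and is produced in linear time. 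In a left-to-right sweep the machine, maintaining the unique current state reached from $q_0$, selects at each cell the unique bottom symbol $y_i$ with $\delta(q, \binom{x_i}{y_i}) \in V_{i+1}$, writes $y_i$ in place, and advances; uniqueness of this choice is exactly the single-valuedness of $f$. Absorbing the constant length difference $c$ (padding or deleting up to $c$ trailing cells while keeping position $1$ of input and output aligned) is a linear-time clean-up, so the whole computation runs in $\bigO(n)$ with input and output starting at the same cell.

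For the converse, ``linear-time Turing machine $\Rightarrow$ automatic'', I would reduce to a membership question for the graph. Given a deterministic one-tape machine $M$ computing $f$ in $\bigO(n)$ steps with input and output starting at the same cell, build a one-tape machine $M'$ that, on input $conv(x,y)$, treats the string as two aligned tracks, simulates $M$ on the top track to produce $f(x)$ in place, and then compares the produced track with the bottom track cell by cell, accepting iff they agree. Since each step of $M$ becomes one step of $M'$ and the comparison is a single linear scan, $M'$ recognises $R_f$ in $\bigO(n)$ time on a single tape. It then remains to show that a language recognised by a deterministic one-tape machine in $\bigO(n)$ time is regular. This is the crossing-sequence argument of Hennie and Trakhtenbrot: the running time equals the sum of the lengths of the crossing sequences over the $n+1$ cell boundaries, so were $R_f$ non-regular one could, via a Myhill--Nerode argument, exhibit inputs forcing long crossing sequences at a constant fraction of the boundaries, contradicting the $\bigO(n)$ bound; hence only finitely many crossing-sequence types occur and $R_f$ is regular, i.e.\ $f$ is automatic.

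I expect the main obstacle to lie in the converse, specifically in the crossing-sequence bound: whereas in the forward construction single-valuedness immediately pins down each output symbol, here linear time only controls the \emph{average} crossing-sequence length, and the real work is upgrading this to a genuine finiteness statement about crossing-sequence types. The forward direction's only subtleties are the need for the backward viability sweep (output symbols are not fixed by bounded left context alone) and the book-keeping to keep input and output anchored at the same cell despite the constant length gap; both become routine once the viable-state invariant is in place.
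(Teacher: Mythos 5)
The paper does not prove this statement at all; it imports it verbatim from Case, Jain, Seah and Stephan \cite{Case13}, so there is no in-paper proof to compare against. Your argument is essentially the standard one from that reference and is sound: the backward viability-set sweep followed by a forward sweep in which single-valuedness pins down each output symbol is exactly the right construction for the forward direction, and reducing the converse to a linear-time one-tape recognizer for $\{conv(x,f(x))\}$ and then invoking the Hennie--Trakhtenbrot crossing-sequence theorem is a legitimate (if slightly repackaged) version of the cited proof. One small correction: the two-sided bound $\abs(|f(x)|-|x|)\leq c$ you assert is false for automatic functions in general --- the constant map $f(x)=\varepsilon$ is automatic --- and the ``symmetric'' pumping argument for the case where $x$ is far longer than $f(x)$ yields no contradiction, since pumping there only produces other inputs with the same output; only the one-sided bound $|f(x)|\leq|x|+c$ holds, which is fortunately all your construction actually uses.
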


\begin{corollary}\label{the:turing_simulation}
	Any step by a bounded automatic relation $R$ can be simulated by a one-tape nondeterministic Turing machine in $\bigO(m)$ steps where $m$ is the length of the longest current register content and $m$ is bounded by $n + \bigO(t)$ where $t$ is the number of Automatic Register Machine steps done after the machine reads the input of length $n$.
\end{corollary}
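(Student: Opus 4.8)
The plan is to view this as the nondeterministic, relational counterpart of Theorem~\ref{the:turing_simulation_deterministic} and to reuse its single-tape simulation idea, with the key simplification that nondeterminism lets us replace the deterministic \emph{computation} of an output by a \emph{guess-and-verify} procedure. For a function one must reconstruct $f(x)$ exactly, which forces extra passes or lookahead; here it suffices to guess a candidate $y$ symbol by symbol and check $(x,y)\in R$, so a single left-to-right sweep will do.

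First I would unfold the definition of a bounded automatic relation: $\{conv(x,y):(x,y)\in R\}$ is regular, so I fix a DFA $\cA$ recognising it over the convolution alphabet, together with the constant $c$ satisfying $\abs(|y|-|x|)\le c$ for all $(x,y)\in R$ (for a $k$-ary operation one convolves all argument registers, but the reasoning is identical and the relevant input still has length $m$). The one-tape nondeterministic Turing machine starts with the argument on the tape, head at the first cell, and makes one sweep to the right: at cell $i$ it reads the current symbol $x_i$, nondeterministically guesses an output symbol $y_i$, advances the simulated state of $\cA$ on the pair $\binom{x_i}{y_i}$, and overwrites the cell with $y_i$ (the old $x_i$ is no longer needed once position $i$ has been processed). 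Because $\abs(|y|-|x|)\le c$, the at most $c$ boundary positions where the lengths differ are handled inside the finite control: if $|y|>|x|$ it guesses up to $c$ further symbols against $\#$ in the first track, and if $|y|<|x|$ it blanks out up to $c$ trailing cells, feeding $\#$ in the second track. The path accepts iff $\cA$ is left in an accepting state; the whole convolution of length $\max(|x|,|y|)\le|x|+c$ is consumed in one pass, so each path runs in $\bigO(m)$ steps and the written $y$ begins at the original first cell, as demanded.

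Correctness is immediate from the choice of $\cA$: if $(x,y)\in R$, the guessing path that writes exactly $y$ drives $\cA$ into a final state and accepts; conversely, every accepting path has recorded a pair whose convolution $\cA$ accepts, hence a genuine element of $R$. For the size bound I would induct on the number of steps performed after the $read$. Right after the input is read, the longest register has length at most $n$. Every later step either copies a register, assigns one of the finitely many constant strings that occur literally in the program, or applies one of the finitely many operations in $Op$; letting $c_{\max}$ be the maximum of all the relation constants and the constant-string lengths, a single step lengthens the longest register by at most $c_{\max}$. Hence after $t$ steps $m\le n+c_{\max}\,t=n+\bigO(t)$.

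The only delicate point is the in-place, single-tape bookkeeping at the length boundary and the final head alignment. Since $|y|-|x|$ is bounded by the fixed constant $c$, all of this is absorbed into the finite control and costs only $\bigO(1)$ extra moves, leaving the $\bigO(m)$ bound intact. I therefore expect no real obstacle beyond matching the cell at which the output starts to that of the input, exactly as in the deterministic construction underlying Theorem~\ref{the:turing_simulation_deterministic}.
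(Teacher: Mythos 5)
Your proposal is correct and follows exactly the route the paper intends: the paper states this as an unproved corollary of Theorem~\ref{the:turing_simulation_deterministic}, remarking only that the technique is ``very similar,'' and your guess-and-verify single sweep (guess $y_i$ at each cell, run the convolution DFA on $\binom{x_i}{y_i}$, absorb the $\le c$ length discrepancy into the finite control) is the natural nondeterministic instantiation of that technique, correctly exploiting that a relation need only be verified rather than computed. The length bound $m \le n + c_{\max}t$ by induction on the number of post-$read$ steps is likewise the intended argument, so nothing is missing.
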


\noindent
The core consequence of Theorem \ref{the:turing_simulation_deterministic} is that we can translate a $\darm$ to an equivalent deterministic Turing machine which runs linear-time longer. Note that the linear here is subjected to the length of the register. To show this, consider any $\darm~M = \langle \Gamma,\Sigma,G \rangle$. The parameters of $G$ are convoluted and this string could be pasted into a Turing tape. In a single step of $M$, the parameters are updated by applying an automatic function $G$ on it. By the theorem, this can be simulated by the Turing machine in linear step of the length of the parameters, thus Q.E.D. Moreover, it is also possible to have the following converse: we can translate a deterministic Turing machine $M$ which runs in $\bigO(f(n))$ to an equivalent $\darm$ which runs in $\dal[f(n)]$. The proof is quite similar where now the content of the Turing tape is pasted as a single register in $\darm$ and show that each Turing machine step is an automatic function. Furthermore, the results can extended to $\narm$ (with nondeterministic Turing Machine) as a consequence of Theorem \ref{the:turing_simulation} in similar manner. Formal proof of these results can be found later in Section \ref{sec:darm} and \ref{sec:narm}.

\section{ARM In Comparison to Other Models}
\label{sec:arm_comparison}

\noindent
This section is solely dedicated to justify that Automatic Register Machine is an adequate computation model which has primitive operations that are expressive and powerful yet not giving an unrealistic speed-up compared to other models.

\subsection{Deterministic Polynomial Time Model with Transducers as Primitive Operations to Solve PSPACE-Complete}
\label{subsec:transducer_pspace}

\noindent
The computation model with transducers as primitive operations is similar to ARM with the only difference that for $M = \langle \Gamma,\Sigma,G \rangle$, $G$ is a transducer. Informally, transducers allow one of the symbol to be $\epsilon$ when ``convoluted'' and such ``convolution'' can be recognized by an automaton (called Mealy machine), thus it is a superset of automatic functions and relations. This model is close to a language generating device called an Iterated Sequential Transducer by Bordihn, Fernau, Holzer, Manca and Mart{\'\i}n-Vide \cite{Bordihn06} albeit the complexity is usually measured by its number of states (i.e. state complexity).  

We show that the computation model with transducers as primitive operations gives an extreme speed-up by solving a $\pspace$-complete problem in only polynomial number of steps. This 
can be deduced from our result in Theorem \ref{the:qsat_expdal} later that a DARM with an access to exponential-length padding (i.e. allowing its working space to be exponential larger) can solve $QSAT$ in $\bigO(n)$ steps, and one can simulate the exponential padding by a transducer which runs through a loop $p(n)$ times and each time doubles the length of $y$ by mapping $y$ to a string twice the length (to get a padding of length at least $2^{p(n)}$). Hartmanis and Simon \cite{Hartmanis74} also show that $NPSPACE \subseteq$ polynomial steps on RAM with concatenation. They write later that they need only bitwise operations, concatenation, and division by 2; not even substring operation. All these operations can be realised by transducers, therefore their result also shows that polynomial number of transducer steps contains $\pspace$. Though the exact construction is a bit technical, the result is not that important in the context of this paper therefore the working out of details and the verification is left to the reader. As a comparison, such blow-up will be unlikely to happen in $\darm$ as based on Theorem \ref{the:dal_ptime} later, if $\sat$ or $QSAT$ are in $\dal[poly(n)]$ then they are also in $P$.

\subsection{ARM and G-systems}

\noindent
Generative systems (g-systems), introduced by Rovan \cite{Rovan81} (stemming from $\Gamma$-systems by Wood \cite{Wood1976}) as a natural model for language generating devices, depend on various types of transducers. We first define the g-systems.

\begin{definition}[Generative systems \cite{Rovan81}]\label{defn:gsystem}
	A generative system (g-system) is a 4-tuple $G=\langle N,\Sigma,M$ $,S \rangle$ where $N$ and $\Sigma$ are disjoint set of non-terminal and terminal symbols, $S \in N$ is the starting symbol and $M$ is a transducer with $M(w) = \emptyset$ for each $w \in \Sigma^+$.
	$M$ is a 6-tuple $(K,N \cup \Sigma, N \cup \Sigma, H, q_0, F)$, where $K$ is a finite set of states, $q_0$ is the initial state,
	$F \subseteq K$ is a set of accepting states and $H$ is a finite subset of $K \times (N \cup \Sigma) \times (N \cup \Sigma)^+ \times K$.
	A computation of $M$ is a word $h_1\ldots h_n \in H^+$ such that if, for each $i \in \{1,2,3,4\}$, $p_i$ is the homomorphism
	on $H^*$ for which $p_i(x_1,x_2,x_3,x_4) = x_i$, then $p_1(h_1) = q_0$, $p_4(h_n) \in F$ and $p_1(h_{i+1}) = p_4(h_i)$
	for $1 \leq i \leq n-1$.  A computation of $M$ of length $n$ may be interpreted as a sequence of $n$ configurations 
	such that the end state of each of the first $n-1$ configurations is the start state of its succeeding configuration;
	furthermore, the first configuration starts with the initial state $q_0$ and the end state of the last 
	configuration is an accepting state.  Let $\prod_M$ denote the set of all computations of $M$.  
	For each language $L \subseteq (N \cup \Sigma)^*$, $M(L)$ is defined to be $p_3\left(p_2^{-1}(L) \cap \prod_M\right)$;
	in other words, for every $w \in L$, if $(x_1^1,x_2^1,x_3^1,x_4^1)(x_1^2,x_2^2,x_3^2,x_4^2)\ldots(x_1^n,x_2^n,x_3^n,x_4^n)$ is a computation such that $x_2^1x_2^2\ldots x_2^n = w$, then
	$x_3^1x_3^2\ldots x_3^n \in M(L)$.            
\end{definition}

As is usually defined for grammars, the language generated by a g-system $G$ consists of all $w \in \Sigma^*$ such that there is a finite sequence $S = v_1,v_2,\ldots,v_k = w$ of words over $N \cup \Sigma$ with $v_{i+1} \in M(\{v_i\})$ for all $i \in \{1,\ldots,k-1\}$.     
We note that the transducer $M$ in Definition \ref{defn:gsystem} maps subsets of $(N \cup \Sigma)^*$ to subsets of $(N \cup \Sigma)^*$ and is thus nondeterministic (otherwise, the language generated by $G$ consists of at most one word).

As a generating language device, one step of derivation in the g-system is not using a set of rules $P$ as in grammar but the transducer $M$. One step of $l \Rightarrow r$ is valid if and only if $r \in M(l)$. Various families of g-systems are differentiated by different types of their transducers. If the transducer does not allow deletion, then it is called $\epsilon$-free g-system, else we can call it a g-system with deletion. If the transducer is some type of directed acyclic graph and either one goes forward in it or one stays on some state and copies and pastes the current symbols, then it is called a sequential g-system, else it is called a parallel g-system. It is not hard to see the similarity between g-system if used as a language accepting device with our model apart from the different types of operations. We list out the connection between them as follows.

\begin{enumerate}
	\item If an $\epsilon$-free g-system accepting (or generating) $L$, there exists a nondeterministic automatic function which also recognizes $L$ such that if the g-system generates some word $w$ in $m$ steps then $w$ can be accepted in $\bnal[m]$ for some $m$. However, if $\narm$ measures the time complexity in a weak-measure i.e. 
	$$
	\max\{\min\{\text{acceptance time for}~x~\text{on a nondeterministic path}\}: |x| = n, x \in L\},
	$$ 
	then such $\narm$ might be faster as some languages are in $O(\log \log n)$ steps, as shown later in Theorem \ref{the:nal_weak}, while a g-system can in each step increase the length only by a constant factor and needs at least $\Omega(\log n)$ steps.
	\item G-systems with deletion is more powerful than our model as they can recognize context-sensitive language in polynomial steps. The proof is based on the symbol duplication, which allows solving $PSPACE$ in polynomial steps \cite{Hartmanis74}.
	\item Sequential g-systems are weaker than the parallel one as they do not have the ability to duplicate a symbol. In fact, they are also weaker than our $\dal$ model. The fact that sequential g-systems work symbol-by-symbol with a limitation of copy-pasting on some positions shows that it is actually a restrictive automatic function steps.
\end{enumerate}

\noindent
We conclude our connection with the following evaluation. The essence of this is that sequential g-systems do not allow too powerful steps but are slower than our deterministic model, while parallel g-systems with deletion allow speed-ups which incorporate larger complexity-classes into the class of $NP$ in polynomial deterministic steps. Even the $\epsilon$-free g-system is comparable to our nondeterministic model. So our model may somehow sit in the middle of the two by accommodating the adequate speed-up while avoiding unrealistic blow-up.

\subsection{Iterated Uniform Finite-State Transducers and its Open Problem} 

\noindent
Kutrib, Malcher, Mereghetti, and Palano \cite{Kutrib20} introduced Iterated Uniform Finite-State Transducers as a model of computation preserving the polynomial time. This model uses length-preserving transducers as primitive operations and might be a real restriction as allowing only a subclass of $P$ for polynomial time recognition. As automatic functions can have output a constant longer than the input in each step, the overall model using automatic functions is not subjected to the length-limitation constraint and covers all polynomial-time decidable languages when recognizing languages in polynomially many steps, therefore more powerful. In a side note, the authors left it as an open problem whether their model (in arbitrary time) can cover the whole of context-sensitive languages. We prove the answer is positive by the following proof.

Roughly speaking, an Iterated Uniform Finite-State Transducer is a finite-state transducer that processes the input in multiple sweeps; in the first sweep, it reads the input followed by an endmarker and outputs a word; in subsequent sweeps, it reads the output word of the previous sweep and outputs a new word. Formally, a nondeterministic iterated uniform finite-state transducer is a system $T = \langle Q,\Sigma,\Delta,q_0,\triangleleft,\delta,F_+,F_- \rangle$, where $Q$ is the set of internal states, $\Sigma$ is the set of input symbols, $\Delta$ is the set of
output symbols, $q_0 \in Q$ is the initial state, $\triangleleft \in \Delta \setminus \Sigma$ is the endmarker, $F_+ \subseteq Q$ is the
set of accepting states, $F_- \subseteq Q \setminus F_+$ is the set of rejecting states and $\delta: Q \times (\Sigma \cup \Delta) \mapsto 2^{Q \times \Delta}$ is the transition function (which is total on $(Q \setminus (F_+ \cup F_-)) \times (\Sigma \cup \Delta)$);
the endmarker $\triangleleft$ is output only if it is read. (One observes from this definition that any word emitted by an Iterated Uniform Finite-State Transducer has the same length as the input word.)  $T$ halts whenever the transition function is undefined or $T$ enters an accept
or reject state at the end of a sweep. A computation of the nondeterministic iterated uniform finite-state transducer $T$ on input $w \in \Sigma^*$ is a sequence of words $w_1,\ldots,w_i,w_{i+1},\ldots$ such that $w_1 \in T(w\triangleleft)$ and $w_{i+1}
\in T(w_i)$; the computation halts if there is an $r \geq 1$ such that $T$ halts on $w_r$.  The set of possible words output by $T$ in a complete sweep on input $w \in (\Sigma \cup \Delta)^*$ is denoted by $T(w)$.  The input word $w$ is accepted by $T$ iff all computations on $w$ halt and at least one computation halts in an accepting state; $w$ is rejected by $T$ iff all computations on $w$
halt and none of the computations halt in an accepted state.

The proof is quite similar to the proof of showing context-sensitive languages are in nondeterministic linear space. The essential idea
is to construct a transducer that reads and outputs a convolution of three components: first, the input word (whose membership in the given context-sensitive language is to be tested); second; a ``work-tape'' for simulating context-sensitive grammar derivations; third, a counter for measuring the length of a derivation. At the beginning, let us fix a constant-sized ordered set of digits (in the alphabet) for counting e.g. $\{0,1,\cdots,9\}$, such that this set has size equal to the total number of terminals and non-terminals plus one (this is to ensure that whenever there is a repetition-free derivation of the word, the process does not run out of the steps to be counted, but can complete the derivation). Here, $9$ acts as the largest digit but it does not mean that there are exactly $10$ digits. The input/output alphabet of the transducer consists of all convolutions of the form $\icol{a\\b\\c}$, where $a$ (resp.~$b$) either represents a letter in the alphabet of the input grammar or is a blank symbol, and $c$ is either a digit in the counting alphabet or a blank symbol.  The first run of the transducer translates the input to a same-length-word in the tape alphabet which is a convolution of the input, the counter ``$00\cdots 0$'' and the start symbol $S$ (using blank symbols as appropriate to ensure that all three components are of equal length). 
Now for any given $n$, one may determine from the input grammar the maximum length of a repetition-free derivation of a word of length $n$. The counter starts from ``$0000 \cdots$'' and goes increasingly to ``$9999 \cdots$'' in some sufficiently large alphabet with a word as long as the input and forces a reject, if the input has not been derived in the corresponding maximum possible number of steps in a repetition-free derivation. 
If a derivation of a symbol $A$ becomes, under context, a word $w$ of length at least $1$, the transducer will nondeterministically guess the distribution of its symbols over the position of $A$ so that they go into the target position. 
If there is no non-terminal in the word, we just simply increment the counter. The machine will accept if the generated word and the given word are equal (independently of the counter number). The machine will reject if the last counter (``$9999 \cdots$'') is reached, thus if there is any incorrect guess then the derivation will be stuck and eventually be rejected.  

Finally, we remark that every automatic function which maps inputs to outputs of the same length can be computed by a transducer which is length-preserving.  The reason is that an automatic function is verified by a synchronous nfa which reads one symbol of each the input and the output per cycle; now turning the ``reading of the symbol'' into ``writing of a symbol'' where the successor state depends in the same way as before on the two symbols processed (the input and the output symbol), turns the nfa into a transducer which in each cycle reads one input symbol and writes one output symbol. At the input-end-symbol, this nfa indicates that the sweep was successful iff the nfa on the sequence of all the inputs processed and outputs generated would say that the (input,output)-pair is in the automatic relation. This way one can translate the nfa recognising the automatic relation into a transducer given as a Mealy machine which computes a multivalued function of the possible next steps in the derivation from the given step (with the corresponding maintenance of the counter and comparison with the input-word). 

\section{Key Results in DARM}
\label{sec:darm}

\noindent
As we can translate a $\darm$ to an equivalent deterministic Turing machine which runs linear-time longer, theoretically it may be --- but not always --- possible for some languages to be recognized by $\darm$ in up to linear-time fewer steps than an established algorithm for such language in standard Turing machine steps. This section highlights several key results of those interesting languages in $\dal$ class.

\subsection{Deterministic Polynomial ARM and Lower Bound Technique}
\label{subsec:darm_key}

\noindent
We start with the following theorem which is a direct consequence of authors' result in another work \cite{Gao21}.

\begin{theorem} \label{the:darm_regular}
	$\dal[1] = Regular$.
\end{theorem}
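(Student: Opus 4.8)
The plan is to prove the two inclusions $\text{Regular} \subseteq \dal[1]$ and $\dal[1] \subseteq \text{Regular}$ separately. For the first inclusion, let $L$ be regular. Then the one-argument relation $\{x : x \in L\}$ is automatic, since for a single string the convolution is the string itself, so such a relation is automatic exactly when the underlying set is regular. I would then write a short $\darm$ with one register that (i) reads the input into $r_1$, (ii) performs a single conditional test $if~is\_in\_L(r_1)~then~goto(\ldots)$ branching to $halt\_accept$, and otherwise (iii) falls through to $halt\_reject$. This program halts after a fixed number of instructions independent of $|x|$, so $L \in \dal[1]$.

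The interesting direction is $\dal[1] \subseteq \text{Regular}$. Fix a $\darm~M$ that halts within $c$ steps on every input, for some constant $c$. The key invariant I would establish by induction on the number of executed steps is: at every point in the computation, the content of each register $r_i$ equals $g_i(x)$ for some automatic function $g_i$, where $x$ is the input. This holds initially (every register stores the constant $\varepsilon$), and it is preserved by each instruction type: $read(r_i)$ sets $g_i$ to the identity; $r_i \leftarrow w$ sets $g_i$ to a constant; $r_i \leftarrow r_j$ copies $g_j$ into $g_i$; and $r_i \leftarrow F(r_{j_1}, \ldots, r_{j_k})$ replaces $g_i$ by the composition $F(g_{j_1}, \ldots, g_{j_k})$, which is again automatic by Theorem \ref{the:first_order_definable}. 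Consequently, whenever $M$ evaluates a test $F(r_{j_1}, \ldots, r_{j_k})$, its truth value is governed by the set $\{x : F(g_{j_1}(x), \ldots, g_{j_k}(x)) \text{ holds}\}$, which is first-order definable from automatic relations and hence is a regular set of inputs.

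With this invariant in hand, I would analyse the finitely many execution paths. Because $M$ performs at most $c$ steps and the program has finitely many lines, there are only finitely many possible sequences of executed line numbers, each of length at most $c$. For a fixed such sequence $\pi$ ending in $halt\_accept$, the set $A_\pi$ of inputs driving $M$ along $\pi$ is exactly the set of $x$ making each conditional test along $\pi$ come out with the outcome prescribed by $\pi$; by the previous paragraph each such constraint is regular (using closure of regular sets under complement for the ``test fails'' branches), so $A_\pi$, being a finite intersection, is regular. The language recognised by $M$ is then $\bigcup_{\pi} A_\pi$ over the finitely many accepting paths $\pi$, a finite union of regular sets, hence regular.

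The main obstacle is the branching bookkeeping in the hard direction: one must verify that the register-as-automatic-function invariant is maintained along \emph{every} branch and that the predicate ``$x$ follows path $\pi$'' is genuinely a Boolean combination of regular conditions rather than something depending unboundedly on the computation history. The constant bound $c$ on the number of steps is essential here, as it guarantees finitely many paths and bounds the nesting of compositions, so that no appeal to closure of automatic functions under unbounded iteration (which fails, and is precisely what lets longer-time $\darm$'s escape the regular languages) is ever required.
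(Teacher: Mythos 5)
Your proof is correct, and it takes a genuinely different route from the paper's. The paper works with the single-operation representation $M = \langle \Gamma,\Sigma,G\rangle$, in which the entire program (including the line-number bookkeeping and hence all branching) is folded into one automatic function $G$ acting on the convolution of the line number and the registers; it then argues by induction on the step bound $k$, merging the final automatic step into the preceding configuration by a guess-and-verify product construction so that a $(k{+}1)$-step machine collapses to a $k$-step one. You instead stay with the multi-register, multi-instruction representation, maintain the invariant that along each fixed execution path every register holds an automatic function of the input, and decompose the accepted language as a finite union, over accepting paths, of Boolean combinations of regular test-sets. Both arguments ultimately rest on the same closure property (Theorem \ref{the:first_order_definable}: first-order combinations and compositions of automatic functions and relations are automatic, and a unary automatic relation is a regular set), so neither is more general than the other; but your path decomposition makes the branching bookkeeping explicit, which the paper's proof hides inside $G$, and you also supply the easy inclusion $\text{Regular} \subseteq \dal[1]$, which the paper leaves implicit. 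The cost is the extra observation that a constant step bound yields only finitely many execution paths and only boundedly nested compositions --- exactly the point you correctly flag as essential at the end.
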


\begin{proof}
	Let $P_k$ be the class of languages decided by a $k$-step $\darm$ program. We will show that languages in $P_k$ are regular over constant $k$ by induction. For $k=1$, any language $L \in P_1$ must be regular as the automatic function step is a Deterministic Finite Automaton (DFA) recognizing $L$. Now assume that all languages in $P_k$ are regular. For any language $L \in P_{k+1}$ and its $\darm ~ M = \langle \Gamma,\Sigma,G \rangle$, let $g_k$ be the content (parameters) of $G$ after $k$-th step. One can construct a Nondeterministic Finite Automaton (NFA) guessing $g_k$ then verify with the automatic function step (i.e. a DFA) on the $(k+1)$-th step whether to accept or reject the string. This whole process can be replaced by another DFA of exponential-larger size by processing the convolution of the content of $G$ generated from the first step to the $(k-1)$-th step. Note that the program now consists of $k$ steps and the last step is done by a DFA therefore $L \in P_{k+1}$ is regular. 
\end{proof}

\noindent
Mentioned briefly in Subsection \ref{subsec:arm_turing} and \ref{subsec:transducer_pspace}, we can also prove the following equality.

\begin{theorem} \label{the:dal_ptime}
	$\dal[poly(n)] = P$.
\end{theorem}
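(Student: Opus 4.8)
The plan is to prove the two inclusions $\dal[poly(n)] \subseteq P$ and $P \subseteq \dal[poly(n)]$ separately, in each direction passing between a $\darm$ and a deterministic Turing machine via Theorem \ref{the:turing_simulation_deterministic}, while carefully tracking the length of the register contents.

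For $\dal[poly(n)] \subseteq P$, I would take a language $L \in \dal[p(n)]$ for some polynomial $p$, decided by a $\darm$ $M = \langle \Gamma,\Sigma,G \rangle$ in single-operation form, so that each step applies the automatic function $G$ to the convolution of the (constantly many) line number and registers. The key observation is that every automatic function obeys a \emph{size constraint}: there is a constant $c$ with $|G(x)| \leq |x| + c$, since otherwise the convolution graph could not be recognised by a finite automaton (unboundedly many padding symbols would have to be counted). Hence after $t$ steps the length of the register convolution is at most $n + ct$, which for $t \leq p(n)$ is $\bigO(p(n))$, polynomial in $n$. By Theorem \ref{the:turing_simulation_deterministic}, each application of $G$ is simulated by a deterministic one-tape Turing machine in time linear in the current convolution length, i.e.\ $\bigO(p(n))$ per step; simulating all $p(n)$ steps therefore costs $\bigO(p(n)^2)$ time, so $L \in P$.

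For $P \subseteq \dal[poly(n)]$, let $L \in P$ be decided by a deterministic single-tape Turing machine $M'$ in time $p(n)$ (a multi-tape machine may first be converted to a single-tape one with at most polynomial slowdown, which does not affect membership in $P$). I would encode an entire configuration of $M'$ — the tape contents together with the head position and current state — as a single register string over an augmented alphabet, marking the scanned cell with the state information. One step of $M'$ rewrites only the marked cell, updates the state marker and shifts the mark by one position, possibly extending the tape by one blank cell; this configuration-update map is a local transformation that a finite automaton can carry out while scanning the convolution, and it increases the length by at most one, so it is an automatic function. Taking this map as the single operation $G$ of a $\darm$ and iterating it reproduces the computation of $M'$ one step per $\darm$-step, halting within $p(n)$ steps and accepting exactly when $M'$ does; hence $L \in \dal[p(n)] \subseteq \dal[poly(n)]$.

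Since both inclusions hold with only polynomial overhead and $\dal[poly(n)] = \bigcup_k \dal[n^k]$ while $P = \bigcup_k \mathrm{DTIME}[n^k]$, the two classes coincide. The step I expect to require the most care is the length bound in the forward inclusion: one must invoke the size-constraint property of automatic functions to guarantee that the register convolution stays polynomially bounded, since only then does the linear-time-per-step simulation of Theorem \ref{the:turing_simulation_deterministic} aggregate to polynomial total running time.
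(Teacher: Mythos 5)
Your proof is correct. The forward inclusion $\dal[poly(n)] \subseteq P$ is essentially the paper's own argument, just spelled out in more detail: the paper likewise notes that an automatic function step grows the register content by only a constant, so each step update is polynomial-time simulable and the whole run stays in $P$; your explicit invocation of the size-constraint property $|G(x)| \leq |x| + c$ and the resulting $\bigO(p(n)^2)$ bound is exactly the quantitative content the paper leaves implicit. For the reverse inclusion $P \subseteq \dal[poly(n)]$ you take a genuinely different route: the paper's proof simply cites Hartmanis and Simon, observing that $P$ is covered by polynomially many steps of basic RAM operations (addition, comparison, bit-wise operations) each of which is an automatic function, whereas you directly simulate a single-tape Turing machine by encoding an entire configuration in one register and realising the one-step transition relation as an automatic function. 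Your route is self-contained and is in fact the construction the paper itself sketches informally in Subsection 3.3 (``the content of the Turing tape is pasted as a single register \ldots and each Turing machine step is an automatic function''), so it buys independence from the external RAM result at the cost of having to verify that the local configuration-update map is automatic; the paper's citation-based argument is shorter but leans on prior work. Both are sound, and your concluding remark correctly identifies the length bound as the point where care is needed.
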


\begin{proof}
	To show $\dal[poly(n)] \subseteq P$, note that in a single step of an automatic function the space used grows by a constant, thus each step update of the registers can be done in polynomial time in the length of the register string. For $P \subseteq \dal[poly(n)]$, we refer to the result from Hartmanis and Simon \cite{Hartmanis74} where $P$ can be covered by polynomial many steps of basic operations which are in fact automatic functions without any use of indirect addressing.
\end{proof}

\noindent
We also give the definition of a crossing sequence by Hennie \cite{Hennie65} as a tool to compute a lower bound on the computational complexity of a language and then connect it in the context of the execution of automatic relations in $\darm$.

\begin{definition}[Crossing Sequence, Hennie \cite{Hennie65}]
	Given a deterministic one-tape Turing Machine and an input, a crossing sequence of the $i$-th cell is a sequence of states whenever the head of Turing Machine crosses between the $i$-th and $i+1$-th cells along the computation. Informally, a crossing sequence of the $i$-th cell describes how the Turing Machine carries the information between the left and right cells separated by the border of $i$-th and $i+1$-th cell. The total computation time is the sum of the lengths of all crossing sequences.
\end{definition}

\noindent
The connection of crossing sequences to the execution of automatic relations in a $\darm~ M = \langle \Gamma,\Sigma,G \rangle$ is the following. Each step in the execution of $G$ can be simulated by a double-pass of a deterministic one-tape Turing machine. The machine passes first forward and then backward over the content of the register to replace the convolution (of $r$ and $r'$, where $r'$ is the content of $r$ after the execution of $G$) by their updated versions in all cells. As the state may change at the end of each pass, hence the crossing sequence becomes only at most two symbols (states) longer in one step update. In general, a program line number update in $\darm$ can also be simulated similarly. Thus, the length of a $\dal[f(n)]$-computation gives a crossing sequence of length $\bigO(f(n))$. So a lower bound on the crossing sequence length gives a lower bound on the $\dal$-computation too.

\subsection{Context-Free Languages and Boolean Languages in DARM}

\noindent
We first mention the CYK algorithm, discovered independently by Cocke, Kasami, and Younger \cite{Aho72,Kasami66,Younger67}: a parsing algorithm for context-free languages based on a bottom-up approach of dynamic programming. Given a context-free language $L$ in its grammar $G = \grammar$ in Chomsky Normal Form, the algorithm tries to parse an input string $s = s_1 \cdots s_n$ with length $n$ using CYK triangular matrix. Here is an example of the CYK triangular matrix for $|s| = 4$.

\begin{center}
	\begin{tabular}{*{9}{c}}
		&&&&$\alpha_{1,4}$&&&\\
		&&&$\alpha_{1,3}$&&$\alpha_{2,4}$&&\\
		&&$\alpha_{1,2}$&&$\alpha_{2,3}$&&$\alpha_{3,4}$&\\
		&$\alpha_{1,1}$&&$\alpha_{2,2}$&&$\alpha_{3,3}$&&$\alpha_{4,4}$\\
		\hline
		$s = $&$s_1$&&$s_2$&&$s_3$&&$s_4$\\
		\hline
	\end{tabular}
\end{center}

\noindent
The variables $\alpha_{i,j}$ will be the set of all non-terminals generating substring $s_i \cdots s_j$ i.e. $\alpha_{i,j}=\{A ~ | ~ A \in N, ~ A \Rightarrow^* s_{i,j}\}$. The algorithm starts with the bottom row computing $\alpha_{i,i} = \{A ~ | ~ A \rightarrow s_{i,i} \in P\}$. Iteratively, $\alpha_{i,j}$ can be efficiently computed using its (lower-)left diagonal: $\alpha_{i,k}$ where $k < j$; and (lower-)right diagonal: $\alpha_{k,j}$ where $k > i$ . This is captured by the matrix such that $\alpha_{i,j} = \bigcup_{i \leq k < j} ~ \alpha_{i,k} \times \alpha_{k+1,j}$ where $X \times Y = \{A ~ | ~ B \in X, C \in Y, ~ A \rightarrow BC \in P\}$. Finally, $s \in L$ iff $S \in \alpha_{1,n}$. The algorithm clearly runs in $\bigO(n^3)$ in modern RAM model and translates to an $\bigO(n^4)$-algorithm on a one-tape deterministic Turing machine. We show that the problem needs less number of steps in $\darm$.

\begin{theorem} \label{the:cfg_dal_n2}
	Given a context-free language $L$ in its grammar $G = \grammar$ in Chomsky Normal Form, recognizing an input with length $n$ in $L$ can be done in $DAL[n^2]$.
\end{theorem}

\begin{proof}[Proof Sketch]
	The idea is similar to using 1-dimensional arrays in solving context-free languages in quadratic steps by Kosaraju \cite{Kosaraju75}. We simulate CYK algorithm but for each variables $\alpha_{i,j}$ we only need constant steps to compute $\bigcup_{i \leq k < j} ~ \alpha_{i,k} \times \alpha_{k+1,j}$ which is done in parallel. We present the $\darm$ algorithm as follows.
	
	For $0 \leq k < n$, let the $k$-th layer of CYK triangular matrix be the $(k + 1)$-th row from the bottom i.e. the set of $\alpha_{i,i+k}$ for $1 \leq i \leq n-k$. Assign two registers $R_l$ and $R_r$ in $\darm$ to be responsible of left and right diagonals respectively. The $\darm$ algorithm will run in $n$ iterations and process layer-by-layer from $0$-th to $(k-1)$-th. When it finished processing the $k$-th layer, the algorithm guarantees the following properties:
	
	\begin{enumerate}
		\item Started with some padding, $R_l$ will be a concatenation of all left diagonals up to the $k$-th layer (i.e. left-diagonals of $\alpha_{i,j}$ for $i \leq j \leq \min\{n,i+k\}$) written from left-to right and separated by some separator.
		\item $R_r$ will be a concatenation of all right diagonals up to the $k$-th layer (i.e. right-diagonals of $\alpha_{1,i}$ for $1 \leq i \leq k$ and $\alpha_{i,j}$ for $i \leq j \leq \min\{n,i+k\}$) written from left-to right and separated by some separator, followed by some padding.
		\item The separators in $R_l$ and $R_r$ are lined up in such a way that $\alpha_{1,1}$ in $R_l$ is matched with $\alpha_{2,2+k}$ in $R_r$. In other words, full ``blocks'' of diagonals are aligned to compute $\bigcup_{i \leq k < j} ~ \alpha_{i,k} \times \alpha_{k+1,j}$ in the next iteration.
	\end{enumerate}
	
	\noindent
	Below is the illustration of above properties when processing a string of length $4$ for the first three iterations. A dashed line denote a separator symbol. Note that each of possible subsets of $N$ of $G$ can be denoted by a single symbol in $\Gamma$ of $\darm~M$ thus each $\alpha_{i,j}$ is represented as a single symbol.
	
	\begin{center}
		\begin{tabular}{ c c || c : c : c : c : c }
			\multirow{2}{*}{$0:$} & $R_l$ & $\#$ & $\alpha_{1,1}$ & $\alpha_{2,2}$ & $\alpha_{3,3}$ & $\alpha_{4,4}$ \\ 
			& $R_r$ & $\alpha_{1,1}$ & $\alpha_{2,2}$ & $\alpha_{3,3}$ & $\alpha_{4,4}$ & $\#$
		\end{tabular}
	\end{center}
	
	\begin{center}
		\begin{tabular}{ c c || c : c c : c c : c c : c c : c }
			\multirow{2}{*}{$1:$} & $R_l$ & $\#$ & $\#$ & $\#$ & $\alpha_{1,1}$ & $\alpha_{1,2}$ & $\alpha_{2,2}$ & $\alpha_{2,3}$ & $\alpha_{3,3}$ & $\alpha_{3,4}$ & $\alpha_{4,4}$ \\ 
			& $R_r$ & $\alpha_{1,1}$ & $\alpha_{1,2}$ & $\alpha_{2,2}$ & $\alpha_{2,3}$ & $\alpha_{3,3}$ & $\alpha_{3,4}$ & $\alpha_{4,4}$ & $\#$ & $\#$ & $\#$
		\end{tabular}
	\end{center}
	
	\begin{center}
		\small\begin{tabular}{ c c || c : c c : c c c : c c c : c c c : c c : c }
			\multirow{2}{*}{$2:$} & $R_l$ & $\#$ & $\#$ & $\#$ & $\#$ & $\#$ & $\#$ & $\alpha_{1,1}$ & $\alpha_{1,2}$ & $\alpha_{1,3}$ & $\alpha_{2,2}$ & $\alpha_{2,3}$ & $\alpha_{2,4}$ & $\alpha_{3,3}$ & $\alpha_{3,4}$ & $\alpha_{4,4}$ \\ 
			& $R_r$ & $\alpha_{1,1}$ & $\alpha_{1,2}$ & $\alpha_{2,2}$ & $\alpha_{1,3}$ & $\alpha_{2,3}$ & $\alpha_{3,3}$ & $\alpha_{2,4}$ & $\alpha_{3,4}$ & $\alpha_{4,4}$ & $\#$ & $\#$ & $\#$ & $\#$ & $\#$ & $\#$
		\end{tabular}
	\end{center}
	
	\noindent
	By the construction, the final iteration will eventually compute $\alpha_{1,n}$ thus the membership of $s$ is tested by checking whether $S \in \alpha_{1,n}$. To process $0$-th layer in particular, it can be done in linear steps by first transforming $s_i$ to $\alpha_{i,i}$, adding one padding symbol in the beginning of $R_l$ and at the end of $R_r$, then adding $n$ many times the separators.
	
	To process $k$-th layer from $(k-1)$-th layer, one can in a single automatic step compute all variables in the $k$-th layer by looking at the diagonals formed by the pairs in $R_l$ and $R_r$ --- ignoring incomplete blocks of diagonals. Note that $\bigcup_{i \leq k < j} ~ \alpha_{i,k} \times \alpha_{k+1,j}$ could be computed in that single step as there is an automaton recognizing the union of subsets of $N$ given that each $\alpha_{i,k}$ and $\alpha_{k+1,j}$ are already convoluted therefore the automaton can also compute a subset $\alpha_{i,k} \times \alpha_{k+1,j}$. After reading the separator, the variable $\alpha_{i,i+k}$ can be formed and convoluted to the separator (or stored in the third register $R_t$) and it continues to form the next variables. After all variables in the $k$-th layer is formed, the algorithm needs to put it on the correct positions: to the right of its left diagonal and to the left of its right diagonal. As its diagonals are directly before the separator, inserting it to the correct position can also be done in constant steps. Lastly, some paddings are added in the beginning of $R_l$ and at the end of $R_r$ such that $\alpha_{1,1}$ in $R_l$ is matched with $\alpha_{2,2+k}$ in $R_r$. Adding one padding needs one step, in addition, some separators might be added as well.
	
	Finally, to compute the total complexity time, one can observe that all operations are constant and could be reduced to how many times it inserts a symbol into $R_l$ and $R_r$. As the length of both will be equal to the size of CYK triangular matrix, therefore the algorithm runs in $\dal[n^2]$.
\end{proof}

\noindent
We can further extend the proof in Theorem \ref{the:cfg_dal_n2} for Boolean languages.

\begin{theorem} \label{the:bool_dal_n2}
	Given a Boolean language $L$ in its grammar $G = \grammar$ in Binary Normal Form, recognizing an input with length $n$ in $L$ can be done in $DAL[n^2]$.
\end{theorem}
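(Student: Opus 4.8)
The plan is to reuse, essentially verbatim, the layer-by-layer diagonal-register construction from the proof of Theorem~\ref{the:cfg_dal_n2}, changing only the single automatic function that is applied when passing from layer $k-1$ to layer $k$. I would keep the two registers $R_l$ and $R_r$ holding, respectively, the left and right diagonals of the CYK matrix, aligned by separators and padding so that when computing $\alpha_{i,i+k}$ the entry $\alpha_{i,l}$ sitting in $R_l$ is convoluted with $\alpha_{l+1,i+k}$ sitting in $R_r$ for every split point $i \le l < i+k$. As before, each $\alpha_{i,j} = \{A : A \Rightarrow^* s_{i,j}\}$ is encoded as one symbol of $\Gamma$, since $N$ is fixed and there are only constantly many subsets of $N$.

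The only genuinely new ingredient is the recurrence. In Binary Normal Form the rules for a nonterminal have the shape $A \rightarrow B_1C_1 ~\&~ \cdots ~\&~ B_mC_m ~\&~ \lnot D_1E_1 ~\&~ \cdots ~\&~ \lnot D_nE_n ~\&~ \lnot\varepsilon$, and the correct Boolean CYK recurrence puts $A$ into $\alpha_{i,j}$ iff some such rule has every positive conjunct $B_rC_r$ witnessed at some split (that is, $B_r \in \alpha_{i,l}$ and $C_r \in \alpha_{l+1,j}$ for some $i \le l < j$) and no negative conjunct $D_sE_s$ witnessed at any split. Since every nonterminal occurring inside a binary conjunct derives only nonempty words, each split cuts $s_{i,j}$ into two strictly shorter nonempty factors, so the recurrence is well-founded and $\alpha_{i,j}$ depends only on lower layers, exactly as in the context-free case. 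The observation I would exploit is that all the information the recurrence ever consults about the splits is packed into the single constant-size set $U_{i,j} := \bigcup_{i \le l < j} \{BC : B \in \alpha_{i,l},\, C \in \alpha_{l+1,j}\} \subseteq N \times N$: a positive conjunct is satisfied exactly when it belongs to $U_{i,j}$, and a negative conjunct exactly when it does not.

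With this in hand the new step is immediate. As the synchronous automaton sweeps the aligned block of $R_l$ and $R_r$ for a fixed $i$, it accumulates $U_{i,j}$ in its finite control---at each aligned pair it throws in the constantly many forms $BC$ witnessed there---and on reading the separator closing the block it emits the symbol encoding $\alpha_{i,i+k}$, obtained from $U_{i,j}$ by the fixed finite table realising the rule logic above. As $N$ is fixed, $U_{i,j}$ ranges over a constant set, so the accumulation fits in finite state and is never written to the tape, and $U_{i,j} \mapsto \alpha_{i,i+k}$ is a constant relabelling; the layer update is therefore computed by a synchronous automaton and is automatic (alternatively, it is first-order definable from the automatic operations and hence automatic by Theorem~\ref{the:first_order_definable}). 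All bookkeeping for separators, padding and diagonal insertion is unchanged from Theorem~\ref{the:cfg_dal_n2}, so the running time is again dominated by the number of symbols inserted into $R_l$ and $R_r$, which equals the size of the CYK matrix, giving $\dal[n^2]$.

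The point I would watch most carefully is the treatment of the negative conjuncts: a negated form $D_sE_s$ must fail at \emph{every} split, so the rule logic may be applied only after the full union $U_{i,j}$ over all $i \le l < j$ has been collected. This is precisely why the automaton accumulates the whole block before emitting $\alpha_{i,i+k}$ at the separator rather than deciding split-by-split; a partial union could wrongly treat a negative conjunct as satisfied. Apart from this, and from invoking well-foundedness of the Binary Normal Form recurrence to guarantee that the bottom-up table is the intended one, the argument is a direct transcription of the proof of Theorem~\ref{the:cfg_dal_n2}.
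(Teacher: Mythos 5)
Your proposal is correct and follows essentially the same route as the paper: reuse the $R_l$/$R_r$ layer-by-layer construction of Theorem~\ref{the:cfg_dal_n2} unchanged and modify only the finite automaton computing each $\alpha_{i,j}$ to handle Okhotin's Boolean CYK recurrence, which costs only a constant-factor blow-up in states and preserves the $\dal[n^2]$ bound. Your explicit treatment of the negative conjuncts (accumulating the full union $U_{i,j}$ before emitting the entry) is a more careful spelling-out of what the paper compresses into ``introducing an additional Boolean flag,'' but it is not a different argument.
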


\begin{proof}
	The CYK algorithm can be extended to parse Boolean grammars by Okhotin's work \cite{Okhotin13}. The algorithm is exactly the same with some modifications on computing $\alpha_{i,j}$ as now it allows conjunctions and negations. The algorithm also runs in $\bigO(n^3)$ in modern RAM model. In $\darm$, we can only modify the automaton when computing $\alpha_{i,j}$. Note that set operations such as intersection and complement can be done by introducing an additional Boolean flag and that will only multiply the number of states in the automaton by a factor of constant. Therefore the complexity will  still be $\bigO(n^2)$.
\end{proof}

\begin{remark}
	Note that the runtime complexity in Theorem \ref{the:cfg_dal_n2} and \ref{the:bool_dal_n2} still translates to an $\bigO(n^4)$-algorithms on a one-tape Turing machine due to the length of the register becoming $\bigO(n^2)$.
\end{remark}

\subsection{Multi-Sources Connectivity Problem in DARM}

\noindent
We define multi-source connectivity problem as the following.

\begin{definition}[Multi-Source Connectivity Problem]
	Given a directed graph $G = (V,E)$ with $n$ vertices $\{1,2,\cdots,n\}$, and some vertices as sources $S \subseteq V$, Multi-Source Connectivity Problem is to find all vertices reachable from at least one of the vertices in $S$.
\end{definition}

\noindent
We then define a string format of a directed graph $G = (V,E)$ with $n$ vertices $\{1,2,\cdots,n\}$ as the following. Note that `$|$' is a separator symbol.

\begin{center}
	\begin{tabular}{ | c | c c c c | c | c c c c | c | c | c c c }
		$v_1$ & $e_{1,1}$ & $e_{1,2}$ & $\cdots$ & $e_{1,n}$ & $v_2$ & $e_{2,1}$ & $e_{2,2}$ & $\cdots$ & $e_{2,n}$ & $\cdots$ & $v_n$ & $e_{n,1}$ & $\cdots$ & $e_{n,n}$
	\end{tabular}
\end{center}

\noindent
This format represents an adjacency matrix of $G$ which is written row-by-row. For each row, it starts with the vertex number $v_i = i$ followed by its neighbours: $e_{i,j} = j$ if $(i,j) \in E$, otherwise $e_{i,j} = 0$. Note that $v_i$ and $e_{i,j}$ are written in binary form thus each have length $\log n$. In this way, the string format of $G$ with $n$ vertices has total length of $\bigO(n^2 \log n)$.

As an input of multi-source connectivity problem, in addition to above string of $G$, we can also have a separate input for $S$: a string of length $n$ denoting the membership of each vertices in $S$ where the $i$-th symbol is $1$ iff $i \in S$ (else, $i$-th symbol will be $0$). Alternatively, if we just want the input to be a single string, all symbols of $v_i$ in the string of $G$ will be convoluted by a tick symbol $\checkmark$ (i.e. $| \binom{v_i}{\checkmark} |$) iff $i \in S$. Note that both input representations are equivalent and interchangeable with an additional $\bigO(n)$ simple automatic function steps to construct one from the other. To solve the multi-source connectivity problem, the algorithm must tick all reachable vertices by the similar convolution (or alternatively construct the membership string of length $n$ of reachable vertices). We then have the following results.

\begin{theorem} \label{the:connectivity}
	Multi-source connectivity problem of a directed graph with $n$ vertices can be solved in $DAL[n \log n]$.
\end{theorem}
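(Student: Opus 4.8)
The plan is to compute reachability by a breadth-first fixed-point iteration, exploiting that every vertex reachable from $S$ is reachable by a path of length at most $n-1$; hence $n-1$ \emph{hops} of frontier expansion suffice. I represent the current reachable set exactly as in the input convention, namely by the tick annotation on the vertex fields $v_i$ (equivalently by a length-$n$ membership string). The initial frontier is simply the source set $S$, which is already available in the input in this form, so no preprocessing beyond the $\bigO(n)$ conversion already noted for the two input encodings is needed. The algorithm then performs $n$ rounds, each round replacing the reachable set $R$ by $R$ together with the out-neighbourhood of $R$.

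A single hop decomposes into operations that are manifestly automatic, so that by Theorem~\ref{the:first_order_definable} their composition is again a single automatic step. First, scanning the row-major string once from left to right, I carry a single bit recording whether the source vertex $v_i$ of the current row is reachable; this bit is reset at each new row leader. Using it I \emph{activate} every present edge entry $e_{i,j}=j$ that lies in a reachable row. Observe that the edge block of row $i$ is precisely a length-$n$ indicator of the out-neighbours of $i$, and that each present entry literally carries its target label $j$ (since $e_{i,j}=j$ while $v_j=j$). Because a bitwise disjunction of two aligned length-$n$ indicator vectors, as well as a symbol-by-symbol comparison of two $(\log n)$-bit labels, are each computable in one automatic step, the only remaining content of a hop is to combine the activated edges into the updated reachable set.

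The crux --- and the source of the $\log n$ factor --- is this combination, which is a \emph{scatter indexed by a $(\log n)$-bit key}: the activation created at the position of an edge $e_{i,j}$ must be delivered to the vertex field $v_j$, equivalently the reachable vector must be transposed and broadcast across the row-major adjacency matrix. No single automatic step can perform this delivery, since an automatic function has only finite memory and can lengthen its argument by at most a constant, so it cannot move a mark across an address that depends on an arbitrary label. I would therefore realise the scatter (equivalently the transpose/broadcast) by a divide-and-conquer, recursive-halving data movement on the vertex range, which routes all activations to their targets in $\bigO(\log n)$ automatic passes per hop, matching an edge label $j$ against the vertex labels along the $\log n$ bits of the key. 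Summing over the $n$ hops gives $\bigO(n\log n)$ automatic steps, placing the multi-source connectivity problem in $\dal[n\log n]$, in the same spirit as the one-dimensional-array method of Theorem~\ref{the:cfg_dal_n2}. I expect this label-indexed scatter, carried out under the constant-length-change and finite-memory constraints of automatic functions, to be the main obstacle; the remaining bookkeeping (frontier initialisation, per-row activation, and termination after $n$ rounds) is routine, and its single-step legitimacy can in each case be checked against Theorem~\ref{the:turing_simulation_deterministic}.
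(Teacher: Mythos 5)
Your overall strategy differs from the paper's in a way that matters for the time bound, and the step you yourself flag as the crux does not work as claimed. You expand the entire frontier in each of $n$ hops and assert that the ``scatter'' --- delivering an activation from every edge entry $e_{i,j}$ in a reachable row to the vertex field $v_j$ --- costs $\bigO(\log n)$ automatic steps via recursive halving. But a full frontier expansion can require $\Omega(n)$ independent bits of information to cross a fixed boundary of the register string: take all of rows $1,\dots,n/2$ already reachable, with their edges pointing to an arbitrary subset of $\{n/2+1,\dots,n\}$; the set of newly ticked vertex fields in the right half is then an arbitrary $n/2$-bit vector determined entirely by edge entries in the left half. By the crossing-sequence argument of Subsection~\ref{subsec:darm_key} (the same one used for the lower bounds in Theorems~\ref{the:daglower} and~\ref{the:sortdal}), each automatic step lengthens the crossing sequence at that boundary by only $\bigO(1)$, so a single such hop needs $\Omega(n)$ steps, not $\bigO(\log n)$. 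Your scheme therefore yields $\bigO(n^2)$ overall, not $\bigO(n\log n)$. Relatedly, the ``bitwise disjunction of two aligned length-$n$ indicator vectors'' is not available in one step, because the edge blocks of different rows sit at different offsets of the row-major string and automatic functions act position-synchronously on convolutions; aligning them is exactly the transposition you cannot afford.

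The paper avoids this by never expanding a whole frontier at once. Its flood-fill picks \emph{one} ticked vertex $u$ with an outgoing edge and \emph{one} target $v$ with $e_{u,v}=v$, copies the single $(\log n)$-bit label $v$ into a register, broadcasts it bit-by-bit in parallel against all vertex and edge fields in $\bigO(\log n)$ steps to tick $v_v$, and then deletes \emph{all} in-edges of $v$. The deletion guarantees $v$ can never be selected as a target again, so there are at most $n$ iterations, each moving only $\bigO(\log n)$ bits across any boundary, giving $\dal[n\log n]$. To repair your argument you would either have to adopt this one-vertex-at-a-time discovery (with the edge-deletion device to bound the iteration count), or prove an amortised bound in which a hop's cost is proportional to the number of vertices it newly discovers times $\log n$ --- which your recursive-halving router, as described, does not provide.
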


\begin{proof}
	We use the input format with a single string (i.e. the vertices in $S$ is convoluted with tick symbol) and try to tick all reachable vertices. We give the following algorithm:
	
	\begin{enumerate}
		\item Find any ticked vertex $u$ having at least one out-going edge. If no such vertex exists, then terminate the algorithm.
		\item Find any $v$ such that $e_{u,v} = v$.
		\item Tick vertex $v$ (even if $v$ is already ticked).
		\item Remove all edges going to $v$ and repeat the algorithm.
	\end{enumerate}
	
	\noindent
	The algorithm uses the concept of flood-fill where it relays the connectivity from any visited (i.e. ticked) vertex to another vertex, shown by points $1$, $2$, and $3$. By removing all in-going edges to vertex $v$ in point $4$, $v$ will be guaranteed to be traversed at most once thus the algorithm will repeat at most $n$ times.
	
	Now we will calculate the time complexity. Point $1$ can trivially be done in a constant number of steps. Point $2$ is done similarly as the continuation. But in addition, the vertex $v$ is copied to another register in $O(\log~n)$ steps. The copy of $v$ then compared to each $v_i$ in parallel and the matched vertex number will be ticked in point $3$.  By ``comparing in parallel'' we mean here that the first character of $v$ is matched to the first character of every vertex, then the second character of $v$ is matched to the second character of every vertex, and so.  As each matching takes one round and $v$ is represented by $\log(n)$ bits, these comparisons need only $O(\log~n)$ steps. Point $4$ is done in the same fashion by comparing $v$ for each $e_{i,j}$ then set the matched ones to $0$, again in $O(\log~n)$ steps. As there are at most $n$ iterations, therefore the overall time complexity is $DAL[n \log n]$.
\end{proof}

\begin{corollary} \label{cor:bfs_dal_nlogn}
	Any graph problem with $n$ vertices whose solution relies on Breadth-First-Search (BFS) such as Shortest Path in Unweighted Graph, Cycle Detection, Bipartite Checking, and Eulerian Checking; can be solved in $DAL[n\log n]$.
\end{corollary}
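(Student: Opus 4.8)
The plan is to show that every listed problem reduces to one (or constantly many) graph traversals of the same flood-fill type used in the proof of Theorem \ref{the:connectivity}, augmented with a bounded amount of per-vertex bookkeeping that costs only $\bigO(\log n)$ per iteration. First I would isolate the computational primitive underlying Theorem \ref{the:connectivity}: in a single automatic step the machine scans the whole convolution left-to-right and locates the first edge matching a prescribed pattern, after which it can copy out a vertex number and update a label on every matching vertex by the symbol-by-symbol parallel comparison described there, all in $\bigO(\log n)$ steps. Since the traversal discovers each of the at most $n$ reachable vertices exactly once, any procedure performing only $\bigO(\log n)$ extra work per discovery, together with at most $\bigO(n)$ bulk relabelling passes of cost $\bigO(\log n)$ each, stays within $\dal[n \log n]$.

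For bipartite checking, cycle detection and Eulerian checking the traversal order is irrelevant, so the flood-fill of Theorem \ref{the:connectivity} can be reused essentially verbatim. For bipartiteness I would attach a single colour bit to each vertex, setting a newly discovered vertex to the complement of the colour of the vertex from which it is reached, and then run one $\bigO(n \log n)$ verification pass that checks, via the same parallel comparison, that no retained edge joins two equally coloured vertices. For (undirected) cycle detection I would let the traversal flag whether an examined edge ever leads to an already-ticked vertex other than its discovery vertex, which is detectable exactly when that edge is inspected. For Eulerian checking I would run Theorem \ref{the:connectivity} to settle connectivity and add one $\bigO(n \log n)$ pass that, for each vertex, compares its in-degree with its out-degree (or tests degree parity in the undirected case); counting and comparing these $\log n$-bit quantities in parallel fits the budget.

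The hard part will be Shortest Path in an unweighted graph, because the distance recorded at a vertex must equal its BFS layer, whereas the flood-fill of Theorem \ref{the:connectivity} discovers vertices in an arbitrary order and would therefore assign incorrect distances. To repair this I would process the graph in explicit layers: maintain a frontier marker on the vertices of the current layer together with a $\log n$-bit distance label on each vertex, and in layer $d$ repeatedly locate in one scan an edge from a frontier vertex to an as-yet-unvisited vertex, mark that vertex with distance $d$, and continue until no such edge remains; then, in one bulk pass, demote the old frontier, promote the newly discovered vertices to the frontier, and increment $d$. Each vertex is labelled exactly once with its correct BFS distance, so there are at most $n$ labelling iterations and at most $n$ layer transitions, each costing $\bigO(\log n)$, giving $\dal[n \log n]$ overall. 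The delicate point to verify is precisely that maintaining the frontier markers and the variable label $d$ can be folded into the automatic-function updates without exceeding the $\bigO(\log n)$-per-step budget; once this is checked, the corollary follows, since every listed problem has been reduced to one of these $\dal[n \log n]$ routines.
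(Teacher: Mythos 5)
Your proposal is correct and follows essentially the same route as the paper: the paper's proof likewise modifies the flood-fill of Theorem \ref{the:connectivity} by convoluting each vertex with a $\log n$-bit distance label instead of a tick and maintaining a layer counter in a register, so that vertices are discovered in increasing order of distance, with the cost per discovery and per layer transition remaining $\bigO(\log n)$. Your additional per-problem reductions (colour bits for bipartiteness, back-edge flags for cycle detection, degree comparisons for Eulerian checking) are left implicit in the paper but are consistent with its argument and fit within the stated budget.
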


\begin{proof}
	BFS can be done with some modifications of the algorithm of Theorem \ref{the:connectivity}. Instead of convoluting $v_i$ with a tick, $v_i$ will be convoluted with its distance, starting from $0$ on the source. Assign a register as a counter $cnt$, initialized with $0$. The modified algorithm is then the following:
	
	\begin{enumerate}
		\item Find any vertex $u$ with distance $= cnt$ and having at least one out-going edge. If no such vertex exists and $cnt < n$, then increment $cnt$. If no such vertex exists and $cnt = n$, then terminate the algorithm.
		\item Find any $v$ such that $e_{u,v} = v$.
		\item If the distance of vertex $v$ is not set yet, set it to $cnt + 1$.
		\item Remove all edges going to $v$ and repeat the algorithm.
	\end{enumerate}
	
	\noindent
	The proof of correctness is still the same as the previous algorithm but now "the ticks" are done in increasing order of the distance, instead of in an arbitrary order. The time complexity remains the same.
\end{proof}

\noindent
We also show that, if the graph is acyclic then a faster solution exists given that the graph is in a nice format.

\begin{definition}[Sorted-Directed-Acyclic-Graph]
	A Sorted-Directed-Acyclic-Graph (SDAG) is defined as a directed graph with no cycles and each edge $(i,j) \in E$ satisfies $i < j$. Note that every directed acyclic graph can be converted into its respective SDAG e.g. by performing a topological sort to relabel the vertex numbers.
\end{definition}

\begin{theorem} \label{the:sdag_dal_n}
	Multi-source connectivity problem of a SDAG with $n$ vertices can be solved in $DAL[n]$.
\end{theorem}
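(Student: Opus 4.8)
The plan is to exploit the defining property of an SDAG, namely that the natural vertex ordering $1,2,\ldots,n$ is itself a topological order: every edge $(i,j)$ satisfies $i<j$. This means reachability can be computed in a \emph{single} forward sweep over the vertices, in place of the $n$ flood-fill rounds used in Theorem~\ref{the:connectivity}. First I would set up the sweep invariant. Processing the rows of the adjacency matrix in the order $i=1,\ldots,n$, I maintain a reachability indicator (the ticks) with the property that, immediately before row $i$ is processed, the tick on every vertex $\le i$ already equals its true reachability value. This invariant holds at the start (only the sources are ticked) and is preserved because the reachability of vertex $i$ depends solely on edges coming from vertices $<i$, all of which have already been handled; hence when the sweep reaches row $i$, its tick is final. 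The action at row $i$ is then: if $v_i$ is ticked, set the tick on every out-neighbour $j$ (the columns with $e_{i,j}=j$). After the $n$-th row the ticks encode exactly the reachable set, which is read off.

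The crux of the time bound is that each of the $n$ rows can be handled in $\bigO(1)$ \darm\ steps, so that the whole computation runs in $\dal[n]$. This is precisely where the saving over Theorem~\ref{the:connectivity} comes from: there the $\bigO(\log n)$ cost per iteration arose from copying a $\log n$-bit vertex label and matching it, symbol by symbol, against all vertex labels in order to locate the vertex to be ticked. In an SDAG no such search is needed, because sortedness makes the relevant vertex available \emph{positionally}. I would advance a single positional marker one vertex per iteration (so ``vertex $i$'' is just the current block), step past row $i$ of the matrix, test the tick of $v_i$ in one step, and, conditioned on that bit, fold row $i$'s out-neighbour pattern into the reachability indicator by one position-wise automatic operation. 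Each of these is a single application of an automatic function to the convolution of the relevant registers, and there is a constant number of them per row. Correctness of the whole sweep then follows by induction on $i$ from the invariant above.

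The main obstacle is aligning the reachability information with the adjacency data at the right granularity, so that ``mark all out-neighbours of $i$'' genuinely becomes one automatic step. The difficulty is that vertex labels and matrix cells are $\log n$ bits wide, whereas a reachability vector is most naturally one symbol per vertex; a fixed finite automaton cannot detect the $\log n$-wide cell boundaries by counting, and — since an automatic function may only lengthen its argument by an additive constant — I can afford neither to insert separators between all $\Theta(n^2)$ cells nor to bit-compress every cell within the $\bigO(n)$-step budget. The plan to overcome this is to keep the reachability indicator as a dedicated track \emph{convoluted with the matrix at the same width as its cells}, and to realise the per-row update as a purely position-wise track operation that needs no boundary counting (for instance, by propagating the reachability marker along aligned diagonals exactly as the layers are aligned in the proof of Theorem~\ref{the:cfg_dal_n2}). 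Once this representation is fixed, the time bound is immediate: each row contributes $\bigO(1)$ steps and there are $n$ rows, together with an $\bigO(n)$ set-up phase for the ticks and an $\bigO(1)$ read-off of the answer, giving $\dal[n]$ overall.
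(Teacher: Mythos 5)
Your high-level algorithm is sound and is essentially the mirror image of the paper's: both exploit that $1,\ldots,n$ is already a topological order and perform one sweep with the invariant that vertex $j$'s tick is final by the time stage $j$ is reached, and your induction is correct. The genuine gap is the claim that each row costs $\bigO(1)$ \darm\ steps, which is the entire content of the theorem. Your stage $i$ must write a tick on \emph{every} out-neighbour $j$ of $i$, and those ticks live at positions (the $v_j$ blocks) whose offsets from the cells $e_{i,1},\ldots,e_{i,n}$ of row $i$ are unbounded and vary with $j$; a synchronous automaton reading a convolution cannot realise $n$ simultaneous correspondences of the form ``cell $j$ of row $i$ $\leftrightarrow$ header of row $j$'' --- this is a transpose-like operation and is not an automatic relation. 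Your proposed fix, keeping the reachability indicator as a track aligned with the cells of the \emph{current} row, does make the stage-$i$ update local, but the track must then be re-aligned with row $i+1$ before the next stage, i.e.\ shifted by a full row length of roughly $(n+1)\log n$ symbols. A shift by an unbounded, input-dependent amount is not an automatic relation, and since an automatic function can change the length of its argument only by an additive constant, this re-alignment cannot be hidden in $\bigO(1)$ steps per row; the analogy with Theorem~\ref{the:cfg_dal_n2} does not carry over, because there the two tracks are realigned by inserting only $\bigO(1)$ padding symbols per iteration.

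The paper sidesteps all of this by sweeping \emph{columns} rather than rows: at stage $j$ it asks whether some already-ticked $v_i$ has $e_{i,j}\neq 0$. It maintains a marker at $v_j$ and a marker at the $j$-th cell of every row (each marker advances by exactly one cell per stage, a single automatic step), so the condition ``$v_i$ is ticked and $e_{i,j}\neq 0$'' is checked locally inside row $i$'s block --- the tick of $v_i$ precedes the marked cell $e_{i,j}$ within that block, so constant automaton state correlates them --- the results over all $i$ are combined by one existential quantifier, and only \emph{one} tick, at the single marked position $v_j$, is written per stage. That is what makes the per-stage update a genuine single automatic function, in contrast to the one-to-many scatter of ticks your out-edge formulation requires. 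To repair your proof you should either switch to this in-edge formulation or exhibit a concrete automatic relation performing the scatter; as written, the $\bigO(1)$-per-row claim is unsupported.
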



\begin{proof}
	We give the following algorithm in high-level. Assign two new registers denoted as $R_v$ and $R_e$ which has the same length as the input string. Initially, $R_v$ will have exactly a single marker symbol (denoted as $\uparrow$) located at the same position as the first symbol of $v_1$. $R_e$ instead will have exactly $n$ marker symbols located at the same positions as all the first symbol of $e_{i,1}$ for $1 \leq i \leq n$. All other symbols in $R_v$ and $R_e$ are blank.
	
	Now, for the next $n$ iterations, we maintain the markings in $R_v$ and $R_e$ such that on the $j$-th iteration: $R_v$ will have exactly one single marker at the first symbol of $v_j$ and $R_e$ will have exactly $n$ markers at the first symbol of each $e_{i,j}$ for $1 \leq i \leq n$. Here is the illustration of the movement of the markers. Note that, as we only mark the first symbol, the movement can be done in constant steps by moving the markers to their next spot in parallel.
	
	\begin{center}
		\small\begin{tabular}{c : c  || c | c c c c | c | c c c c | c | c | c c c c }
			Iteration &$G$ & $v_1$ & $e_{1,1}$ & $e_{1,2}$ & $\cdots$ & $e_{1,n}$ & $v_2$ & $e_{2,1}$ & $e_{2,2}$ & $\cdots$ & $e_{2,n}$ & $\cdots$ & $v_n$ & $e_{n,1}$ & $e_{n,2}$ & $\cdots$ & $e_{n,n}$ \\
			\hline
			\multirow{2}{*}{$1$}&$R_v$ & $\uparrow$ & & & $\cdots$ & & & & & $\cdots$ &  & $\cdots$ & &  & & & \\
			&$R_e$ &  & $\uparrow$ & & $\cdots$ & & & $\uparrow$ & & $\cdots$ &  & $\cdots$ & & $\uparrow$ & & & \\
			\hline
			\multirow{2}{*}{$2$}&$R_v$ & & &  & $\cdots$ & & $\uparrow$ & &  & $\cdots$ & & $\cdots$ & &  & & & \\
			&$R_e$ & & & $\uparrow$ & $\cdots$ & &  & & $\uparrow$ & $\cdots$ & & $\cdots$ & & & $\uparrow$ & &\\
			\hline
			$\cdots$ & & & &  & & &  & &  &  & & & & & & &\\
			\hline
			\multirow{2}{*}{$n$}&$R_v$ & & & & $\cdots$ &  & & & & $\cdots$ &  & $\cdots$ & $\uparrow$ & & & &  \\
			&$R_e$ & & & & $\cdots$ & $\uparrow$ & & & & $\cdots$ & $\uparrow$ & $\cdots$ &  & & & & $\uparrow$
		\end{tabular}
	\end{center}
	
	After the marking is maintained on the $j$-th iteration, our next job is to check whether $j$ is a reachable vertex. This can be done by checking if there exists any ticked (with $\checkmark$) vertex number $v_i$ that has an out-going edge to $j$ (i.e. $v_{i,j} = j$), if yes then tick vertex number $v_j$. Note that we only need constant steps as all relevant variables are already marked. After $n$ iterations, the program terminates and we show that all reachable vertex have been ticked.
	
	The proof of correctness is by an induction of this statement: after $j$-th iteration, all reachable vertices $i \leq j$ are all ticked. The base case of $j = 1$ is trivial because $1$ is reachable if and only if $1 \in S$ (which holds if and only if vertex $1$ is already ticked at the beginning). For the inductive step, suppose the statement is already correct for $j = k$. The $k+1$-th iteration checks whether vertex $k+1$ is reachable and that is true iff there exists a reachable vertex $i$ going to $k+1$ i.e. $v_{i,k+1}=k+1$. As the graph is SDAG, $i$ must be less than $k+1$. As we already correctly ticked the reachable vertices up to vertex number $k$, therefore the reachability of $k+1$ can be fully checked.
\end{proof}

\noindent
The upper bound of the multi-source connectivity problem is almost tight. In fact, it is match with the lower bound if the graph is a SDAG by the following theorem.

\begin{theorem} \label{the:daglower}
	The lower bound of multi-source connectivity of a directed graph with $n$ vertices is $\dal[\Omega(n)]$.
\end{theorem}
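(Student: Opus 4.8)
The plan is to prove the lower bound by a crossing-sequence / information-transfer argument, using the correspondence established in Subsection~\ref{subsec:darm_key}: a $\dal[f(n)]$-computation corresponds to a deterministic one-tape Turing machine computation whose crossing sequences have length $\bigO(f(n))$, each $\darm$ step lengthening any crossing sequence by only $\bigO(1)$. Hence it suffices to exhibit inputs on which some cell boundary is forced to carry a crossing sequence of length $\Omega(n)$, and the claimed $\dal[\Omega(n)]$ bound follows immediately.

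First I would construct a family of hard instances. Take $n = 2m$ and split the vertices into a ``left'' block $\{1,\dots,m\}$ and a ``right'' block $\{m+1,\dots,2m\}$. For each subset $T \subseteq \{1,\dots,m\}$ build the graph $G_T$ whose only edges are $i \to m+i$ for $1 \le i \le m$ and whose source set is $S = T$. Since every edge goes from a smaller to a larger index, each $G_T$ is in fact a SDAG, so the bound will hold even for the restricted class of Theorem~\ref{the:sdag_dal_n} and match its $\dal[n]$ upper bound exactly. In $G_T$ the only in-edge of $m+i$ comes from $i$, and a left vertex is reachable if and only if it is a source; therefore vertex $m+i$ is reachable if and only if $i \in T$. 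In the string encoding I would arrange the adjacency rows so that all rows $v_1,\dots,v_m$ (which carry the source ticks, the only part depending on $T$) lie to the left of a fixed tape boundary $p$, while the rows $v_{m+1},\dots,v_{2m}$ together with the cells where the answer ticks for the right vertices must be written lie to the right of $p$. The content to the right of $p$ is then identical across all $2^m$ instances, whereas the required output there (the ticks on $m+1,\dots,2m$) ranges over all $2^m$ possibilities as $T$ varies.

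Next I would run the standard splicing argument at the boundary $p$ on the simulating one-tape machine. Because the machine is deterministic and the right-hand initial content is the same for every $G_T$, the evolution of the right part, and hence the final ticks written there, is completely determined by the crossing sequence at $p$. Thus two instances with the same crossing sequence at $p$ would produce the same output on the right block; since all $2^m$ outputs are distinct, the $2^m$ crossing sequences at $p$ must be pairwise distinct. As the number of crossing sequences over a fixed state set $Q$ of length at most $\ell$ is at most $|Q|^{\ell+1}$, having $2^m$ distinct ones forces the longest of them to have length at least $m/\log|Q| - 1 = \Omega(m) = \Omega(n)$. The corresponding instance therefore has a crossing sequence of length $\Omega(n)$, which by the correspondence above means its $\darm$ computation takes $\Omega(n)$ steps, yielding the claimed $\dal[\Omega(n)]$ lower bound.

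The main obstacle I anticipate is not the counting argument but the encoding bookkeeping: I must verify that the tape boundary $p$ sits at the same position for all $2^m$ inputs (which holds because convoluting a vertex label with the tick symbol $\checkmark$ changes symbols but not lengths), that the answer positions for the right vertices genuinely fall to the right of $p$, and that the right-hand initial content is literally identical across the whole family so that the splicing lemma applies verbatim. A secondary point is to invoke the DARM-to-Turing-machine correspondence carefully, confirming that each $\darm$ step contributes only $\bigO(1)$ to every crossing sequence, so that a crossing sequence of length $\Omega(n)$ forces $\Omega(n)$ register-machine steps and not merely $\Omega(n)$ Turing-machine steps.
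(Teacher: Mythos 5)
Your proposal is correct and follows essentially the same route as the paper's proof: the same hard family (edges $i \to i + n/2$ with sources a subset of the first half, reducing to a duplication problem) and the same crossing-sequence information-transfer argument at the midpoint boundary. The only difference is that you carry out the splicing-and-counting step explicitly where the paper simply asserts that $n/2$ bits must cross the boundary, which is a welcome tightening rather than a different approach.
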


\begin{proof}
	Assume that $n$ is even so we can split the vertices into two equal groups: $\{1,2,\cdots,\frac{n}{2}\}$ and $\{\frac{n}{2}+1,\cdots,n\}$. Let us construct an input graph such that $(i,j) \in E$ iff $i + \frac{n}{2} = j$ and $S \subseteq \{1,2,\cdots,\frac{n}{2}\}$ i.e. the sources are the subset of the first group. By this construction, we reduce the problem to a duplication problem: the algorithm must tick the vertices in the second group such that they are equal to the ticks of the first group. It is also equivalent to transforming $w$ to $ww$ where $|w|=\frac{n}{2}$. Between both $w$, at least $\frac{n}{2}$ bits has to be transmitted as it carries the copy of $w$. Thus by its crossing sequence, the lower bound will be $\Omega(n)$-$\dal$-computation.
\end{proof}

\subsection{Sorting in DARM}

\noindent
We also have a nice result in sorting where the solution is optimal.

\begin{theorem} \label{the:sortdal}
	Given $n$ many $m$-digits numbers as an input, sorting those $n$ numbers can be solved in $DAL[nm]$. Moreover, it is also matched with the lower bound.
\end{theorem}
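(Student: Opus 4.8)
The plan is to prove matching upper and lower bounds of $\bigO(nm)$. For the upper bound I would simulate \emph{odd-even transposition sort} on the sequential layout of the input (the $n$ numbers written consecutively, each occupying $m$ digit-cells and delimited by separators). Recall that this sorting network runs in exactly $n$ parallel rounds: odd rounds compare-and-exchange the disjoint neighbouring pairs $(1,2),(3,4),\dots$ and even rounds the pairs $(2,3),(4,5),\dots$; by the zero--one principle $n$ rounds always suffice. I would realise each round by $\bigO(m)$ automatic-function steps, giving $\bigO(nm)$ steps in total, i.e.\ membership in $\dal[nm]$. The reason a round cannot be compressed to $\bigO(1)$ steps is exactly the length constraint on automatic functions: a single step changes the register length by only a constant and cannot transport a digit across the $\Theta(m)$ cells separating two blocks, so both the comparison and the exchange of two adjacent $m$-digit blocks must be streamed over $\Theta(m)$ steps.

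The heart of the construction is implementing one round as automatic steps. I would keep the numbers in place and use \emph{position markers}: at the start of a round a marker is placed on the most-significant digit of every block, and a three-valued \emph{verdict symbol} ($\mathtt{eq}/\mathtt{L}/\mathtt{R}$, initialised to $\mathtt{eq}$) is attached to each currently active pair boundary. For $t = 1,\dots,m$ I alternate two automatic steps: (i) a \emph{compare} step whose automaton, sweeping the register, reads the marked ($t$-th) digit of the left block of each pair, carries that single symbol forward until it meets the marked digit of the right block, and updates the pair's verdict in the usual most-significant-first, ``sticky'' fashion (once $\mathtt{L}$ or $\mathtt{R}$ is set it is frozen), copying everything else unchanged; and (ii) a \emph{shift} step moving every marker one cell to the right. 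Each of these is a genuine automatic function, since the associated convolution relation is checked by a finite automaton that only ever remembers $\bigO(1)$ symbols between markers. After these $2m$ steps every verdict holds the correct comparison, and a final phase of $\bigO(m)$ automatic steps performs the conditional exchange, streaming the two blocks of each $\mathtt{L}$-pair past each other one digit at a time across the separator. Summing over the $n$ rounds (with the set of active boundaries alternating between odd and even rounds) gives $\bigO(nm)$ steps.

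For the lower bound I would argue, as in the proof of Theorem~\ref{the:daglower}, through crossing sequences and the connection recorded in Subsection~\ref{subsec:darm_key}, whereby a $\dal[f(n)]$-computation induces a one-tape crossing sequence of length $\bigO(f(n))$ at every cut of the register. It therefore suffices to exhibit sorting instances on which some cut forces a crossing sequence of length $\Omega(nm)$. Assuming $m \ge 2\lceil \log n\rceil$ (the natural regime, since fewer than $\log n$ digits cannot even name $n$ distinct keys), I would let the $i$-th input number consist of a $\lceil \log n\rceil$-bit tag equal to the binary encoding of its rank followed by an arbitrary payload of length at least $m/2$, and I would feed the numbers in \emph{reverse} rank order. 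The tags dominate every comparison, so sorting is forced to be the fixed permutation that reverses the block sequence; consequently the sorted register is the reversal of the input, and the payloads, carrying $\ge nm/2 = \Omega(nm)$ free bits, must be transported across the midpoint of the register. By the standard counting of distinct crossing sequences this forces a crossing sequence of length $\Omega(nm)$ at the midpoint, hence $\Omega(nm)$ \darm\ steps.

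The main obstacle is verifying that a round can be executed in $\bigO(m)$ genuinely automatic steps: the delicate points are (a) arguing that the marker-guided ``read a digit here, carry it to the matching digit there, and update the verdict'' operation is a single automatic function despite the two relevant cells being $\Theta(m)$ apart --- which works precisely because only one symbol, not the whole block, needs to be remembered during the sweep --- and (b) implementing the conditional block exchange within the constant-length-change budget of automatic functions, by digit-at-a-time streaming rather than a one-shot move. On the lower-bound side the only subtlety is the instance design (the tagging that turns sorting into a reversal) together with the mild assumption $m = \Omega(\log n)$, without which the bound degrades for very short keys.
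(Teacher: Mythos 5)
Your proposal follows essentially the same route as the paper: the upper bound is exactly Habermann's parallel neighbour-sort (odd-even transposition sort) run for $n$ rounds with each compare-and-exchange of adjacent $m$-digit blocks streamed over $\bigO(m)$ automatic steps, and the lower bound is the same midpoint crossing-sequence argument on an instance that forces $\Omega(nm)$ bits of information to cross the centre of the register (the paper uses large numbers in the first half and small in the second, while you use rank tags plus free payloads, but the information-transfer idea is identical). Your write-up is more detailed on both the marker-based implementation of a round and the explicit accounting of the $\Omega(nm)$ transported bits, but it is the same proof.
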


\begin{proof}
	We can directly implement parallel neighbour-sort by Habermann \cite{Habermann72}. The algorithm runs in $n$ rounds. In odd-numbered round, $\lfloor n/2 \rfloor$ pairs are set: $(1,2),(3,4),\cdots$; while in even-numbered round, $\lfloor (n-1)/2 \rfloor$ pairs are set: $(2,3),(4,5),\cdots$. These pairs denote the two index numbers to be compared and those two numbers will be swapped if the former is larger. In a single round, all comparisons and swappings can be done in parallel and need $\bigO(m)$ steps since each number has $m$ digits. Thus, the total complexity time is $DAL[nm]$.
	
	For lower bound, an input can be generated by putting $n/2$ large numbers in the first half and smaller numbers in the rest last half of the input. To sort this requires transferring all numbers in the first half to the last half and vice versa. Therefore, at least $nm/2$ bits of information have to be transmitted between the middle point. As crossing sequences are only determined up to a constant factor, thus the length of the crossing sequence between both halves will be in length $\Omega(nm)$, hence the lower bound is attained.
\end{proof}

\section{Key Results in NARM}
\label{sec:narm}

\noindent
In the same spirit of $\darm$, this section presents the relationship between $\narm$ and standard Turing machine complexity classes followed by membership of some interesting languages in $\bnal$ class.

\subsection{Nondeterministic Turing Machine, Regularity, and NONPALINDROME}

\noindent
We first start our result with the following theorem.

\begin{theorem}\label{the:bnal_nspace}
	$\bnal[f(n)] \subseteq \nspace[f(n)]\cap \ntime[(n+f(n))f(n)]$.
\end{theorem}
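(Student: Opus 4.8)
The plan is to simulate an $\narm$ directly by a nondeterministic one-tape Turing machine, configuration by configuration, and then read off both bounds from Corollary~\ref{the:turing_simulation}. Let $L \in \bnal[f(n)]$ and fix an $\narm$ $M = \langle \Gamma, \Sigma, G \rangle$ in the single-operation representation that accepts $L$ in $\bigO(f(n))$ steps, where $G$ is a bounded automatic relation. I would build a nondeterministic Turing machine $N$ whose tape always holds the current \emph{configuration} of $M$, namely the convolution of the current line number with the contents of all registers. Reading the input into a register is the first simulated step; thereafter each single step of $M$ is one application of $G$, and the nondeterministic branching of $N$ is set up to mirror the nondeterminism of the bounded automatic relation $G$. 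The machine $N$ accepts exactly when the simulated run of $M$ reaches $halt\_accept$.

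For the time bound, Corollary~\ref{the:turing_simulation} tells us that one application of $G$ is simulated by a forward-then-backward pass over the configuration costing $\bigO(m)$ Turing steps, where $m$ is the current configuration length, and the same corollary bounds $m$ by $n + \bigO(t)$ after $t$ steps. Since $M$ runs for $t = \bigO(f(n))$ steps, we have $m = \bigO(n + f(n))$ throughout, and summing the per-step cost over the $\bigO(f(n))$ simulated steps yields total running time $\bigO(f(n)) \cdot \bigO(n + f(n)) = \bigO((n+f(n)) f(n))$, placing $L$ in $\ntime[(n+f(n))f(n)]$. To make this a genuine bound on \emph{every} computation path, I would attach a step counter that rejects any branch exceeding $c \cdot f(n)$ simulated $M$-steps; since every accepting run of $M$ halts within $\bigO(f(n))$ steps this preserves $L$, and the counter costs only $\bigO(\log f(n))$ extra cells.

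For the space bound the crucial point is that the per-step simulation is performed \emph{in place}: it overwrites the old configuration by the updated one and, because $G$ is bounded, lengthens it by at most a constant per step. Hence the tape never holds more than one configuration at a time, and the working space equals the maximal configuration length $m = \bigO(n + f(n))$, giving membership in $\nspace[f(n)]$. I expect the main obstacle to be twofold. First, one must verify carefully that the forward/backward construction underlying Theorem~\ref{the:turing_simulation_deterministic} and Corollary~\ref{the:turing_simulation} genuinely reuses the tape rather than accumulating space across the $\bigO(f(n))$ iterations; otherwise a naive simulation would blow up the space by a factor of $f(n)$. Second, the space estimate requires the identification $n + \bigO(f(n)) = \bigO(f(n))$, which is valid in the regime $f(n) = \Omega(n)$ relevant here (in particular for $f$ polynomial, giving $\bnal[\mathrm{poly}] \subseteq \nspace[\mathrm{poly}] = \pspace$); for strictly sublinear $f$ one only obtains $\nspace[n + f(n)]$, since a single automatic step may rewrite all $n$ input positions and so the input cannot merely be left untouched on a read-only tape.
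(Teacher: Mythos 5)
Your time bound is correct and is essentially the paper's own argument: direct step-by-step simulation via Corollary~\ref{the:turing_simulation}, with the register length bounded by $n+\bigO(f(n))$ after $\bigO(f(n))$ steps, giving $\bigO((n+f(n))f(n))$ total time. The genuine gap is in the space bound. A configuration-by-configuration simulation inherently uses space equal to the configuration length, i.e.\ $\Theta(n+f(n))$, and you concede exactly this in your closing remark --- but you then cannot conclude membership in $\nspace[f(n)]$ for sublinear $f$, and the theorem is stated for arbitrary $f$ and is later applied in precisely that regime: the paper derives $\bnal[\log n]\subseteq\nl$ from this theorem (Theorem~\ref{the:bnal_logn}), which your argument does not deliver, since it only yields $\nspace[n+\log n]$. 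So the restriction to $f(n)=\Omega(n)$ is not harmless; it loses a consequence the paper needs.

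The paper's proof of the space bound uses a different, transposed simulation that you are missing. It lays out the entire $\bigO(f(n))\times\bigO(n+f(n))$ space-time diagram of the register contents (rows indexed by ARM steps, columns by tape positions, padded to a common length) and has the Turing machine guess this diagram \emph{column by column} rather than row by row. For each of the $\bigO(f(n))$ consecutive pairs of rows there is a finite automaton that verifies the later row is a valid $G$-update of the earlier one by reading the two rows position-synchronously; feeding the guessed columns to all of these automata in parallel requires storing only a vector of $\bigO(f(n))$ automaton states plus the current column, with the first row checked against the read-only input as the columns sweep by. At the end one verifies all automata accept. This keeps the work space at $\bigO(f(n))$ independently of how $f$ compares with $n$. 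To repair your proof you would need to add this (or an equivalent crossing-sequence-style) argument for the space half of the claim.
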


\begin{proof}
	Consider the space-time diagram of the register content in $\narm~M = \langle \Gamma,\Sigma,G \rangle$. Let $l$ be the maximum length of a string in the computation of $G$ automata until it halts. The input is padded with $\#$'s such that the length from first to last computation is always $l$. Let $r_i$ be the $i$-th symbol of the register.
	
	\begin{center}
		\begin{tabular}{|c || c | c | c | c |}
			\hline
			step & $r_1$ & $r_2$ & \ldots & $r_l$ \\
			\hline
			1 & $r_1^{(1)}$ & $r_2^{(1)}$ & \ldots & $r_l^{(1)}$ \\
			\hline
			2 & $r_1^{(2)}$ & $r_2^{(2)}$ & \ldots & $r_l^{(2)}$ \\
			\hline
			\ldots & \ldots & \ldots & \ldots & \ldots \\
			\hline
			$cf(n)$ & $x_1^{(cf(n))}$ & $r_2^{(cf(n))}$ & \ldots & $r_l^{(cf(n))}$ \\
			\hline
		\end{tabular}
	\end{center}
	
	\noindent
	Note that it suffices to keep track of the state of a deterministic finite automaton accepting the automatic relation step updates in order to simulate it. Hence, the Turing machine simply maintains a column of states. These states correspond to the states of the automata recognizing the automatic step updates. The Turing machine then guesses the contents column by column and simulates the automata to process these contents. At the end of all $l$ columns, it then verifies that all step updates are valid. The total amount of space used is simply $\bigO(f(n))$. Hence, we have $\bnal[f(n)] \subseteq \nspace[f(n)]$.
	
	Secondly, in each bounded automatic relation step, the length of the string is allowed to increase by at most a constant $c$. Hence, the length of the string after $\bigO(f(n))$ steps is bounded by $\bigO(n + f(n))$. By Theorem \ref{the:turing_simulation}, the execution of automatic relations can be simulated by a nondeterministic Turing Machine in $\bigO((n+f(n))f(n))$ steps. Hence, $\bnal[f(n)] \subseteq \ntime[(n+f(n))f(n)]$.
\end{proof}

\begin{corollary}
	$\bnal[poly(n)] = \np$.
\end{corollary}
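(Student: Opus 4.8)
The plan is to prove the two inclusions separately. The inclusion $\bnal[poly(n)] \subseteq \np$ is essentially immediate from Theorem \ref{the:bnal_nspace}: taking $f(n)$ to be any polynomial, the time bound $\ntime[(n+f(n))f(n)]$ collapses to $\ntime[poly(n)]$, and since $\np = \bigcup_c \ntime[n^c]$, we obtain $\bnal[poly(n)] \subseteq \ntime[poly(n)] = \np$. The space bound from the same theorem is not needed here, although it would also place the class inside $\nspace[poly(n)] = \pspace$, which is consistent.

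For the reverse inclusion $\np \subseteq \bnal[poly(n)]$, I would mirror the deterministic simulation sketched after Corollary \ref{the:turing_simulation}, now carried out in the nondeterministic setting. Starting from the characterisation $\np = \ntime[poly(n)]$, fix a one-tape nondeterministic Turing machine $M$ deciding a language $L$ in time $p(n)$ for some polynomial $p$. I would build a \narm\ that keeps the entire instantaneous description of $M$ --- the tape content together with an encoding of the head position and current state --- inside a single register, exactly as in the deterministic construction where the tape is ``pasted'' into one register. The key step is to verify that a single computation step of $M$ is a \emph{bounded} automatic relation on this encoding: the update is local, since it only rewrites the cell under the head, shifts the head by one position and changes the state, so the relation between the old register content and the new one is recognised by a synchronous automaton, and the length changes by at most a constant (at most one blank cell is appended when the head runs off the current right end). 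Nondeterminism of $M$ is captured directly by the relation being multivalued: each admissible transition of $M$ contributes one legal old/new pair.

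Since one step of $M$ is then realised by one step of the \narm, a computation of $M$ of length $p(n)$ is simulated in $\bigO(p(n))$ \narm\ steps, so $L \in \bnal[p(n)] \subseteq \bnal[poly(n)]$. An input $x$ is accepted by the \narm\ along some nondeterministic path if and only if the corresponding computation of $M$ is accepting, which matches the acceptance convention for \narm 's. Combining the two inclusions yields $\bnal[poly(n)] = \np$.

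The hard part will be the reverse direction, and within it the careful bookkeeping needed to show that one nondeterministic $M$-step is a genuine bounded automatic relation. One has to encode the head position and state so that a synchronous automaton can locate the head within the convolution of old and new register contents, check that every cell away from the head is copied unchanged, check that the single cell at the head is rewritten according to $\delta$, and enforce the constant length bound when the head moves past the right end; the nondeterminism must be localised to exactly the choice of transition so that no spurious old/new pairs are admitted. By Theorem \ref{the:first_order_definable} such a first-order definable, locally checkable relation is automatic, and boundedness is immediate from the length-change constraint, so the construction goes through.
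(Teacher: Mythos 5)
Your proposal is correct and follows essentially the same route as the paper: the inclusion $\bnal[poly(n)] \subseteq \np$ is read off from the $\ntime[(n+f(n))f(n)]$ bound of Theorem \ref{the:bnal_nspace}, and the reverse inclusion is the step-by-step simulation of a one-tape nondeterministic Turing machine sketched in Subsection \ref{subsec:arm_turing}, where the whole instantaneous description sits in one register and each (local, length-change-at-most-constant) transition is a bounded automatic relation. The only point worth making explicit is the standard reduction from multi-tape to one-tape nondeterministic machines, whose quadratic overhead is harmless inside $poly(n)$.
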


\begin{corollary}
	For $f(n) = \Omega(n)$, under the assumption that $\ntime[(n+f(n))f(n)]\not\supseteq \nspace[f(n$ $)],\bnal[f(n)] \subset \nspace[f(n)]$.
\end{corollary}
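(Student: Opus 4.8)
The plan is to read off both conclusions of Theorem~\ref{the:bnal_nspace} and to play the space-containment against the time-containment using the hypothesis. Theorem~\ref{the:bnal_nspace} gives
\[
\bnal[f(n)] \subseteq \nspace[f(n)] \cap \ntime[(n+f(n))f(n)].
\]
In particular $\bnal[f(n)] \subseteq \nspace[f(n)]$, so the non-strict inclusion is already free; the entire content of the corollary is to upgrade this to a \emph{proper} inclusion, i.e.\ to exhibit a language in $\nspace[f(n)]$ that does not lie in $\bnal[f(n)]$.

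First I would invoke the hypothesis $\ntime[(n+f(n))f(n)] \not\supseteq \nspace[f(n)]$, which by definition supplies a witness language $L$ with $L \in \nspace[f(n)]$ but $L \notin \ntime[(n+f(n))f(n)]$. Here the condition $f(n) = \Omega(n)$ plays only a bookkeeping role: it guarantees $n + f(n) = \Theta(f(n))$, so that the simulation time bound $(n+f(n))f(n)$ collapses to $\Theta(f(n)^2)$ and the statement reads cleanly, but it is not essential to the separation itself.

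Next I would use the \emph{other} half of Theorem~\ref{the:bnal_nspace}, namely $\bnal[f(n)] \subseteq \ntime[(n+f(n))f(n)]$, to exclude $L$ from $\bnal[f(n)]$: were $L \in \bnal[f(n)]$, it would follow that $L \in \ntime[(n+f(n))f(n)]$, contradicting the choice of $L$. Hence $L \in \nspace[f(n)] \setminus \bnal[f(n)]$, and combined with $\bnal[f(n)] \subseteq \nspace[f(n)]$ this yields $\bnal[f(n)] \subset \nspace[f(n)]$.

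There is no genuine obstacle here; the corollary is a one-line deduction once both inclusions of Theorem~\ref{the:bnal_nspace} are in hand. The only point requiring care is to keep the two distinct complexity measures paired with the two distinct directions of the argument: the \emph{space} containment supplies the inclusion, while the \emph{time} containment, set against the hypothesis, supplies the strictness. Making sure the witness $L$ is separated from $\bnal[f(n)]$ precisely by the time bound (and not mistakenly by the space bound, which $L$ satisfies) is the whole of the subtlety.
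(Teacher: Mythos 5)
Your argument is correct and is exactly the intended deduction: the paper states this as an immediate corollary of Theorem \ref{the:bnal_nspace} without a written proof, and the only available route is the one you take --- the space containment gives the inclusion, while the time containment combined with the hypothesis supplies a witness $L \in \nspace[f(n)] \setminus \ntime[(n+f(n))f(n)] \subseteq \nspace[f(n)] \setminus \bnal[f(n)]$, making it proper.
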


\noindent
Next, we show several results on a $\narm$ with a smaller number of computations. Note that an automatic relation step is an NFA which can be constructed as a DFA but with exponential-larger size. Therefore, we can directly use the proof in Theorem \ref{the:darm_regular} to show that $\bnal[1] = \dal[1] = Regular$. However, a better bound can be achieved.

\begin{theorem}\label{the:nal_weak}
	$\bnal[o(\log n)] = Regular$, moreover if NARM measures the time complexity in a weak-measure i.e. $\max\{\min\{\text{acceptance time for}~x~\text{on a nondeterministic path}\}: |x| = n, x \in L\}$ then $\bnal[o(\log \log n)] = Regular$.
\end{theorem}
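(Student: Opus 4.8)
The inclusion $Regular \subseteq \bnal[o(\log n)]$, and likewise for the weak measure, is immediate: a regular language is decided in one automatic step, so it lies in $\bnal[1]=\dal[1]=Regular$. The content is the reverse inclusion, and the plan is to route both bounds through the one-tape machine picture while tracking crossing sequences. As observed before Theorem \ref{the:darm_regular}, a single application of the combined operation $G$ is simulated by a forward-then-backward double pass of a one-tape machine over the register, and each such step lengthens every crossing sequence by at most an additive constant. Hence a $\narm$ computation of $t$ steps corresponds to a one-tape nondeterministic computation in which \emph{every} crossing sequence has length $\bigO(t)$, while the register stays of length $n+\bigO(t)$.

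For the standard measure, fix $L\in\bnal[g(n)]$ with $g(n)=o(\log n)$. Since \emph{every} accepting path has $\bigO(g(n))$ steps, the associated one-tape nondeterministic machine (Corollary \ref{the:turing_simulation}, and the diagram of Theorem \ref{the:bnal_nspace}) is genuinely time bounded: every crossing sequence of an accepting computation has length $o(\log n)$ and the register length is $n+\bigO(g(n))$, so the total running time is $(n+g(n))g(n)=o(n\log n)$. The plan is then to finish with the crossing-sequence theorem of Hennie \cite{Hennie65} in its nondeterministic form, namely that a one-tape nondeterministic Turing machine running in time $o(n\log n)$ recognises only regular languages. Together with the trivial inclusion this gives $\bnal[o(\log n)]=Regular$.

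For the weak measure the control is far weaker: only the \emph{shortest} accepting path of each $x\in L$ is guaranteed to be short, while other accepting paths may be arbitrarily long and carry no crossing-sequence bound, and the defining measure is a non-constructible $\min$ over paths. Thus one cannot clock the machine to an all-paths time bound and Hennie's theorem does not apply; the plan is to argue directly from minimal accepting computations. If $g(n)=o(\log\log n)$, then each accepted $x$ has an accepting computation of $\bigO(g(n))$ steps, hence one in which every crossing sequence has length $\bigO(g(n))=o(\log\log n)$, so the number of distinct crossing sequences over the constant state set of $G$ is at most $2^{o(\log\log n)}=(\log n)^{o(1)}$. I would characterise membership of $uv$ by the existence of a single short crossing sequence at the $u\mid v$ cut that $u$ can emit and $v$ can accept, so that the behaviour of a prefix is captured by its \emph{set} of admissible short crossing sequences; the number of such sets is then at most $2^{(\log n)^{o(1)}}=n^{o(1)}=o(n)$. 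The drop from $\log n$ to $\log\log n$ is exactly the price of controlling one path rather than all of them: at a budget of $\Theta(\log\log n)$ the count of interface behaviours can already reach $n$ and beyond, which is what permits the nonregular $\bigO(\log\log n)$ examples alluded to before this theorem.

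The hard part will be converting this sub-linear count of interface behaviours into regularity. Because non-minimal accepting paths are unbounded, the set of crossing sequences a prefix can emit in accepting computations is not a priori finite, so one cannot simply read off a finite automaton; the argument must first show that membership is already decided by the bounded family of short crossing sequences arising from minimal accepting computations, and only then count them. Moreover the bound $(\log n)^{o(1)}$ grows with $n$, so a naive pairing of inequivalent prefixes does not immediately contradict regularity: closing the gap requires a Hennie/Kolmogorov-style counting that, for infinitely many $n$, pits the $o(n)$ admissible interface behaviours against the number of Myhill--Nerode classes a nonregular language must realise within length $n$, together with careful bookkeeping of how short crossing sequences compose across a cut under nondeterminism. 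I expect this to be the main obstacle, the standard-measure case being comparatively routine once Theorem \ref{the:bnal_nspace} and the nondeterministic Hennie theorem are in hand.
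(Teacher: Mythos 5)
Your first half is sound and essentially the paper's route: reduce a $\narm$ step to a double pass of a one-tape nondeterministic machine so that a $t$-step computation yields crossing sequences of length $\bigO(t)$, then invoke a known regularity theorem. The paper applies the crossing-sequence form directly (Szepietowski \cite{Szepietowski94}, Pighizzini \cite{Pighizzini09}: nondeterministic one-tape machines whose accepting computations have crossing sequences of length $o(\log n)$ accept only regular languages), whereas you pass through the equivalent $o(n\log n)$ accept-time formulation; your arithmetic $(n+g(n))g(n)=o(n\log n)$ is correct and this difference is cosmetic, though attributing the nondeterministic version to Hennie is loose --- the right citation is \cite{Pighizzini09}.

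The weak-measure half has a genuine gap, which you yourself flag. The paper does not prove this part from scratch either: it cites Pighizzini \cite{Pighizzini09}, who showed that for nondeterministic one-tape machines under the weak measure (only the shortest accepting computation of each accepted word is clocked) the regularity threshold drops to $o(\log\log n)$ crossing-sequence length; the paper then only needs to observe that the $\narm$-to-Turing translation preserves this measure. Your attempted replacement --- counting the $(\log n)^{o(1)}$ short crossing sequences, forming the $n^{o(1)}$ possible sets of them as ``interface behaviours'' of a prefix, and playing this against the number of Myhill--Nerode classes a nonregular language must realise --- is a reasonable outline of how such a theorem is proved, but as you concede it leaves the two essential steps open: (i) showing that membership of $uv$ is already determined by the short crossing sequences arising from \emph{minimal} accepting computations (non-minimal paths and non-members carry no bound, and minimal computations of $uv$ need not split into independently short behaviours of $u$ and of $v$), and (ii) the cut-and-paste/counting argument that turns an $o(n)$ bound on interface behaviours into regularity. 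Without either closing these steps or citing the theorem that does, the claim $\bnal[o(\log\log n)]=Regular$ under the weak measure is not established.
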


\begin{proof}
	In Subsection \ref{subsec:darm_key} we show that the length of a $\dal[f(n)]$-computation gives a crossing sequence of length $\bigO(f(n))$. This result can be extended to $\narm$ in a similar manner as a bounded automatic relation can be simulated by a double-pass of a nondeterministic one-tape Turing machine. Thus an $f(n)$-steps NARM produces an $f(n)$ crossing sequence length accepting computation on the best possible run. Szepietowski \cite{Szepietowski94} and Pighizzini \cite{Pighizzini09} show that if a nondeterministic one-tape Turing machine uses a crossing sequences of length at most $o(\log n)$ then it accepts regular language. This result allows to translate that $\narm$ in $o(\log n)$ steps accept regular language too.
	
	Moreover, Pighizzini \cite{Pighizzini09} also shows that in a weak-measure where the acceptance of a string is only measured to its shortest computation path, then the previous bound goes down to $o(\log \log n)$. Their proof is based on accepting a unary string of length $n$ where the smallest integer non dividing $n$ is not a power of $2$, by guessing $s$ and $t$ such that $2^s < t < 2^{s+1}$, $n \mod 2^s = 0$, and $n \mod t \neq 0$. This translates well for $\narm$ in a weak-measure by guessing a string where every $t$-th symbol is a $2$ and in between are symbols $1$, for some non-negative integer $t$ --- allowing $\narm$ to check the divisibility of $t$.
\end{proof}

\noindent
The next lemma nicely follows the hierarchy that in $O(\log n)$ steps, $\narm$ may recognize a non-regular language.

\begin{definition}[PALINDROME and NONPALINDROME]
	PALINDROME = $\{x \in \{0,1\}^*: \forall i, x_i$ $= x_{|x|-i+1} \}$ while NONPALINDROME = $\{0,1\}^* \setminus PALINDROME$.
\end{definition}

\begin{theorem}[B\={a}rzdiņ\v{s} \cite{Barzdin65}, Hennie \cite{Hennie65}, Rabin \cite{Rabin63}]\label{the:palindrome}
	PALINDROME cannot be accepted on a nondeterministic one-tape Turing Machine in $o(n^2)$ steps.
\end{theorem}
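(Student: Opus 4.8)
The plan is to run a crossing-sequence counting argument, which dovetails with the lower-bound machinery already set up in Subsection~\ref{subsec:darm_key}. Suppose toward a contradiction that some nondeterministic one-tape Turing machine $M$, with state set $Q$, accepts PALINDROME in such a way that every palindrome $p$ admits an accepting computation using at most $t(|p|)$ steps for some $t(n) = o(n^2)$. I would work with the family of $2^m$ palindromes $w_u = u\,0^m\,u^R$ for $u \in \{0,1\}^m$, each of length $n = 3m$, where the central block $0^m$ serves as a clean ``buffer'' region in which to cut and paste.

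For each $u$ fix a shortest accepting computation $\gamma_u$; its length is at most $t(3m) = o(m^2)$. Since the total number of head moves bounds the sum of the crossing-sequence lengths over all cell boundaries, the $m$ boundaries lying strictly inside the central $0^m$ block together carry crossing sequences of total length $o(m^2)$. By a Markov-type averaging step, at least $3m/4$ of these central boundaries carry a crossing sequence of length at most $\ell_0 := 4\,t(3m)/m = o(m)$; call such a boundary \emph{good for $u$}. Double counting the good pairs (string, central boundary) over only $m$ available central boundaries then pins down a single boundary $b_j$ that is good for at least $\tfrac34 \cdot 2^m$ of the strings $w_u$.

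Next I would count crossing sequences directly: there are at most $(|Q|+1)^{\ell_0} = 2^{o(m)}$ distinct crossing sequences of length at most $\ell_0$, which for large $m$ is strictly smaller than $\tfrac34 \cdot 2^m$. Hence by pigeonhole two distinct strings $w_u \neq w_{u'}$ realize the \emph{same} crossing sequence at $b_j$ in their computations $\gamma_u, \gamma_{u'}$. The standard cut-and-paste lemma then splices the prefix of $w_u$ up to $b_j$ with the suffix of $w_{u'}$ after $b_j$ into a valid accepting computation; because $b_j$ lies inside the $0^m$ buffer, where both strings agree, the spliced input is exactly $u\,0^m\,(u')^R$. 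Since $u \neq u'$ this word is \emph{not} a palindrome, yet $M$ accepts it, giving the desired contradiction.

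The main obstacle is the cut-and-paste step in the \emph{nondeterministic} setting: one must check that interleaving the left-region behaviour of $\gamma_u$ with the right-region behaviour of $\gamma_{u'}$, synchronised through their common crossing sequence at $b_j$, is itself a legal computation that reaches an accept state — this requires care about where acceptance is signalled and that the head crosses $b_j$ in matching states and directions on both sides. The averaging and counting steps are then routine once the bookkeeping convention ``total steps $\geq$ sum of crossing-sequence lengths'' is fixed.
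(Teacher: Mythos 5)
Your argument is correct and is essentially the classical crossing-sequence proof from the works the paper cites (B\={a}rzdi\c{n}\v{s}, Hennie, Rabin); the paper itself states this theorem as a citation without reproducing a proof, and elsewhere (Subsection~\ref{subsec:darm_key}, Theorems~\ref{the:daglower} and~\ref{the:sortdal}) relies on exactly this crossing-sequence machinery. The buffer trick with $u\,0^m\,u^R$, the averaging over the central boundaries, the pigeonhole on the $2^{o(m)}$ possible short crossing sequences, and the cut-and-paste (which does go through for nondeterministic machines, since matching crossing sequences force the two computations to end on the same side of $b_j$ and the spliced move sequence remains a legal accepting run) are all standard and correctly assembled.
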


\noindent
The subsequent corollary follows from Theorems \ref{the:bnal_nspace} and \ref{the:palindrome}.

\begin{corollary}\label{cor:nonpalimnalon}
PALINDROME $\notin \bnal[o(n)]$.
\end{corollary}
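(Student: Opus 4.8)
The plan is to argue by contradiction, combining the upper-bound simulation of Theorem~\ref{the:bnal_nspace} with the quadratic lower bound of Theorem~\ref{the:palindrome}. Suppose, for the sake of contradiction, that PALINDROME $\in \bnal[f(n)]$ for some $f(n) = o(n)$. I would then feed this hypothesis directly into the nondeterministic time bound furnished by Theorem~\ref{the:bnal_nspace}, namely $\bnal[f(n)] \subseteq \ntime[(n+f(n))f(n)]$, so that PALINDROME is accepted by a nondeterministic Turing machine running in $\bigO((n+f(n))f(n))$ steps.

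The next step is a routine estimate of this time bound under the assumption $f(n) = o(n)$. Since $f(n) = o(n)$ forces $f(n) \le n$ for all sufficiently large $n$, we have $n + f(n) = \bigO(n)$, and hence $(n+f(n))f(n) = \bigO(n)\cdot o(n) = o(n^2)$. Thus the hypothesis would place PALINDROME in a nondeterministic time class strictly below $n^2$.

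Finally, I would invoke Theorem~\ref{the:palindrome}, which asserts that PALINDROME cannot be accepted by a nondeterministic one-tape Turing machine in $o(n^2)$ steps, to obtain the contradiction. The one point that requires care --- and which I regard as the main obstacle --- is that the lower bound of Theorem~\ref{the:palindrome} is specific to \emph{one-tape} machines (indeed PALINDROME is decidable in linear time on a multi-tape machine), so the simulation behind Theorem~\ref{the:bnal_nspace} must itself yield a one-tape machine for the argument to go through. This is exactly what the proof of Theorem~\ref{the:bnal_nspace} provides: its $\ntime$ bound is derived from Corollary~\ref{the:turing_simulation}, whose simulation runs on a one-tape nondeterministic Turing machine. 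Consequently, the $o(n^2)$-time acceptor produced under the assumption is one-tape, directly contradicting Theorem~\ref{the:palindrome}, and the corollary follows.
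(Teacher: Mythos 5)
Your proof is correct and follows exactly the paper's route: the paper derives this corollary by combining Theorem~\ref{the:bnal_nspace} with Theorem~\ref{the:palindrome}, just as you do. Your additional observation that the simulation underlying the $\ntime$ bound is a one-tape machine is a valid and worthwhile clarification of why the combination is legitimate.
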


\begin{lemma}\label{the:nonpalindrome}
	NONPALINDROME $\in \bnal[\log n]$.
\end{lemma}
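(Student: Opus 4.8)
The plan is to exhibit a $\narm$ that accepts $\mathrm{NONPALINDROME}$ in $\bigO(\log n)$ steps by exploiting nondeterminism to \emph{guess} a witnessing mismatch position rather than checking all positions. Recall that a word $x$ is a non-palindrome iff there exists an index $i$ with $x_i \neq x_{|x|-i+1}$. The key idea is that a single $\narm$ step is a bounded automatic relation, so in one step the machine can nondeterministically place a marker symbol somewhere in the input and also place a second marker at the mirror-image position, both operations being realisable by a synchronous automaton reading the convolution of the input with the marker tracks. The obstacle is that the mirror position $|x|-i+1$ depends on the total length, which a synchronous automaton cannot compute directly; so instead of locating the mirror position explicitly, I would verify the mirror relationship by a \emph{folding} or \emph{two-pointer} argument that can be carried out in logarithmically many bounded-automatic steps.

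Concretely, I would first have the machine read the input into a register $r_1$. Then, using one bounded automatic relation step, it nondeterministically guesses a single position $i$ and records the symbol $x_i$ (by convoluting a marker and a copy of the guessed symbol onto $r_1$), simultaneously guessing a matching marker that it claims sits at the mirror position $|x|-i+1$ carrying a symbol that differs from $x_i$. The correctness of the mirror placement is then the thing that must be certified. The natural way to certify ``position $j$ is the mirror of position $i$'' in few steps is repeated halving: in each step the automatic function simultaneously peels off (or overwrites) the outermost unmarked symbol from each end of the word, which brings the two markers symmetrically toward the centre; after $\bigO(\log n)$ such halving-style rounds the two marked positions either meet in the middle (certifying they were genuinely mirror images) or the machine detects an inconsistency and that nondeterministic path rejects via $halt\_reject$. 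Each round is a single automatic-function application on the convolved register and hence one $\narm$ step.

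The main technical care is in making each ``symmetric shrink'' step genuinely automatic, i.e.\ expressible by a synchronous finite automaton on the convolution of the register tracks, while guaranteeing that the number of rounds is logarithmic rather than linear; a purely left-to-right deletion of single symbols would cost $\Theta(n)$ steps and defeat the bound, so I would design the step to contract the distance between the two markers by a constant factor per round (for instance by having the automaton keep, in each round, only every second symbol on each side, tracking parity so that the two markers stay in correspondence). Since the separation between the guessed position and its claimed mirror is at most $n$, halving it each round gives $\bigO(\log n)$ rounds, and the final comparison of the two recorded symbols for inequality is a single regular test. Along any correct nondeterministic path the two markers coincide at the centre exactly when $j = |x|-i+1$, and the recorded symbols witness $x_i \neq x_j$; thus some path accepts iff $x \in \mathrm{NONPALINDROME}$, and every path runs in $\bigO(\log n)$ steps, giving $\mathrm{NONPALINDROME} \in \bnal[\log n]$.

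The hard part, and the step I expect to demand the most care, is precisely the design of the logarithmic-time folding so that each contraction remains a bounded automatic relation (bounded because the register length decreases, so the additive length change is certainly within a constant) while the two markers remain provably synchronised under the halving; once that automaton is specified, the overall step count and correctness follow routinely from the observation that $x$ fails to be a palindrome exactly when such a witnessing mismatch pair exists.
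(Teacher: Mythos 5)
Your proposal is correct and follows essentially the same route as the paper's proof: nondeterministically guess the two candidate mirror positions as a regular-form witness, then certify the symmetry condition by $\bigO(\log n)$ rounds of parity-checking while keeping only every second symbol, and finally check that the two marked symbols of the input differ. The one loose spot is that the quantities to be halved are the lengths of the prefix before the first marker and of the suffix after the second marker (equal iff the positions are mirror images), not the separation between the two markers; under that reading your parity-halving verification is exactly the paper's.
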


\begin{proof}
	The idea is the following. After $r_1$ reads an input $w$ where $|w|=n$, $r_2$ guesses (i.e. nondeterministically chooses) a string $0^i \cdot 3 \cdot 1^j \cdot 3 \cdot 2^i$ where $2i + j + 2 = n$ such that the symbols of $w$ at positions $i+1$ and $n-i$ are different, thus $w$ is nonpalindrome. As $r_2$ can only guess in form of $0^*31^*32^*$ with an automatic relation, our task is then to verify that the length is exactly $n$ and the number of symbol `0' and `2' are equal. Verifying that the length is exactly $n$ is trivial by comparing $r_1$ and $r_2$ (reject if one string is longer/shorter). To verify that the occurrences of `0's and `2's are the same, while there exists symbol `0' or `2', repeat the following rounds:
	
	\begin{enumerate}
		\item If the parities of `0's and `2's are different, then reject.
		\item Else, modify the $1,3,5,\cdots$-th occurrences of `0' to `1', and modify the $1,3,5,\cdots$-th occurrences of `2' to `1'.
	\end{enumerate}
	
	\noindent
	If the iteration terminates without reject, both digits are disappearing at the same time and proven to be always equal. The final step is then to verify that the symbols on $r_1$ in the same positions of both `3's on $r_2$ are different, thus a nonpalindrome. Note that both the number of `0's and `2's are halved in each round and therefore the runtime of the verification is logarithmic in the length of the input.
\end{proof}

\noindent
Finally, we wrap up this subsection with the following result.

\begin{theorem}\label{the:bnal_logn}
	$\bnal[\log n] \subset \nl \subseteq P$. 
\end{theorem}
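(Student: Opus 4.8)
The plan is to treat the chain $\bnal[\log n] \subset \nl \subseteq \p$ as three separate claims: the inclusion $\bnal[\log n] \subseteq \nl$, the textbook inclusion $\nl \subseteq \p$, and the strictness of the leftmost containment.

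First I would dispose of $\bnal[\log n] \subseteq \nl$ by instantiating Theorem \ref{the:bnal_nspace} at $f(n) = \log n$. That theorem already yields $\bnal[f(n)] \subseteq \nspace[f(n)]$, and since $\nspace[\log n]$ is by definition $\nl$, the inclusion needs no additional argument.

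Next, for $\nl \subseteq \p$ I would invoke the standard configuration-graph argument. A nondeterministic machine running in logarithmic space has only polynomially many distinct configurations on an input of length $n$ (the work tape holds $O(\log n)$ cells over a fixed alphabet, together with a constant state and head positions bounded by $n$), so acceptance is equivalent to reachability of an accepting configuration from the start configuration in a polynomial-size directed graph, which is decidable in polynomial time by any graph search. I expect this step to be entirely routine.

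The real content of the statement is the strictness $\bnal[\log n] \subsetneq \nl$, and this is the step I would plan around. The idea is to produce a single witness language separating the two classes, and PALINDROME is the natural choice. On the one hand, PALINDROME lies in $\nl$: it is decided even in deterministic logarithmic space by maintaining two binary pointers $i$ and $n-i+1$ and comparing the corresponding input symbols, the pointers being the only unbounded storage and each fitting in $O(\log n)$ bits. On the other hand, Corollary \ref{cor:nonpalimnalon} already establishes PALINDROME $\notin \bnal[o(n)]$, and since $\log n = o(n)$ we have $\bnal[\log n] \subseteq \bnal[o(n)]$, whence PALINDROME $\notin \bnal[\log n]$. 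Thus PALINDROME witnesses $\nl \setminus \bnal[\log n] \neq \emptyset$ and the inclusion is strict. The only mild obstacle is bookkeeping --- confirming the logspace palindrome test stays within logarithmic space and that $\log n = o(n)$ places us under the hypothesis of Corollary \ref{cor:nonpalimnalon} --- since the genuine difficulty (the quadratic crossing-sequence lower bound for palindromes) has already been absorbed into that corollary.
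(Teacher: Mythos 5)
Your proof is correct, but the strictness argument takes a genuinely different route from the paper's. The paper derives $\bnal[\log n] \subsetneq \nl$ indirectly: it combines Lemma \ref{the:nonpalindrome} (NONPALINDROME $\in \bnal[\log n]$) with Corollary \ref{cor:nonpalimnalon} to conclude that $\bnal[\log n]$ is not closed under complement, and then invokes the Immerman--Szelepcs\'enyi theorem that $\nl$ \emph{is} closed under complement, so the two classes cannot coincide. You instead exhibit an explicit witness in $\nl \setminus \bnal[\log n]$, namely PALINDROME itself: it lies in deterministic logspace by the two-pointer test, and it is excluded from $\bnal[\log n] \subseteq \bnal[o(n)]$ directly by Corollary \ref{cor:nonpalimnalon}. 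Your route is more elementary --- it needs neither Lemma \ref{the:nonpalindrome} nor Immerman--Szelepcs\'enyi, only the easy observation that PALINDROME $\in$ L --- whereas the paper's route yields as a byproduct the structurally interesting fact that $\bnal[\log n]$ is not closed under complementation, which is arguably the more informative statement about the model. The shared load-bearing ingredient in both arguments is the crossing-sequence lower bound packaged in Corollary \ref{cor:nonpalimnalon}; the first inclusion via Theorem \ref{the:bnal_nspace} and the textbook $\nl \subseteq \p$ step are handled identically in spirit.
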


\begin{proof}
 Combining Corollary \ref{cor:nonpalimnalon} and Lemma \ref{the:nonpalindrome}, $\bnal[\log n]$ is not closed under complement, whereas $\nl$ is closed under complement \cite{Imm88}\cite{Sze87}. From Theorem \ref{the:bnal_nspace}, we also have $\bnal[\log n]\subseteq \nl$. Hence, $\bnal[\log n] \subset \nl$.
\end{proof}

\subsection{3SAT and Context-free Languages in NARM}

\noindent
We employ the following binary encoding of $\sat$ as an input.

\begin{definition}[Binary Encoding of $\sat$]
	Consider a binary encoding of $\sat$ where a variable $x_i$ is represented by a binary number $i$. We use $s_i \in \{+,-\}$ to represent whether the $i$-th literal is positive or negative such that a clause $(x_i \lor \lnot x_j \lor \lnot x_k)$ is represented by $i+~|~j-~|~k-$. Different clauses are separated by symbol ``$\&$''.
\end{definition}

\noindent
Let $n$ be the length of the formula (and not the number of variables). We can show a nice result that in unary encoding $\sat \in \bnal[n]$. However, using a similar technique, we can actually show that $\sat \in \bnal[\frac{n}{\log n}]$ if it is in binary encoding. The catch is that an additional small constraint has to be imposed: if there are $k$ variables then the first $k$ binary numbers (i.e. $1$ to $k$) must be used as their namings and are in the same length $O(\log k)$. The following is the proof.

\begin{theorem}
	In a nice format, $\sat \in \bnal[\frac{n}{\log n}]$ where $n$ is the length of the $\sat$ formula.
\end{theorem}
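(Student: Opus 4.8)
The plan is to use a guess\nobreakdash-then\nobreakdash-verify $\narm$ that mirrors the unary $\sat \in \bnal[n]$ algorithm but replaces unary counters by binary ones, turning its $\Theta(n)$ rounds into $\Theta(n/\log n)$ rounds. First I would, in a single nondeterministic bounded\nobreakdash-relation step, \emph{mark} exactly one literal in each clause as the \emph{witness} for that clause (the literal claimed to be true); since marking only enriches the alphabet at positions that are already present, it is length\nobreakdash-preserving and regular, hence a legal single $\narm$ step. A choice of witnesses corresponds to a satisfying assignment iff it is \emph{consistent}, that is, no variable occurs among the witnesses with both a $+$ and a $-$ sign: consistency lets us set each witnessed variable so as to make its literal true (extending arbitrarily to the remaining variables), and every clause is then satisfied by its witness; conversely any satisfying assignment induces a consistent witness choice clause by clause. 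Thus the whole task reduces to testing consistency of the marked witnesses.

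To test consistency I would run a \emph{parallel decrement} loop. In each round I decrement, in parallel and in place, the binary index field of every witness by one (clamping at $0$), and I simultaneously detect every witness whose index \emph{reaches} $0$ in this very round (its field equalled $0\cdots01$ before the step) and then deactivate it. A witness with original index $v$ reaches $0$ exactly in round $v$, so two witnesses share a variable iff they hit $0$ in the same round; hence within each round it suffices to check that all witnesses that newly reach $0$ carry the same sign, rejecting if a $+$ and a $-$ appear together. Each round is a single automatic step: the field\nobreakdash-wise binary predecessor, the ``first reaches $0$'' test, the uniform\nobreakdash-sign test over the newly\nobreakdash-zeroed witnesses, and the deactivation are all first\nobreakdash-order definable from the (automatic) predecessor, equality and marking operations, so by Theorem~\ref{the:first_order_definable} their composition is again automatic; moreover the operation is length\nobreakdash-preserving, so nothing is deleted and the $O(1)$ length\nobreakdash-change constraint on a bounded relation is respected.

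The number of rounds, and hence the running time, is governed by the largest witness index, which is at most the number of variables $k$. Here the \emph{nice format} is essential: because the variables are named $1,\dots,k$ in a common bit\nobreakdash-length $\ell=\Theta(\log k)$ and each of the $k$ variables must actually occur, the formula contains at least $k$ literals of length $\ge \ell$ each, so $n\ge k\,\ell=\Theta(k\log k)$ and therefore $\log k=\Theta(\log n)$ and $k=O(n/\log n)$. The loop thus halts after $O(k)=O(n/\log n)$ rounds, each costing $O(1)$ steps, which yields the claimed bound; by contrast, the unary encoding forces values up to $\Theta(n)$ and the same scheme needs $\Theta(n)$ rounds, recovering the earlier $\sat\in\bnal[n]$ result.

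The part I expect to require the most care is justifying that one round is genuinely a single automatic operation. The binary predecessor propagates a borrow from the least to the most significant bit, so if indices are stored most\nobreakdash-significant\nobreakdash-first a naive left\nobreakdash-to\nobreakdash-right automaton cannot perform it; I would handle this either through the double\nobreakdash-pass simulation underlying Theorem~\ref{the:turing_simulation_deterministic}, or, more cleanly, by observing that predecessor on $\mathbb{N}$ and the field delimiters are automatic and appealing to Theorem~\ref{the:first_order_definable}. The second delicate point is the bookkeeping that separates ``already zero from a previous round'' from ``newly zero this round'': conflating the two would let witnesses of different variables, all clamped at $0$, be compared and could trigger spurious rejections, so the ``newly reaches $0$'' detection and the immediate deactivation must be folded into the same automatic step. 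Finally I would spell out the inequality $n\ge k\log k$ explicitly, since it is the one place where the nice\nobreakdash-format hypothesis enters and is precisely what separates the $n/\log n$ bound from the unary $n$ bound.
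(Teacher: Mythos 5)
Your proof is correct, but it follows a genuinely different route from the paper's. The paper keeps all the nondeterminism inside the loop: it stores a copy of the formula in a counter register with every variable field set to $0$, and in round $i$ it increments all counter fields in parallel, guesses one truth value, and broadcasts it to every occurrence of variable $i$ (detected as the positions where the counter field equals the variable field); after $k$ rounds the full assignment is written into the formula and one final automatic step checks satisfiability. You instead concentrate all nondeterminism into a single up-front step (marking one witness literal per clause), reduce satisfiability to consistency of the witness set, and then run a purely deterministic countdown in which witnesses naming the same variable are recognised by reaching $0$ in the same round. The shared engine is identical --- per-field binary counter arithmetic performed as one automatic step per round, with occurrences of the same variable synchronised by when their counters coincide or vanish, and the same counting argument $n \ge k\cdot\Theta(\log k)$ giving $k = O(n/\log n)$ rounds. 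What your version buys is a cleaner separation of guessing from verification (the loop itself is deterministic, and the correctness of the witness-consistency reduction is a standard, easily checked equivalence); what the paper's version buys is that it produces the satisfying assignment explicitly and needs no special treatment of ``newly zero'' versus ``already zero'' fields. Two small points to tidy up: your assertion that $\log k = \Theta(\log n)$ fails when $k$ is small, but the conclusion $k = O(n/\log n)$ you actually need still holds (split on whether $k \ge \sqrt{n}$); and you should state explicitly that the single-step witness marking is a length-preserving regular relation on convolutions (an automaton checking ``exactly one mark between consecutive clause separators''), which makes it a legal bounded automatic relation.
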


\begin{proof}
	After reading the input of formula $f$, assign a register $r_c$ as a counter. Register $r_c$ will copy the input $f$ but every variables in it is replaced by $0$. Then the following iteration is performed:
	
	\begin{enumerate}
		\item Increment every variables in $r_c$.
		\item Nondeterministically choose a truth value $T$ or $F$.
		\item For each variable in $f$ which is equal to the one in $r_c$, assign above truth value to it.
		\item If every variables in $f$ has been assigned, terminate the iteration; else repeat.
	\end{enumerate}
	
	\noindent
	The idea is that in the $i$-th iteration, the algorithm will assign the $i$-th variable a same truth value. Each operations in the iterations such as increment and assigning a truth value can be done in constant steps as it is possible update all variables both in $r_c$ and $f$ in parallel. After the iteration terminates, every variables has been assigned and a single step can be performed to check the satisfiability of the $f$.
	
	Note that the algorithm runs in $\bigO(k)$ steps where $k$ is the number of iterations. The number of iterations is equal to the number of variables if the first $k$ binary numbers are used as the naming. As each variable is written in the same length of $\bigO(\log k)$ therefore there must be at most $\bigO(\frac{n}{\log n})$ many variables, thus the runtime of the algorithm is $\bnal[\frac{n}{\log n}]$.
\end{proof}

\noindent
Beside $\sat$, we also have a result in context-free languages.

\begin{theorem}
	Context-free languages and their closure under union, intersection, concatenation, and Kleene star are in $\bnal[n]$; moreover, it is also optimal.
\end{theorem}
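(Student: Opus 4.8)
The plan is to treat the upper bound ($\subseteq \bnal[n]$) and the matching lower bound (optimality) separately. For the upper bound I would prove the stronger statement that every language in the closure is recognised by a $\narm$ that works in a \emph{real-time} fashion: it reads the whole input into one register with a single $read$ instruction, and thereafter consumes exactly one symbol from the front of that register at each step, so a word of length $n$ is processed in exactly $n + \bigO(1)$ steps. The base case is a context-free language $L$, which I would present by a grammar in Greibach Normal Form, where every rule has the shape $A \to a\alpha$ with $a \in \Sigma$ and $\alpha \in N^{*}$. I would simulate the usual pushdown automaton with the stack of nonterminals held in a second register (top at the front): each step reads the current input symbol $a$ and the stack top $A$, nondeterministically selects a rule $A \to a\alpha$, deletes $a$ from the input and replaces $A$ by $\alpha$ on the stack. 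Because each Greibach rule consumes exactly one terminal, a leftmost derivation of a word of length $n$ uses exactly $n$ rule applications, giving $n$ steps; the input is accepted iff both the input register and the stack register are empty.

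The key technical point is that one such move is a single bounded automatic relation. The only changes are the deletion of one input symbol and the replacement of one stack symbol by a string whose length is bounded by the longest right-hand side, so every register changes length by a bounded constant, which is exactly the boundedness requirement. For automaticity I would argue on the convolution $conv(x, s, x', s')$ of the old input $x$, old stack $s$, new input $x'$ and new stack $s'$: here $x'$ is $x$ shifted by one position and $s'$ is $s$ with its first symbol replaced by $\alpha$, hence shifted by at most a constant. A shift of one component of the convolution relative to another by a bounded amount is recognised by a finite automaton that buffers the (finitely many) symbols needed to realign the shifted component, so the defining relation is regular. By Theorem \ref{the:first_order_definable} this can be fused with the nondeterministic rule choice and the line-number bookkeeping into a single bounded automatic relation, establishing $L \in \bnal[n]$ for every context-free $L$.

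For the closure I would induct on the (constant-size) expression over context-free atoms, preserving the real-time invariant so that the step count stays $\bigO(n)$ while only the number of stack-registers grows by a constant. Union is handled by nondeterministically choosing a disjunct and running its real-time machine. Intersection is handled by running two real-time machines simultaneously on the common input stream, one step advancing both stacks, and accepting iff both accept; the product of two bounded automatic relations on disjoint registers is again bounded automatic. Concatenation $L_1 \cdot L_2$ is handled by nondeterministically guessing, as the input is consumed, the split point: the device for $L_1$ runs until the guess, where it is checked to be accepting, after which the device for $L_2$ is re-initialised and run on the suffix, which is precisely what now sits at the front of the input register because symbols are consumed from the front. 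Kleene star is the same construction with arbitrarily many guessed boundaries and the device for $L$ restarted at each one. In every case exactly one symbol is consumed per step, so the total is still $\bigO(n)$ steps, and a constant-size expression needs only constantly many registers and a constant machine alphabet.

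The optimality is the easiest part given earlier results. The language PALINDROME is itself context-free, and Corollary \ref{cor:nonpalimnalon} gives PALINDROME $\notin \bnal[o(n)]$ (which in turn follows from Theorem \ref{the:bnal_nspace} together with the $\Omega(n^{2})$ one-tape nondeterministic lower bound of Theorem \ref{the:palindrome}); hence no bound better than $\bnal[n]$ holds uniformly for context-free languages, so $\bnal[n]$ is optimal. The main obstacle I anticipate is the base-case verification that the pushdown move, despite producing different length changes in the input and stack registers, is genuinely one bounded automatic relation; the relative-shift-by-a-constant argument is the crux, and once it is secured the closure constructions and the optimality follow with little further work.
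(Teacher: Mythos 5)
Your proposal is correct and follows essentially the same route as the paper: the upper bound via a direct real-time simulation of a Greibach-normal-form pushdown (one terminal consumed per bounded automatic step), and optimality via the fact that PALINDROME is context-free together with Corollary \ref{cor:nonpalimnalon}. The paper's own proof is only a two-line sketch, and your elaboration of the bounded-shift automaticity of a pushdown move and of the closure constructions fills in exactly the details that sketch leaves implicit.
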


\begin{proof}
	The upper bound comes from a direct translation from grammars in Greibach normal form \cite{Greibach65}. Meanwhile, the lower bound follows from Corollary \ref{cor:nonpalimnalon}. 
As PALINDROME is a context-free language, 
there is no $\bnal[o(n)]$ algorithm to recognize the membership for context-free language.
\end{proof}

\section{Advanced Types of ARM}
\label{sec:other_arm}

\noindent
In this last section, we further extend the types of $\arm$ by allowing unboundedness of automatic relations, a booster operation to generate padding, and also a concept similar to Alternating Turing machines.

\subsection{Unbounded Nondeterministic Automatic Register Machine}
\label{subsec:unarm}

\noindent
This type of machine is similar to $\narm$ but it uses unbounded automatic relations instead.

\begin{definition}[Unbounded Nondeterministic Automatic Register Machine]
	An Unbounded Nondeterministic Automatic Register Machine ($\unarm$) is an Automatic Register Machine $M = \langle \Gamma,\Sigma,R,Op,P \rangle$ where $Op$ is restricted to unbounded automatic relations. Alternatively, a $\unarm$ is an Automatic Register Machine $M = \langle \Gamma,\Sigma,G \rangle$ where $G$ is an unbounded automatic relation.
\end{definition}

\begin{definition}[Unbounded Nondeterministic Automatic Register Machine Complexity]
	Let $f(n)$ be any function in the input size $n$. UNAL[$f(n)$] is the class of languages accepted by $\unarm$ in no more than $\bigO(f(n))$ steps.
\end{definition}

As unbounded automatic relation has the ability to guess a huge witness in one step, it may recognize a larger complexity class compared to $\narm$ as shown by the following result.

\begin{theorem}
	For $f(n) = \Omega(n)$, $\nspace[f(n)] \subseteq \unal[f(n)]$.
\end{theorem}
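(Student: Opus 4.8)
The plan is to simulate an arbitrary nondeterministic Turing machine $M'$ running in space $f(n)$ by an $\unarm$ that runs in $\bigO(f(n))$ steps. The key advantage I would exploit is precisely the feature that distinguishes $\unarm$ from $\narm$: an unbounded automatic relation may, in a single step, nondeterministically replace a register's content by a much longer string. This lets me collapse the setup cost of constructing a working tape of length $f(n)$ into a small number of steps rather than $\Omega(f(n))$ incremental steps, and --- more importantly --- lets me guess the entire computation at once and then verify it automatically.

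First I would observe that by the space-bounded version of the simulation philosophy behind Theorem~\ref{the:turing_simulation_deterministic} and Corollary~\ref{the:turing_simulation}, a single configuration of $M'$ (head position, state, and the $f(n)$ tape cells) can be encoded as a single string, and the one-step transition relation of $M'$ on these encoded configurations is an automatic relation, since it is first-order definable from the local transition table of $M'$ and hence automatic by Theorem~\ref{the:first_order_definable}. The natural approach is then a \emph{guess-then-verify} strategy: in one unbounded step, the $\unarm$ guesses a string that is the convolution (or concatenation with separators) of an entire sequence of configurations $C_0, C_1, \ldots, C_T$ of $M'$, where $C_0$ is the initial configuration on the padded input and $C_T$ is accepting. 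Because the relation is unbounded, the guessed witness can be exponentially long in $f(n)$ if needed, which is exactly what is required to hold a full accepting computation that may have up to $2^{\bigO(f(n))}$ configurations.

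Next I would verify the guessed computation. The verification must check three things: that $C_0$ is the correct initial configuration, that each consecutive pair $(C_i, C_{i+1})$ is a legal transition of $M'$, and that $C_T$ is accepting. The subtlety is that checking ``$(C_i,C_{i+1})$ is legal for \emph{all} $i$'' in one automatic step requires the configurations to be laid out so that a synchronous automaton can compare each configuration with its successor --- this is the standard trick of convoluting the sequence with a shifted copy of itself, so that a finite automaton reading the convolution symbol-by-symbol can confirm every local transition simultaneously. Verifying that successive configurations differ only in the neighbourhood of the head, and agree everywhere else, is a local (hence automatic) check. Since each such verification is a single automatic-relation step, and the alignment/shift operations are first-order definable and therefore automatic, the whole verification costs only constantly many $\unarm$ steps, well within the $\bigO(f(n))$ budget. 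I would also need the padding step that blows the input up to the required working length; with unbounded relations this can be done in a bounded number of steps, though one must take care that the padding length is chosen appropriately relative to $f(n)$.

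The main obstacle I anticipate is the tension between the \emph{time budget} $\bigO(f(n))$ and the \emph{length} of the witness that must be guessed and verified. Guessing a witness encoding $2^{\bigO(f(n))}$ configurations is fine for the step \emph{count}, but one must ensure that laying out this witness so that a single synchronous automaton can verify all transitions at once is genuinely expressible as an automatic relation on a convoluted string --- in particular that the ``compare each block with the next block'' check does not secretly require counting or unbounded memory in the automaton. The cleanest way around this is to use the shifted-convolution encoding so that the consistency check between consecutive configurations reduces to a purely local regular condition, and to handle the boundary between blocks with separator symbols. I would spend the most care confirming that the initialisation, the block-to-block transition check, and the acceptance check are each first-order definable from the transition table of $M'$ and the padding structure, so that Theorem~\ref{the:first_order_definable} guarantees they are automatic; once that is established, the step count is immediately seen to be $\bigO(f(n))$ (indeed $\bigO(1)$ plus the padding cost), giving $\nspace[f(n)] \subseteq \unal[f(n)]$.
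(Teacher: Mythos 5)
Your overall strategy coincides with the paper's: use one unbounded guess to produce the entire accepting computation of the space-$f(n)$ machine, laid out as the convolution of the configuration sequence $C_1|C_2|\ldots|C_{t-1}$ with its one-block shift $C_2|C_3|\ldots|C_t$, so that each transition $C_i\to C_{i+1}$ sits vertically aligned inside a block and can be checked synchronously. The gap is in your conclusion that the whole verification then costs only constantly many steps because ``the alignment/shift operations are first-order definable and therefore automatic.'' They are not. After guessing the two rows you must still verify that the lower row really is the upper row shifted by one block, i.e.\ that the copy of $C_{i+1}$ appearing in the lower row of block $i$ agrees with the copy of $C_{i+1}$ appearing in the upper row of block $i+1$. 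These two copies are $m=cf(n)$ positions apart in the register, and a synchronous automaton reading the convolution of the register with its update has only constant memory, so it cannot match two blocks of unbounded length in a single step; the relation ``the second row is the first row shifted by one block of unbounded length'' is essentially the copy relation $\{w\#w\}$ and is not automatic. This is exactly the part of the argument on which the paper spends $cf(n)$ steps: in round $j$ it compares, in parallel over all blocks, only the $j$-th symbol of each block's lower row with the $j$-th symbol of the next block's upper row, marking checked symbols with $@$ so that the automaton can locate position $j$ and needs to carry only one symbol of information across each block boundary per step.

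The theorem itself is not endangered: this consistency check fits comfortably in the $\bigO(f(n))$ budget, and it (together with checking that $C_1$ contains the actual input, which is where the hypothesis $f(n)=\Omega(n)$ is used) is the reason the bound is $\unal[f(n)]$ rather than $\unal[1]$. But as written your proof asserts an $\bigO(1)$-step verification that no automatic relation can perform, and thereby omits the one step of the construction that genuinely requires $\Theta(f(n))$ rounds and a careful, position-by-position implementation.
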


\begin{proof}		
	Consider a one-tape nondeterministic Turing Machine with $cf(n)$ cells for some constant $c$. Let $C_i$ denote the string on the tape of Turing Machine along with the head location (marked on the string) and state of the machine at the $i$-th step, padded with `\#' such that $|C_i| = cf(n)$. Let $|$ be a new symbol not in the alphabet of the Turing Machine. Now $\unarm$ nondeterministically guesses
	$$r = \begin{array}{c | c | c | c | c | c}
		C_1 & C_2 & \ldots & C_{i} & \ldots & C_{t-1} \\
		C_2 & C_3 & \ldots & C_{i+1} & \ldots & C_{t}
	\end{array}$$
	and verifies in one step that $C_1$ is the input (with starting state and the head location marked the leftmost symbol), $C_t$ is in accepting state, and $C_i$ to $C_{i+1}$ is a valid transition in each column separated by $|$.
	In the next $cf(n)$ steps, it verifies that $\forall i, C_i$ in the first row is the same as $C_i$ in the second row. Let $m = cf(n)$, expanding $C_i$ into $C_{i,1}C_{i,2}\ldots C_{i,m}$:
	$$r = \begin{array}{c | c | c | c}
		C_{1,1}C_{1,2}C_{1,3}\ldots C_{1,m} & C_{2,1}C_{2,2}C_{2,3}\ldots C_{2,m} & C_{3,1}C_{3,2}C_{3,3}\ldots C_{3,m} & \ldots \\
		C_{2,1}C_{2,2}C_{2,3}\ldots C_{2,m} & C_{3,1}C_{3,2}C_{3,3}\ldots C_{3,m} & C_{4,1}C_{4,2}C_{4,3}\ldots C_{3,m} & \ldots
	\end{array}$$
	
	\noindent
	In the $j$-th step, more specifically, it verifies that $\forall i$, $C_{i,j}$ from the second row of the $(i-1)$-th block is the same as $C_{i,j}$ from the first row of the $i$-th block. It also marks off $C_{i,j}$ with the marker symbol $@$ after checking. The $@$ will help the automatic relation to identify which is the $j$-th symbol:  the $j$-th symbol is exactly the first non $@$ symbol in each block.
	
	\noindent
	After one step:
	$$r = \begin{array}{c | c | c | c}
		\ @\ C_{1,2}C_{1,3}\ldots C_{1,m} & \ @\ C_{2,2}C_{2,3}\ldots C_{2,m} & \ @\ C_{3,2}C_{3,3}\ldots C_{3,m} & \ldots \\
		\ @\ C_{2,2}C_{2,3}\ldots C_{2,m} & \ @\ C_{3,2}C_{3,3}\ldots C_{3,m} & \ @\ C_{4,2}C_{4,3}\ldots C_{3,m} & \ldots
	\end{array}$$
	
	\noindent
	After two steps:
	$$r = \begin{array}{c | c | c | c}
		\ @\ \ @\ C_{1,3}\ldots C_{1,m} & \ @\ \ @\ C_{2,3}\ldots C_{2,m} & \ @\ \ @\ C_{3,3}\ldots C_{3,m} & \ldots \\
		\ @\ \ @\ C_{2,3}\ldots C_{2,m} & \ @\ \ @\ C_{3,3}\ldots C_{3,m} & \ @\ \ @\ C_{4,3}\ldots C_{3,m} & \ldots
	\end{array}$$
	
	\noindent
	Lastly, it also verifies all $C_i$ are having the same length, i.e. $|C_i| = m$, and it can be done with the help of a counter (for example, in another register). Hence, we have $\nspace[f(n)] \subseteq \unal[f(n)]$ for $f = \Omega(n)$. Note that the condition $f(n) = \Omega(n)$ is necessary as the simulation needs to verify that the correct input are used in the computation which takes $\Omega(n)$ steps. 
\end{proof}

\begin{corollary}\label{cor:unal_pspace}
	$\unal[poly(n)] = \pspace$.
\end{corollary}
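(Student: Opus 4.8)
The plan is to prove the two inclusions $\pspace \subseteq \unal[poly(n)]$ and $\unal[poly(n)] \subseteq \pspace$ separately. The first is essentially immediate from the theorem just established, while the second adapts the column-by-column simulation from the proof of Theorem~\ref{the:bnal_nspace}, with one extra idea needed to cope with the fact that an $\unarm$ register can become exponentially long.

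For the inclusion $\pspace \subseteq \unal[poly(n)]$, write $\pspace = \bigcup_k \dspace[n^k]$. For each fixed $k$ the polynomial $f(n) = n^k$ satisfies $f(n) = \Omega(n)$, so using $\dspace[n^k] \subseteq \nspace[n^k]$ together with the preceding theorem ($\nspace[f(n)] \subseteq \unal[f(n)]$ for $f(n) = \Omega(n)$) gives $\dspace[n^k] \subseteq \nspace[n^k] \subseteq \unal[n^k] \subseteq \unal[poly(n)]$. Taking the union over $k$ yields $\pspace \subseteq \unal[poly(n)]$; note that this direction does not even require Savitch's theorem.

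For the inclusion $\unal[poly(n)] \subseteq \pspace$, I would first show $\unal[poly(n)] \subseteq \nspace[poly(n)]$ and then invoke Savitch's theorem to conclude $\nspace[poly(n)] = \pspace$. Given an $\unarm~M = \langle \Gamma,\Sigma,G \rangle$ running in $t = poly(n)$ steps, I would build a nondeterministic simulator that guesses and verifies the space-time diagram of $M$ exactly as in Theorem~\ref{the:bnal_nspace}: it processes the diagram one column (one register position) at a time, maintaining a single column of $t$ automaton states, one for each of the $t$ step-relations, and updating all of them as the guessed symbols of the next column are fed into the respective convolutions. The symbol guessed for row $i$ at a given column serves simultaneously as the output coordinate of the relation of step $i$ and the input coordinate of the relation of step $i+1$, so consistency between consecutive register contents is automatic. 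Since only $t = poly(n)$ states are stored at any time, this bookkeeping costs only polynomial space, crucially independent of how long the registers actually grow.

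The main obstacle is precisely that an unbounded automatic relation can, in a single step, guess a huge witness and blow the register length up to $2^{poly(n)}$ (this is exactly what the forward direction exploits), so the number of columns is no longer polynomial and a naive termination bound is unavailable. The key observation resolving this is a state-repetition (pumping) argument on the column-configurations: the configuration at each column is the $t$-tuple of current automaton states, of which there are at most $|Q|^{t} = 2^{O(t)} = 2^{poly(n)}$ distinct values. Hence if some accepting computation uses registers longer than $2^{poly(n)}$, two columns must carry the same state-tuple, and deleting from every row simultaneously the block of columns strictly between them leaves each step-relation's automaton run valid (each run resumes from the same state it had reached) and the halting states unchanged, producing a shorter accepting computation. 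Consequently one may assume without loss of generality that all registers have length at most $2^{poly(n)}$, so the column index is tracked by a $poly(n)$-bit counter and the simulation halts in polynomial space. This gives $\unal[poly(n)] \subseteq \nspace[poly(n)] = \pspace$, which together with the first inclusion completes the proof.
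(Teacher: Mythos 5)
Your proposal is correct, and it is in fact more complete than the paper, which states this corollary without any proof. For the inclusion $\pspace \subseteq \unal[poly(n)]$ your argument is the intended one-liner (instantiate the preceding theorem at $f(n)=n^k$; whether one writes $\pspace=\bigcup_k\dspace[n^k]$ or invokes Savitch to get $\nspace[poly(n)]$ is immaterial). For the converse $\unal[poly(n)] \subseteq \pspace$, the paper is silent, and this direction is not automatic: Theorem \ref{the:bnal_nspace} relies on the boundedness of the relations to keep the register length at $n+\bigO(f(n))$, which fails for \unarm's. Your column-state pumping argument is the right repair: the $t$-tuple of automaton states per column takes at most $|Q|^{t}=2^{\bigO(poly(n))}$ values, so a repeated tuple lets you excise the intervening columns from every row and shorten any over-long accepting diagram, after which a $poly(n)$-bit column counter suffices and the standard column-by-column $\nspace[poly(n)]$ simulation plus Savitch finishes the job. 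One small point to add: the excision must be restricted to columns beyond position $n$, since cutting columns out of row $0$ would alter the input word; this costs nothing, as among columns $n+1,\dots,n+|Q|^{t}+1$ a repeated state-tuple is still guaranteed whenever the registers are longer than $n+2^{\bigO(poly(n))}$. With that caveat your argument is sound and fills a gap the paper leaves to the reader.
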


\begin{remark}
	While it is clear that $\np \subseteq \pspace$, the equality/inequality part is still not known thus it is also unclear whether polynomial time models with bounded and unbounded automatic relations as operations are actually the same or different as a decision problem. As mentioned before, if both act as functions instead then the two concepts differ as a nondeterministic bounded automatic function can only have polynomial-sized output while the unbounded one can have exponential-sized output.
\end{remark}

\subsection{Polynomial-Size and Exponential-Size Padded Automatic Register Machines}

\noindent
Automatic functions and bounded automatic relations have a nice property that the length of the string only increases by a constant. As unboundedness introduces more power by its ability to guess a huge witness, we are curious whether the length of the string is what matters. What happens if we are still using automatic functions and bounded automatic relations yet we initially give the "working space" big enough? Here we define such models with a booster operation.

\begin{definition}[Polynomial-Size Padded Automatic Register Machine]
	
	A Polynomial-Size Padde-d Automatic Register Machine ($\parm$) is an ARM $M = \langle \Gamma,\Sigma,R,Op,P \rangle$ where instead of operation $read(R_i)$, there is an operation $read\_and\_boost(R_i,R_j,p)$ where $R_i,R_j \in R$ and $p$ is a polynomial. This operation will:
	
	\begin{enumerate}
		\item Assign the input string to register $R_i$.
		\item Let $n$ be the input length, then a padding string $0^*$ is generated in an adversarial way such that its length is at least $p(n)$ and then is assigned to $R_j$.
	\end{enumerate} 
\end{definition}

\begin{definition}[Exponential-Size Padded Automatic Register Machine]
	
	An Exponential-Size Pad-ded Automatic Register Machine ($\exparm$) is an ARM $M = \langle \Gamma,\Sigma,R,Op,P \rangle$ where instead of operation $read(R_i)$, there is an operation $read\_and\_boost(R_i,R_j,p,q)$ where $R_i,R_j \in R$ and $p,q$ are polynomials. This operation will:
	
	\begin{enumerate}
		\item Assign the input string to register $R_i$.
		\item Let $n$ be the input length, then a padding string $0^*$ is generated in an adversarial way such that its length is at least $2^{p(n)} \cdot q(n)$ and then is assigned to $R_j$.
	\end{enumerate} 
\end{definition}

\noindent
With $\parm$ model, a working space with polynomial-size larger than the original length of the input is generated and it can only be done once when reading the input. Any polynomial $p$ can be chosen by the designer of the algorithm but, due to the adversary nature of the booster step, it does not guarantee that the padding string has exactly $p(n)$ symbols --- it only guarantees that there are at least $p(n)$ symbols. The intention is that in $\bigO(1)$ steps only automatic functions and relations can be computed, i.e. it is not possible to have a function $h: x \rightarrow 0^{p|x|}$, thus a booster step is solely used to enlarge the working space. Note that one can however in $\bigO(\log n)$ steps trims the booster string to get the exact length of $p(n)$. This concept also applies for $\exparm$ model where the padding length will be at least $2^{p(n)} \cdot q(n)$ and all other properties can be chosen adversially.

\begin{remark}
	As any upper bound of $\parm$ and $\exparm$ are also welcome, one can modify the boosting operation to $read\_and\_boost(R_i,R_j,c)$ then just choose a constant $c$ such that the padding string has to have at least length $n^c$ in $\parm$ and $2^{n^c+c}$ in $\exparm$.
\end{remark}

\noindent
We then have the following definitions for the complexity.

\begin{definition}[Polynomial-Size Padded Deterministic Automatic Register Machine Complexity]
	Let $f(n)$ be any function in the input size $n$. $\pdal[f(n)]$ is the class of languages accepted by deterministic PARM in $\bigO(f(n))$. Here the PARM has to finish in $\bigO(f(n))$ steps no matter what length of the padding string a booster step generate.
\end{definition}

\begin{definition}[Exponential-Size Padded Deterministic Automatic Register Machine Complexity]
	Let $f(n)$ be any function in the input size $n$. $\expdal[f(n)]$ is the class of languages accepted by deterministic ExpARM in $\bigO(f(n))$. Here the ExpARM has to finish in $\bigO(f(n))$ steps no matter what length of the padding string a booster step generate.
\end{definition}

\noindent
It turns out allowing a large working space gives the machine more power as shown by the following result.

\begin{theorem}\label{the:qsat_expdal}
	The problem $QSAT$ is in $ExpDAL[n]$.
\end{theorem}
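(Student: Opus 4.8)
The plan is to decide $QSAT$ by using the exponentially large working register supplied by the booster step to build the full truth table of the matrix, to evaluate that matrix on all assignments in parallel, and then to collapse the table according to the quantifier prefix. Write the input as $\Phi = Q_1 x_1 \cdots Q_k x_k\,\phi$ with $k \le n$, and begin by applying the booster operation to obtain a register holding $0^m$ with $m \ge 2^{p(n)}\cdot q(n)$; choosing the polynomials $p,q$ so that $m$ comfortably exceeds $2^k\cdot \bigO(n)$ provides room for $2^k$ blocks, one per assignment $a\in\{0,1\}^k$, each block wide enough to hold the data needed to evaluate $\phi$ at $a$.

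First I would generate, across these $2^k$ blocks and in lexicographic order, the list of all assignments, so that block $a$ encodes the bits of $a$. Because a single automatic step acts synchronously on the entire register, this enumeration must be produced by a doubling scheme: maintaining the invariant that after round $t$ the first $2^t$ blocks carry the correct $t$-bit prefixes, and advancing from $2^t$ to $2^{t+1}$ correct blocks in $\bigO(1)$ steps. This yields the complete enumeration in $\bigO(k)$ steps. Next I would evaluate $\phi$ simultaneously on all $2^k$ assignments. Since $\phi$ has size $\bigO(n)$, I would process it clause-by-clause (equivalently, gate-by-gate): each primitive evaluation — reading off the truth value of a literal inside every block and folding it into a running partial value in every block — is first-order definable from automatic relations and hence, by Theorem~\ref{the:first_order_definable}, is itself a single automatic step applied to the whole register. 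After $\bigO(n)$ such steps every block holds $\phi(a)$.

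Finally I would fold the table according to the prefix: in round $i$ (for $i=1,\ldots,k$) collapse the variable $x_{k-i+1}$ by replacing each adjacent pair of sub-results with their disjunction when $Q_{k-i+1}=\exists$ and with their conjunction when $Q_{k-i+1}=\forall$. After $k\le n$ rounds a single value remains, and the machine accepts iff that value is true. Summing the three phases gives $\bigO(n)$ steps in total, which places $QSAT$ in $\expdal[n]$.

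The main obstacle is realizing the combination of \emph{exponentially distant} cells as genuine automatic-function steps: both the doubling used to enumerate assignments and, above all, round $i$ of the fold must combine cells that lie $\Theta(2^{i})$ apart, a range that no local, constant-window automaton can bridge in a single step. The resolution — and the technically delicate heart of the argument — is to carry out every long-range combination by convolving the working register with a copy of itself shifted by the current stride, so that the two cells to be combined become aligned and can then be combined by one automatic step (again justified via Theorem~\ref{the:first_order_definable}); the required shift-by-$2^{i}$ registers are themselves produced by a doubling recurrence, obtaining shift-by-$2^{i+1}$ from shift-by-$2^{i}$ inside the booster-supplied space. Verifying that each stride-doubling and each fold round is indeed one bounded automatic operation, independent of the adversarially chosen padding length, is the step I expect to demand the most care and to lean most heavily on the prior machinery for automatic functions.
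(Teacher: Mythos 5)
Your high-level architecture coincides with the paper's: lay out $2^k$ blocks in the padded register, one per assignment, evaluate the matrix on all of them in parallel in $\bigO(n)$ steps, and then fold pairs of blocks according to the quantifier prefix in $k$ rounds. The gap is in the mechanism you propose for the long-range steps. A ``copy of the register shifted by the current stride $2^i$'' is not an automatic function for non-constant $i$, and it cannot be obtained from the stride-$2^i$ copy in $\bigO(1)$ automatic steps either: to turn a shift by $s$ into a shift by $2s$, the $s$ symbols sitting in positions $s+1,\dots,2s$ of the already-shifted copy must reappear at positions $2s+1,\dots,3s$ of the output, so $\Omega(s)$ symbols of information must cross a single cell boundary; by the crossing-sequence argument of Subsection~\ref{subsec:darm_key} (the same one used in Theorems~\ref{the:daglower} and~\ref{the:sortdal}) this costs $\Omega(s)$ automatic steps, and summing over the doublings up to stride $2^k$ gives exponentially many steps overall. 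The same objection applies to your assignment enumeration: advancing from $2^t$ to $2^{t+1}$ correct blocks requires copying $\Theta(2^t)$ blocks' worth of prefixes across a distance of $\Theta(2^t)$ blocks, which again is not a constant number of automatic steps.

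The resolution in the paper is that no alignment or bulk data movement is needed, because in each round only constantly many symbols have to traverse any gap, and a synchronous automaton can carry constantly many symbols across an arbitrary distance in its finite state. Concretely: assignments are enumerated by keeping, in round $\ell$, every $2^\ell$-th block separator ``active'' and toggling a one-bit truth value at each active separator during a single left-to-right pass (then deactivating every second active separator for the next round), so the $\ell$-th variable alternates with period $2^\ell$ blocks without any copying; and the fold over a quantifier pairs up the surviving one-symbol results $s/u$ in a single pass, with the automaton remembering only the value of the first element of the current pair and the parity of survivors seen so far, writing the combined value at the second element and erasing the first. Each such round is one automatic step no matter how far apart the paired cells lie. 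If you replace your shifted-convolution device by this ``carry one symbol in the state'' device, the rest of your argument goes through and matches the paper's proof.
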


\begin{proof}
Given an input $x$ which is a QSAT-formula of length $n$, let $m$ be the number
of variables in it. One assumes the following coding and only in that case
the number returned by the algorithm needs to be correct:
\begin{enumerate}
	\item There is a prefix of the variables saying whether they are universal or
	existential quantified and the variables are ordered as numbers $m-1,
	m-2,\ldots,2,1,0$ in this prefix;
	\item Each variable name has $k$ binary digits and $m \leq 2^k < 2m$;
	\item The quantified part is in square brackets and within this,
	the variables are either prefixed by + or by - to indicate
	that they are non-negated or negated in the clause, respectively, and
	they have furthermore an underscore \verb|_| after the variable to
	record the value and they are separated by commas inside the clause
	and clauses are separated by semicolons;
	\item Every number $0,1,\ldots,m-1$ (in binary) occurs as a variable name
	somewhere in the inner part of the formula;
	\item The instance is redundance-free, that is, no clause $\alpha$ logically
	implies a clause $\beta$.
\end{enumerate}
For $m=3$, an example of a correct coding is the following:
$$
\exists 10 \forall 01 \exists 00 [+00\_,+01\_,+10\_;-01\_,-10\_]
$$
and it says that variable number 2 is existentially quantified followed by
variable 1 being universally quantified followed by variable number $0$
being existentially quantified such that one of the variables is positive
and one of the variables number 1 and number 2 is negative. The variable
order is in each quantification from top to bottom and that exactly the
variables are quantified which occur inside the formula.
If this is not the case then the algorithm can return any value.
Now the algorithm is as follows:

\begin{enumerate}
	\item Read input $x$ and let $n$ denote its length;
	\item One guesses (i.e. uses booster step) a string $y$ of length at least $2^{n+2} \cdot n$
	and in the next step one makes all symbols in $y$ to $3$;
	\item While the position of the second $3$ in $y$ is not at least two
	symbols after the end of $x$ do Begin
	Change in $y$ the second, fourth, sixth, $\ldots$ occurrence of $3$
	to $5$ End;
	\item Make the symbol after the last occurrence of $3$ to $4$ and delete
	everything afterwards (this can either increase the length of $y$ by
	$1$ or leave it as it is or make $y$ shorter);
	\item For all symbols $a$ between the two square brackets in $x$ (which might
	be read out by using a marker in the loop) and then for $a$=``.'' Do Begin
	For each occurrence of a $3$ in $y$ go to the first $5$ after it
	and replace it by $a$ End;
	\item Make a copy $z$ of $y$; Replace all $1$ in $z$ by $0$;
	\item If there are still some variable names which do not have an $s$ or $u$
	after them and either there is no position where the variable names in $y$
	and $z$ coincide and or there is a position where the variable name in
	$z$ consists only of $1$ then return the value $0$;
	\item If there is no variable name in $y$ which is not followed by either
	$s$ or $u$ then go to line 13;
	\item Go over $y$ and $z$ in parallel and with a bit $b$ having value $u$ at
	the beginning such that whenever one passes a $3$ in $y$ then one
	swaps $b$ from $u$ to $s$ or from $s$ to $u$
	and whenever one passes a variable name which coincides in $y$ and $z$
	then one replaces in $y$ the $\_$ after it by the value of $b$;
	\item Go over $y$ and change the second, fourth, sixth, $\ldots$ $3$ to
	$2$;
	\item Go over $z$ and increment all binary variable names by $1$ (an automatic
	function can do this);
	\item Go to line 7;
	\item In $y$, make the symbol after the second occurrence of a $3$ to $4$ (this
	$3$ sits after the first instance where all variables are evaluated
	to $s$) and discard all symbols after this $4$;
	\item Now replace in $y$ all $2$ by $3$;
	\item For each instance in $y$ (where an ``instance'' in $y$ refers to the segment between 
two consecutive occurrences of $3$), if every clause contains either a substring
	$+ \{0,1\}^* s$ or a substring $- \{0,1\}^* u$ then all clauses are
	satisfied and the first $5$ after the dot at the end of the
	instance is made an $s$ else the instance is not satisfied and the
	first $5$ after the dot at the end of the instance is made
	a $u$;
	\item Go backward through the quantifier prefix of $x$ (using some marker)
	until again $y$ has only two $3$s Do Begin
	\item For each subsequent two occurrences of $s,u$ after a dot do Begin \\
	if the quantifier is universal and both are $s$ then replace the
	first by $5$ and the second one by $s$ else \\
	if the quantifier is universal and at least one is $u$ then replace
	the first by $5$ and the second by $u$ else \\
	if the quantifier is existential and both are $u$ then replace the
	first by $5$ and the second by $u$ else \\
	if the quantifier is existential and at least one is $s$ then replace
	the first by $5$ and the second by $s$ End End;
	\item Now there is only one $s$ or $u$ left after a dot and this is the
	result of the formula ($s$ is satisfied and $u$ is unsatisfied).
\end{enumerate}

\noindent
Here are some explanations. First note that functions which are computed by
one-way Turing machines going either over one input alone or over several
inputs synchronously and doing changes on the spot are automatic and all
automatic functions in this algorithm are of this type; for line 11 the
Turing machine has to go backward for easily seen to be deterministic.
Actually nondeterministic Turing machines going forward can also do the
job. The characterisation \cite{Case}
says that the Turing machine processing the
variable $y$ (or $y$ and $z$ in parallel) for an automatic function can
change the direction constantly many times, but here the easier one-directional
pass is sufficient for all automatic functions; when comparing the variable
names in $y$ and $z$, the Turing machine goes over the word by accessing
in both variables the same position in each step.

\begin{itemize}
\item In line 2, $4n$ is a safe upper bound of the length of copy of the inner part of the
quantified variable plus some additional space and $n$ is a safe upper bound
on the number of variables, thus the length reserved in $y$ is sufficient
(one cannot fix the length exactly but only get it above the bound for
avoiding of coding).

\item Line 3 thins out symbols $3$ in $y$ by always converting every second to $5$
(what an automatic function can do) until the gap between the first $3$ (at the
beginning fo the word) and the second $3$ is longer than the length of $x$,
this are $O(\log n)$ many runs through the body of the loop which is implemented
by one automatic function.

\item Line 4 rectifies the end of the word and removes trailing $5$s and adds a $4$
as an end marker.

\item Line 5 copies in $O(n)$ rounds the inner part of the formula into $y$ starting
after each $3$ (there is enough space for this) by always for each symbol $a$
copied replacing the first $5$ after each $3$ by $a$. Each copying of a symbol
$a$ into all these places can be done by invoking one automatic function to
update $y$ and thus in overall $O(n)$ steps the copying can be done.

\item Line 6 copies $y$ into a new variable $z$ and makes in $z$ all variable names
to $00\ldots 0$ by making all $1$ to $0$.

\item Lines 7--12 are one big loop in order to set in $y$ all values of the variables
in a consistent manner with the additional property that all $2^m$ ways of
putting the variable values are covered in an instance.

\item Line 7 causes early termination of the algorithm in the case that the names
of the used variables are not contiguous.

\item Line 8 leaves the loop in the case that all variables are assigned $s$
(for satisfied or true) or $u$ (for unsatisfied or false).

\item Line 9 and line 10 make the values of the $\ell$-th variable to change
every $2^\ell$ instances, initially the zeroth variables changes from
every instance to the next between $u$ and $s$ and subsequently each
second, fourth, sixth, $\ldots$ $3$ becomes a $2$ to increase the
number of instances with the same value from $2^\ell$ to $2^{\ell+1}$
before processing the $\ell$+first variable.

\item Before going into the next iteration of the loop, all variable names
in $z$ are increased from $\ell$ to $\ell+1$.

\item Line 13 is just to make sure that exactly $2^m$ instances will survive
and these have the variable values $(u,\ldots,u,u)$, $(u,\ldots,u,s)$,
$(u,\ldots,s,u)$, $(u,\ldots,s,s)$, $\ldots$, $(s,\ldots,s,s)$ and a
one-way Turing machine going over the word $y$ can detect the position of the
instance where all variables are $s$ and change the symbol after the
$3$ after that instance to $4$ and discard the symbols after that;
thus this modification can be computed by an automatic function.

\item Line 14 is only a step needed to make the separation by two neighbouring
instances always by a 3 and not by a 3 or a 2. This eases the writing
of the algorithm, but has no further significance.

\item For each instance in $y$, a one-way Turing machine can check whether all
literals with value $s,u$ of the variable are set such that the instance
is satisfied, if so then the instance gets after its final dot the value $s$
else the instance gets after its final dot the value $u$. This is done in
one scan for all instances in $y$.

\item The loop in Lines 16 and 17  goes backwards over the quantifiers in $x$ and for
each quantifier, it identifies the still active values of blocks of
instances in the form $.s$ and $.u$ after the instance and it evaluates
the first and the second with the result in the second, the third and the
fourth with the result in the fourth, the fifth and the sixth with the
result in the sixth and so on in order to process the currently quantified
variable; in the $\ell$-th round $(\ell$ goes from $0$ to $m-1)$ this loop
treats the quantification of the $\ell$-th variable and takes the result
of two neighbouring blocks blocks of $2^{\ell}$ instances with the
$\ell$-th variable being $u$ and $s$ respectively from before and puts
the result into the entry for the second of these blocks while the entry
for the first of these blocks get erased. After doing this for all $m$
variables, one entry for the block of $2^m$ instances remains and this
is the result of this loop and value of the QSAT formula.

\item Line 18 returns this value. Note that the number of rounds in each of the
loops is $O(\log n)$ in the loop of Line 3, $O(n)$ for the loop copying
the formula in all applicable positions in Lines 4--5, number $m$ of
variables for the big loop in lines 7--12, number $m$ of variables for the
loop in lines 16--17. Thus the overall amount of operations in the evaluation
of the QSAT formula is $O(n)$; note that $m \leq n$.

\end{itemize}

\noindent
This completes the algorithm and its explanations for this proposition.
\end{proof}

\noindent
We also show the connection between $\exparm$ and $\unarm$ as the following.

\begin{theorem}
	$PSPACE = \expdal[poly(n)] = \unal[poly(n)]$.
\end{theorem}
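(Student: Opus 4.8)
The plan is to prove the two equalities separately and to lean on the already-established Corollary~\ref{cor:unal_pspace}, which gives $\unal[poly(n)] = \pspace$. Thus it suffices to show $\expdal[poly(n)] = \pspace$, which I would split into the inclusions $\pspace \subseteq \expdal[poly(n)]$ and $\expdal[poly(n)] \subseteq \pspace$. The first inclusion exploits the power of the exponential padding together with Theorem~\ref{the:qsat_expdal}; the second is the genuinely delicate one, because the registers of an $\exparm$ can have length $2^{poly(n)}$ and so cannot be stored directly in polynomial space.

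For $\pspace \subseteq \expdal[poly(n)]$, I would argue that $\expdal[poly(n)]$ is closed under polynomial-time many-one reductions and then appeal to the $\pspace$-completeness of $QSAT$ (under logspace, hence polynomial-time, reductions) together with Theorem~\ref{the:qsat_expdal}, which places $QSAT$ in $\expdal[n]$. Concretely, given $L \in \pspace$, fix a polynomial-time computable reduction $g$ with $x \in L \iff g(x) \in QSAT$, and let $r$ be a polynomial bounding $|g(x)|$ in terms of $n = |x|$; I may moreover assume $g$ outputs formulas in the ``nice coding'' required by Theorem~\ref{the:qsat_expdal}, since enforcing the coding conventions (contiguous fixed-width variable names, the ordered quantifier prefix, and deleting any clause that is logically implied by another, which is equivalence-preserving and polynomial-time) is a polynomial-time post-processing step. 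The $\exparm$ for $L$ first performs its single $read\_and\_boost$, choosing booster polynomials large enough that the padding has length at least $2^{p_0(r(n))} \cdot q_0(r(n))$, where $p_0, q_0$ are the polynomials demanded by the $QSAT$ algorithm; it then computes $g(x)$ in polynomially many steps, which is possible because $\dal[poly(n)] = \p$ by Theorem~\ref{the:dal_ptime} and every $\darm$ operation is available to the $\exparm$; finally it runs the $QSAT$ procedure of Theorem~\ref{the:qsat_expdal} on $g(x)$ using the padding already generated. Since the padding was provisioned to exceed what that algorithm needs for a formula of length $|g(x)| \le r(n)$, and the algorithm tolerates ``at least'' this much padding, the whole computation finishes in $\bigO(r(n)) + poly(n) = poly(n)$ steps.

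For $\expdal[poly(n)] \subseteq \pspace$, the plan is to adapt the column-by-column simulation already used in the proof of Theorem~\ref{the:bnal_nspace}. Fix an $\exparm$ deciding $L$ in $T = poly(n)$ steps, and one convenient padded register length $M = 2^{poly(n)}$; the choice of $M$ is immaterial since by definition the machine's answer is independent of the padding length. Writing the computation as a space-time diagram with rows $C_0, C_1, \ldots, C_T$ (successive register contents, all padded to a common length $M$) and using the single-operation representation so that each step is one automatic function $G$, the validity of a step amounts to $conv(C_t, C_{t+1})$ being accepted by the fixed DFA $A_G$ recognising the graph of $G$, which is a single left-to-right pass over the columns. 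A nondeterministic Turing machine then sweeps the columns $j = 1, \ldots, M$ one at a time: at column $j$ it guesses the $j$-th symbol of each row, checks that the guessed symbol of $C_0$ agrees with the initial configuration (the input symbol if $j \le n$ and the padding symbol otherwise), and advances, for each $t$, the copy of $A_G$ that is reading $conv(C_t, C_{t+1})$. It maintains only one $A_G$-state per consecutive pair of rows, hence $\bigO(T) = poly(n)$ states of constant size, together with a position counter of $\log M = poly(n)$ bits; after the final column it accepts iff every copy of $A_G$ is in an accepting state, $C_0$ was the correct initial configuration, and the line-number component of the appropriate row records $halt\_accept$. This uses only polynomial space, so $L \in \nspace[poly(n)] = \pspace$ by Savitch's theorem. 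Combining the two inclusions with Corollary~\ref{cor:unal_pspace} yields $\pspace = \expdal[poly(n)] = \unal[poly(n)]$.

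The hard part will be the upper bound $\expdal[poly(n)] \subseteq \pspace$: the obvious simulation would try to store a length-$2^{poly(n)}$ register and immediately blow the space budget. The crucial realisation is that, because each step is a single automatic function whose graph is recognised by a constant-size DFA, one never needs the whole register at once but only one DFA state per time step; since there are only polynomially many time steps, the entire sweep runs in polynomial space even though the register is exponentially long. Two further points I would need to check carefully are that fixing a single convenient padding length $M$ is legitimate (it is, because $\expdal$ is defined to accept uniformly over all admissible paddings) and that the reduction in the lower-bound direction can be made to produce $QSAT$ instances in the exact coding demanded by Theorem~\ref{the:qsat_expdal} while still running in polynomial time.
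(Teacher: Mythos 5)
Your proof is correct, but it takes a genuinely different route from the paper's in the upper-bound direction. The paper closes a cycle of inclusions $\pspace \subseteq \expdal[poly(n)] \subseteq \unal[poly(n)] = \pspace$: the middle inclusion is obtained by a machine-level simulation in which the exponential booster step is replaced by an unbounded nondeterministic guess of the padding string, followed by polynomially many verification steps (repeatedly converting half of the $1$s to $0$s and checking that $n^c+c$ rounds do not exhaust the word) to certify that the guess is long enough; the final equality is then Corollary \ref{cor:unal_pspace}. You instead prove $\expdal[poly(n)] \subseteq \pspace$ directly by a column-by-column sweep of the space-time diagram in the style of Theorem \ref{the:bnal_nspace}, maintaining one DFA state per time step plus a position counter so that the exponentially long registers never need to be stored; this makes explicit the space-bounded simulation that the paper only invokes implicitly through Corollary \ref{cor:unal_pspace}, and it correctly isolates the one delicate point (register length $2^{poly(n)}$ versus a polynomial space budget). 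What your route does not deliver is the direct simulation of an \exparm\ by a \unarm, which the paper's intermediate inclusion exhibits and which is of some independent interest. The lower-bound direction $\pspace \subseteq \expdal[poly(n)]$ is essentially identical in both proofs (many-one reduce to QSAT, provision the padding for the image length, and invoke Theorem \ref{the:qsat_expdal}); your explicit justification that the coding conventions of that theorem can be enforced by polynomial-time post-processing is a point the paper dismisses with ``without loss of generality''.
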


\begin{proof}
We first prove $PSPACE \subseteq \expdal[poly(n)]$. Let a problem $R$ in $PSPACE$ be given which is many-one reduced by a function $f$ in polynomial time to QSAT where $f$ increases the size of an instance $n'$ to at most $p(n')$ for some polynomial $p$. Let an input $x'$ of $R$ be given. One generates in padding an input $y$
of length $2^{p(n') + 4} \cdot p(n')$ and then one translates
the instance $x'$ into $x=f(x')$ of length $n$ with $n \leq p(n')$
in $poly(n')$ steps. Afterwards one runs above $QSAT$ algorithm in $O(n)$
= $O(p(n'))$ steps to decide whether $x$ is in $QSAT$, without loss of
generality $f$ produces a formula which satisfies all the constraints
of the above algorithm so that the algorithm is correct.
Thus the algorithm is correct.

Next, we prove that $\expdal[poly(n)] \subseteq \unal[poly(n)]$:
One just replaces the step of producing the $y$ by exponential padding
having at least $2^{n^c+c}$ symbols (where $c$ is a sufficiently
large constant) by an unrestricted guessing of the $y$ and then one makes
all bits in $y$ to $1$ and then one verifies that $n^c+c$ many steps
of the algorithm
which converts in each step half of the $1$s to $0$s does not produce a
$0$ only word. Then the word is long enough and one can proceed with the
previous ExpDAL$[poly(n)]$ algorithm. By Corollary \ref{cor:unal_pspace}, we know that $\unal[poly(n)] = PSPACE$ therefore the equality is established.
\end{proof}

\noindent
We move on to the result in polynomial-size padding.

\begin{theorem}\label{the:pdal_polylog}
	Let $c$ be a constant. Solving $(c \log n)$-variable $\sat$ is in $PDAL[\polylog(n)]$.
\end{theorem}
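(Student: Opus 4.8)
The plan is to reuse the machinery of the $\exparm$ algorithm for $QSAT$ (Theorem~\ref{the:qsat_expdal}), exploiting the fact that a $(c\log n)$-variable instance has only $2^{c\log n}=n^{c}$ truth assignments. Since this number is polynomial, a $\parm$ with a booster string of length, say, $\geq n^{c+2}$ already has room to keep one ``instance slot'' per assignment, so no exponential padding is needed. The key observation is that, when the number of variables is $m=c\log n$, almost every phase of the $QSAT$ algorithm is already cheap: creating the $n^{c}=2^{m}$ slots by repeatedly thinning out every second separator costs only $\bigO(\log n)$ rounds (each halving of the separator count is one automatic step, and we start from a polynomially long booster string); writing the $m$-bit assignment vectors into the slots by the binary-counter loop costs $\bigO(m)=\bigO(\log n)$ rounds; and annotating each literal with its value and then testing clause satisfaction by a variable-by-variable sweep (exactly as in the annotation loop of Theorem~\ref{the:qsat_expdal}, using aligned bit-parallel comparison of the $\bigO(\log\log n)$-bit variable names) again costs $\bigO(m)=\bigO(\log n)$ rounds. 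A final $\bigO(\log n)$-round disjunction over the slots (accept iff some slot is satisfied) completes the decision. Thus every ingredient except one runs in $\polylog(n)$ steps.

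The single expensive phase is the one that, in the $QSAT$ proof, copies the formula into each slot symbol by symbol; that loop runs for $\bigO(|f|)=\bigO(n)$ rounds and must be eliminated. Here lies the main obstacle, and it is a genuine one: by the duplication lower bound behind Theorem~\ref{the:daglower}, merely producing a second full copy of a length-$n$ string forces a crossing sequence of length $\Omega(n)$, so one cannot distribute the raw length-$n$ formula to the slots in sublinear time even with padding available. The way around this is to notice that a $(c\log n)$-variable $3\sat$ formula carries only $\polylog(n)$ bits of genuine information: there are at most $\binom{c\log n}{3}\cdot 2^{3}=\bigO((\log n)^{3})$ distinct $3$-clauses, so $f$ is logically equivalent to the set of clause-types it actually contains, a string of length $\polylog(n)$. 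If $f$ is first replaced \emph{in place} by such a canonical, duplicate-free representative $f'$, then the copy loop of Theorem~\ref{the:qsat_expdal}, applied to $f'$, runs for only $\bigO(|f'|)=\polylog(n)$ rounds, and the whole algorithm stays within $\pdal[\polylog(n)]$.

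Consequently the heart of the proof is a polylog-time, in-place \emph{compression} of $f$ to $f'$ that never duplicates the full input. I would realise it by generating, from scratch in the booster space, the list $T$ of all $\bigO((\log n)^{3})$ candidate clause-types (via nested binary counters over the $c\log n$ variable names and the constantly many sign patterns, in $\bigO(\log n)$ rounds), and then marking in $T$ exactly those types that occur in $f$, so that the set of marked entries is $f'$. The delicate point to verify is that this presence test fits the automatic-function discipline: variable names are $\bigO(\log\log n)$ bits long and hence cannot be read at an arbitrary position in one step, and sorting-based deduplication is ruled out because Theorem~\ref{the:sortdal} already costs $\dal[nm]=\Omega(n)$. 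I expect the marking to be done by a variable-by-variable sweep that, round by round, aligns the current variable name (held in an auxiliary register and incremented as in the $QSAT$ annotation loop) against the literals of both $f$ and $T$ and synchronises their occurrence patterns, so that after $\bigO(\log n)$ rounds each candidate in $T$ knows whether all of its literals matched some clause of $f$. Making this matching simultaneously correct for all $\polylog(n)$ candidates, while respecting that each automatic step is a single synchronous finite-state pass and never copies $f$, is the part I expect to require the most care; if the input is assumed to be presented in a ``nice format'' (in the spirit of the five coding conventions used for $QSAT$), much of this bookkeeping can be shifted onto the encoding and the argument simplifies considerably.
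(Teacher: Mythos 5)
Your proposal reaches the right conclusion and correctly accounts for the cost of every phase of the Theorem~\ref{the:qsat_expdal} machinery, but it takes a genuinely different route at the one point that matters. The paper does not compress the formula at all: it assumes from the outset that the input formula is written in the coding conventions of Theorem~\ref{the:qsat_expdal}, in particular convention~(5) that the instance is \emph{redundance-free} (no clause implies another). With only $c\log n$ variables this caps the number of clauses at $O((\log n)^3)$ and hence caps $|x|$ at $10(c\log n\log\log n)^3+c'$; any longer input is rejected outright as malformed. Consequently the ``expensive'' copy loop you single out already runs for only $O(|x|)=\polylog(n)$ rounds and the whole difficulty you spend your second and third paragraphs on evaporates --- this is exactly the ``shift the bookkeeping onto the encoding'' fallback you mention in your last sentence, adopted as the main (and only) line. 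Your primary route is more ambitious in that it would handle a length-$n$ formula with repeated clauses, and your diagnosis that the raw copy loop cannot be made sublinear (via the duplication/crossing-sequence lower bound of Theorem~\ref{the:daglower}) is sound; the in-place compression is also plausible --- one can iterate over the $O((\log n)^3)$ candidate clause types and, for each, reuse the symbol-by-symbol broadcast of line~5 of the QSAT algorithm to align it with every clause of $f$ in $O(\log\log n)$ steps and test for a match in one parallel pass, for a $\polylog(n)$ total --- but as written this subroutine is only gestured at, and it is the sole unverified ingredient of your argument. What your approach buys is a theorem for a laxer input format; what the paper's buys is a two-paragraph proof. Either way you should state the format assumption explicitly in the theorem's hypotheses rather than leaving it conditional.
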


\begin{proof}
Without loss of generality the input is given as a pair $(x,y')$ where
$y'$ has length $n^{c+2}$ and $x$ is a 3SAT formula (all variables existentially
quantified) written in the conventions of Theorem \ref{the:qsat_expdal}. In particular
the variables are written in binary and the instance $x$ is redundance-free.
Thus the instance $x$ has at most length $10 (c \log n \log \log n)^3+c'$
for some constant $c'$ and the factor $10$ is larger than the expected
factor $8$ to absorb constant add-ons per literal and the quantifier prefix.
If $x$ violates this length bound then one can return the value $0$, as
the input is not in adequate form.

Note that $n$ can be computed from $y'$ in $\log(n)$ rounds. The computation
and checking of the length-bound can be done in polylogarithmic many steps.
Furthermore, one can do a padding using that the $y$ guessed must have
$2^{c \log(n)}$ instances of length $|x|$ (which is polylogarithmic and thus
shorter than n)  so that $n^{c+2}$ is a save lower bound for the length
of the padding parameter. The algorithm of Theorem \ref{the:qsat_expdal} can now run with
$x$ and $y$ using $|y| \geq 2^m \cdot |x| \cdot 4$ and it needs
$O(|x|)$ steps where $|x| \leq 10 (\log n \cdot c)^3 + c'$. Thus the overall
amount of steps done is polylogarithmic in $n$.
\end{proof}

\noindent
Lastly, we prove that $\pdal[o(n)] \neq \dal[o(n)]$ if the following hypothesis is true.

\begin{hypothesis}[Exponential Time Hypothesis \cite{Impagliazzo01}]
	Exponential Time Hypothesis (ETH) states that solving an $h$-variable 3SAT instance needs time $\Omega(2^{\delta h})$ for some $\delta > 0$.
\end{hypothesis}

\begin{theorem}
	If ETH is true then $\pdal[o(n)] \neq \dal[o(n)]$.
\end{theorem}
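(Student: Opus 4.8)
The plan is to separate the two classes by exhibiting a single language that lies in $\pdal[o(n)]$ yet, under ETH, cannot lie in $\dal[o(n)]$. The witness is the padded bounded-variable satisfiability language already shown to be cheap for padded machines in Theorem \ref{the:pdal_polylog}. First I would fix the constant $\delta>0$ guaranteed by ETH, so that no algorithm decides $h$-variable \sat\ in time $o(2^{\delta h})$, and only then choose a parameter $c$ with $c>2/\delta$. Let $L_c$ consist of all satisfiable \sat\ instances on $h=c\log n$ variables, encoded in the nice normal form of Theorems \ref{the:qsat_expdal} and \ref{the:pdal_polylog} and padded by filler symbols so that the total input length is exactly $n$; equivalently, a formula on $h$ variables becomes an input of length $n=2^{h/c}$. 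Since the formula itself has length only $poly(h)=\polylog(n)$, this padding is always possible, and Theorem \ref{the:pdal_polylog} gives $L_c\in\pdal[\polylog(n)]\subseteq\pdal[o(n)]$.

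The core of the argument is the deterministic simulation of $\darm$'s from Subsection \ref{subsec:arm_turing}. A $\darm$ running in $\bigO(g(n))$ steps on an input of length $n$ keeps its register length bounded by $n+\bigO(g(n))$, and by Theorem \ref{the:turing_simulation_deterministic} each step is carried out by a one-tape deterministic Turing machine in time linear in the current register length. Consequently, any language in $\dal[g(n)]$ with $g(n)=o(n)$ is decided by a deterministic Turing machine in time $\bigO(g(n)\cdot n)=o(n^2)$. I would stress that the exponent is the \emph{fixed} constant $2$, independent of $c$; this is exactly what the proof needs, whereas the weaker inclusion $\dal[o(n)]\subseteq\dal[poly(n)]=\p$ from Theorem \ref{the:dal_ptime} would only yield a polynomial of unknown, $c$-dependent degree and would not suffice.

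Now suppose for contradiction that $L_c\in\dal[o(n)]$; then by the simulation above $L_c$ is decided in deterministic time $o(n^2)$. I would turn this into a too-fast \sat\ solver: given an arbitrary formula $\phi$ on $h$ variables, first convert it in $poly(h)$ time into the required normal form (rename variables to $0,\dots,h-1$ and delete duplicate clauses to make it redundance-free, neither of which increases the variable count), then set $n=2^{h/c}$ and write down the corresponding padded input $w\in L_c$ in time $\bigO(n)=\bigO(2^{h/c})$, and finally decide $w\in L_c$ in time $o(n^2)=o(2^{2h/c})$. The total running time is $\bigO(2^{h/c})+o(2^{2h/c})=o(2^{2h/c})$, and since $c>2/\delta$ we have $2/c<\delta$, so $o(2^{2h/c})=o(2^{\delta h})$. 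This is an $o(2^{\delta h})$ algorithm for $h$-variable \sat, contradicting ETH. Hence $L_c\notin\dal[o(n)]$, and combined with $L_c\in\pdal[o(n)]$ this gives $\pdal[o(n)]\neq\dal[o(n)]$.

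The step I expect to be the main obstacle is calibrating the variable-to-length relationship $n=2^{h/c}$ so that three constraints hold at once: the padding must be long enough for the $\parm$ of Theorem \ref{the:pdal_polylog} to generate working space of size $2^h$ (which merely forces the booster polynomial to have degree at least $c$); short enough that $n=o(2^{\delta h/2})$, so that the $o(n^2)$ simulation of a hypothetical $\darm$ undercuts the ETH bound $\Omega(2^{\delta h})$; and compatible with the normal-form encoding of the formula. The genuinely delicate point is the order of quantifiers—ETH supplies $\delta$ first, and only afterwards is $c$ chosen as a function of $\delta$—since this is what lets the fixed exponent $2$ in the $\darm$-to-Turing simulation be beaten. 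Handling floors and ceilings in $h=c\log n$ and the routine cost of the normal-form conversion are minor by comparison.
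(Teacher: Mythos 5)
Your proof is correct and follows essentially the same route as the paper: choose $c$ with $c\delta>2$, use Theorem \ref{the:pdal_polylog} to place $(c\log n)$-variable \sat\ in $\pdal[\polylog(n)]\subseteq\pdal[o(n)]$, and use the linear-time Turing simulation of each automatic step (Theorem \ref{the:turing_simulation_deterministic}) to turn a hypothetical $\dal[o(n)]$ algorithm into an $o(n^2)=o(2^{\delta h})$ decider contradicting ETH. Your write-up is in fact more careful than the paper's about the quantifier order ($\delta$ before $c$), the encoding/padding of the instance, and why the fixed exponent $2$ from the simulation is what makes the argument go through.
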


\begin{proof}
	Let $h = c \log n$ such that $c \delta > 2$. It follows from ETH that solving the $h$-variable $\sat$ needs $\Omega(2^{c \delta \log n})$. As we set $c \delta > 2$ thus the lower bound will be $\Omega(n^2)$. Suppose there is an algorithm in $\darm$ solving it in $\dal[o(n)]$. By Theorem \ref{the:turing_simulation_deterministic}, such algorithm can be translated to a Turing machine algorithm which runs in $o(n^2)$ thus we get a contradiction. It is clear that the problem can be solved in $\pdal[o(n)]$ by Theorem \ref{the:pdal_polylog} therefore $\pdal[o(n)] \neq \dal[o(n)]$. 
\end{proof}

\subsection{Alternating Automatic Register Machine}

\noindent
Alternating Turing machine is formulated by Chandra, Kozen and Stockmeyer \cite{Chandra81} to generalize the concept of nondeterminism. We bring such notion of alternation to $\arm$ called Alternating Automatic Register Machine ($\aarm$) and have defined and investigated the model in another work \cite{Gao21}. We denote $\aarm$ complexity by $\baal$ and its polynomial-size padded version by $\paal$. The important findings of $\aarm$ are the following:

\begin{enumerate}
	\item $\bigcup_k \baal[n^k] = \pspace$.
	\item $\baal[1] = \textit{Regular}$, while there is already an $\np$-complete problem in $\baal[\log^* n]$.
	\item $\ph \subseteq \paal[\log^* n] \subseteq \pspace$.
\end{enumerate}

\section{Conclusion}

This paper introduced and studied several variants of register machines that employ 
automatic functions as primitive steps.  The first, known as a {\em Deterministic Automatic Register Machine} (\darm), is provably more computationally powerful than conventional register
machines.  For example, it was shown that any graph problem     
with $n$ vertices whose solution relies on Breadth-First-Search (BFS)
can be solved by a \darm\ in $\bigO(n\log n)$ time; moreover, the multi-source connectivity 
problem can be solved by a \darm\ in $\bigO(n)$ time.  Allowing non-determinism gives rise to
a more powerful type of register machine known as a 
{\em Non-Deterministic Automatic Register Machine} (\narm).  
It was shown that the class of context-free languages can be recognised
by a \narm\ in linear time, an improvement on the quadratic time 
complexity of this class for \darm s; furthermore, \sat\ can be recognised
by a \narm\ in $\bigO\left(\frac{n}{\log n}\right)$ time when the input formula is given in
a particular format, where $n$ is the length of the input formula. 
Finally, we introduced Automatic Register Machines
that are given more ``working space'' by means of a padding of the input
with a string of $0$'s; a {\em Polynomial-Size Padded Automatic Register Machine}
(\parm) is provided with a polynomial-size padding of the input
while an {\em Exponential-Size Padded Automatic Register Machine} (\exparm) 
is provided with an exponential-size padding.  It turned out that
an \exparm\ is just as powerful as an unbounded \narm\ (unbounded in the
sense that during each computation of an automatic relation, the output can be 
arbitrarily longer than the input), and both types of machines can recognise all 
languages in \pspace\ in polynomial time.

It may be natural to draw parallels between the models of computation
studied in the present work and other automata-based machines.
In this regard, we mention one such class of machines: cellular automata.
One can show that for any function $f$, $\bnal[f(n)]$ is characterised by  
a certain class of non-deterministic cellular-automata with $\bigO(n+f(n))$ cells.
In addition, both $\dal[f(n)]$ and $\bnal[f(n)]$ can be characterised by tiling systems,
though these characterisations are somewhat unnatural.
As such machine comparisons are not the main focus of this work,
we omit the details. 










\end{document}